\newtheorem{theorem}{Theorem}
\newtheorem{lemma}[theorem]{Lemma}
\newtheorem{corollary}[theorem]{Corollary}
\newtheorem{remark}[theorem]{Remark}
\newtheorem{claim}[theorem]{Claim}
\newtheorem{definition}[theorem]{Definition}
\newif\ifFULL
\newcounter{note}[section]
\newcommand{\daogao}[1]{\refstepcounter{note}$\ll${\bf Daogao~\thenote:}
  {\sf \color{green}  #1}$\gg$\marginpar{\tiny\bf DL~\thenote}}
\newcommand{\val}{\mathsf{val}}
\renewcommand{\index}{\Gamma}
\newcommand{\Z}{C_{mv}}
\newcommand{\calF}{\mathcal{F}}
\newcommand{\R}{R}
\newcommand{\ER}{\bar{R}}
\newcommand{\EC}{\bar{C}}
\newcommand{\SC}{C_{sw}} 
\newcommand{\ESC}{\bar{C}_{sw}} 
\newcommand{\MC}{C_{mv}} 
\newcommand{\EMC}{\bar{C}_{mv}} 
\newcommand{\TC}{C_{tot}} 
\newcommand{\ETC}{\bar{C}_{tot}} 
\newcommand{\CSC}{{C}_{sl}} 
\newcommand{\TSPCOST}{C_{ktsp}}
\newcommand{\FMC}{C_{mv}^F}
\newcommand{\Ftotalcost}{C_{tot}^F}
\renewcommand{\P}{\mathbb{P}} 
\newcommand{\Pipref}{\Pi_{pref}}
\newcommand{\D}{\ensuremath{C_{sl}}\xspace}
\newcommand{\p}{\ensuremath{\mathsf{P}}}
\newcommand{\E}{\mathbb{E}}
\newcommand{\F}{\mathcal{F}}
\newcommand{\M}{\mathcal{M}}
\newcommand{\T}{\mathcal{T}}
\newcommand{\PG}{\alpha_{\mathrm{ktsp}}}
\renewcommand{\S}{\ensuremath{\mathcal{S}}\xspace}
\newcommand{\GS}{\ensuremath{\mathbb{S}}\xspace}
\newcommand{\CGame}{\mathcal{G}}
\newcommand{\dist}{d}
\newcommand{\OPT}{\ensuremath{\mathbb{O}}\xspace}
\newcommand{\FOPT}{\ensuremath{\mathbb{O}_{fair}}\xspace}
\newcommand{\SCOPT}{\ensuremath{\mathbb{O}_{ktsp}}\xspace}
\newcommand{\ALG}{\ensuremath{\mathbb{ALG}}\xspace}
\newcommand{\ALGindex}{\ensuremath{\mathsf{ALG_{index}}}\xspace}
\newcommand{\ind}{\ensuremath{\mathbf{1}}}
\newcommand{\eat}[1]{}
\newcommand{\hide}[1]{{\Large \color{red} Contents here are hidden! To reveal contents, remove this command.}}
\newcommand{\G}{\ensuremath{\mathcal{G}}}
\newcommand{\ALGBO}{\ensuremath{\ALG_{\textsf{Bicrit-Orient}}}\xspace}
\newcommand{\METAfindone}{\ensuremath{\mathsf{BudgetMG\text{-}Unit}}\xspace}
\newcommand{\ALGSfindK}{\ensuremath{\mathsf{MG\text{-}Metric}}\xspace}
\newcommand{\METASfindK}{\ensuremath{\mathsf{BudgetMG\text{-}Metric}}\xspace}
\newcommand{\ALGmeta}{\ensuremath{\mathsf{BudgetMG}}\xspace}
\newcommand{\ALGgeneral}{\ensuremath{\mathsf{Algo\text{-}MG}}\xspace}
\newcommand{\depot}{\ensuremath{\mathsf{root}}\xspace}
\newcommand{\KTSP}{\ensuremath{\textsf{$k$-TSP}}\xspace}
\newcommand{\ALGkTSP}{\ensuremath{\ALG_{\KTSP}}\xspace}
\newcommand{\StochKTSP}{\ensuremath{\textsf{Stochastic Reward $k$-TSP}}\xspace}
\newcommand{\StochKCost}{\ensuremath{\textsf{Stochastic- $k$-TSP}}\xspace}
\newcommand{\poly}{\ensuremath{\mathrm{poly}}\xspace}
\newcommand{\ALGStochKCost}{\ensuremath{\mathsf{\ALG_{\textsf{ktsp}}}}\xspace}
\newcommand{\per}{\mathrm{per}}
\newcommand{\IGNORE}[1]{}
\newcommand{\FindOne}{\textsf{MG-Unit}\xspace}
\newcommand{\SFindK}{\textsf{MG-Metric}\xspace}
\newcommand{\SFindKF}{\textsf{MG-Metric-Fair}\xspace}
\newcommand{\MS}{Markov system\xspace}
\newcommand{\ALGGITTINS}{\mathbb{GT}}
\newenvironment{proofof}[1]{\smallskip\noindent{\bf Proof of #1.}}%
        {\hspace*{\fill}$\Box$\par}
\def \ifArxiv
\author{
Jian Li
\thanks{Institute for Interdisciplinary Information Sciences, Tsinghua University. Email:\texttt{lijian83@mail.tsinghua.edu.cn.}}
\and
Daogao Liu
\thanks{Paul G. Allen School of Computer Science \& Engineering, University of Washington. 
Email:  \texttt{dgliu@cs.washington.edu.}}
}
\title{Multi-token Markov Game with Switching Costs}
\author{Anonymous Author(s)}
\date{}
\begin{document}
\maketitle
\begin{abstract}
We study a general Markov game with metric switching costs: 
in each round, the player adaptively chooses one of several Markov chains to advance with the objective of minimizing the expected cost for at least $k$ chains to reach their target states. If the player decides to play a different chain, an additional switching cost is incurred. The special case in which there is no switching cost was solved optimally by Dumitriu, Tetali and Winkler~\cite{DTW03} by a variant of 
the celebrated Gittins Index for the classical multi-armed bandit (MAB) problem with Markovian rewards \cite{Git74,Git79}.
However, for Markovian multi-armed bandit with nontrivial switching cost, even if the switching cost is a constant, the classic paper by Banks and Sundaram \cite{BS94} showed that no index strategy can be optimal.
\footnote{
Their proof is for the discounted version of MAB, but can be extended to our setting. See Appendix~\ref{sec:no_opt_index} for the details.
}

In this paper, we complement their result and show there is a simple index strategy that achieves a constant approximation factor if the switching cost is constant and $k=1$. 
To the best of our knowledge, this index strategy is the first strategy that achieves a constant approximation factor for a general Markovian MAB variant with switching costs. For the general metric, we propose a more involved
constant-factor approximation algorithm, via a nontrivial reduction to the stochastic $k$-TSP problem, in which a Markov chain is approximated by a random variable.
Our analysis makes extensive use of various interesting properties of the Gittins index.


\eat{
The classic $k$-TSP problem is given a metric $(V,D)$ with a root $R\in V$, to find a tour originating at $r$ with the minimum length and visits at least $k$ nodes in $V$. Motivated by applications where the input is uncertain, Jiang-Li-Liu-Singla\cite{JLLS19} consider two separate stochastic versions of $k$-TSP with corresponding $O(1)$-approximation algorithms: In \StochKTSP, which was originally proposed by Ene-Nagarajan-Saket~\cite{ENS17} with $O(\log k)$-approximation adaptive algorithm, each node $v\in V$ has a random reward $R_v$. The objective is to find a tour with minimum expected length while collecting at least $k$ rewards. In \StochKCost, each node $v$ has a random cost, the objective is to visit and select $k$ nodes and minimize the total expected cost (length of the tour plus the cost of selected nodes). One question is how to combine these two separate interesting questions.
}

\end{abstract}
\thispagestyle{empty}
\newpage

\setcounter{page}{1}

\section{Introduction}
The Markovian multi-armed bandit (MAB) problem is one of the most important and well studied sequential decision problem. In this problem, at each time step,
the agent knows the state of each chain and must choose to play one of $n$ available Markov chains. The agent pays a certain cost (or receives a  payoff) depending on the current state, and the chosen chain advances to the next state (according the Markovian transition rules).
The goal of the agent is to optimize the expected cost (or payoff) by choosing the right sequence of actions. 
The infinite horizon discounted version of the problem was solved optimally by the celebrated Gittins Index theorem, first proved by Gittins and Jones\cite{Git74}. In particular, they show that the Markov chains can be ``indexed'' separately, and the optimal strategy is simply choosing the chain with the smallest (or largest) index. Since then, Gittins Index has been studied and extent in a variety of ways
(see \cite{GGW11}).

A major extension to MAB is the inclusion of switching costs, that is switching to a different chain incurs a nontrivial cost
\cite{BS94,AT96,KSU08,KLM17,CGT+20}.
This extension has found many applications in job search and labor mobility \cite{J78,V80,M80,M84,W84,J84,KW11}, industrial policy \cite{PT95,K04}, optimal search \cite{Wei79,BB88,BGO92,S99}, experiment and learning \cite{R74,M84,K03,AKW04} and game theory \cite{S97,BV06}.
The most natural problem for MAB in the presence of switching costs
is to examine the extend to which the Gittins-Jones theorem remains valid,
i.e., whether there is a suitably defined index strategy that is optimal. This problem was first studied in the classic paper by Banks and Sundaram \cite{BS94}, who showed that there is no index strategy that is optimal, even if the switching cost is a given nonzero constant.
Motivated by this work, several authors \cite{BGO92,AT96,VOP00,KL00,BV01,J01,DH03} attempted to (partially) characterize the optimal policy and present optimal solutions for several special cases. In fact, MAB with switching costs is a special case of the restless bandit problem introduced by Whittle \cite{W88}:
the state of the arm just abandoned changes its state to a ``dummy copy"
state which requires a switching cost if it is to be played.
However, the restless bandit problem is known to be PSPACE-Hard, even to approximate to any non-trivial factor \cite{PT94}.
See the survey \cite{J04}.

We approach the problem from the perspective of approximation algorithms and focus on a finite-time version of MAB, called {\em multi-token Markov game}, introduced 
in an elegant paper by Dumitriu, Tetali and Winkler~\cite{DTW03}. 
In this game, we are also given $n$ Markov chains and each chain has a target which is ultimately reachable.
Each state is associated with a movement cost.
In each time step, the player adaptively chooses one Markov chain to advance with the objective of minimizing the expected total cost for at least one chain to reach its target state.
If there is no switching cost, they show that there is an optimal indexing strategy based on a variant of Gittins index. 
Even if the switching cost is a given nonzero constant,
by a similar argument in \cite{BS94}, one can show that there is no indexing strategy that is optimal (see Appendix~\ref{sec:no_opt_index}).
Hence, in this paper, we study approximation algorithms for 
the multi-token Markov game with switching costs.

\subsection{Problem Definitions and Our Contributions}

We formally define our problem as follows.
We mainly follow the terminology used in ~\cite{DTW03}.
We first introduce the notions for a \MS, which is simply a Markov chain
with (state) movement cost.

\begin{definition}[\textsf{Markov System}~\cite{DTW03}]
A \MS is a tuple $\S=\langle V,P,C,s,t \rangle$, where $V$ is the finite set of states, $P=\{P_{u,v} \}$ is the corresponding transition matrix (a $|V|\times |V|$ matrix), $C=\{C_u\}$ denotes a positive real movement cost for each state $u\in V$, and $s$ (resp. $t$) represents the current (resp. target) state.
We assume that the target is ultimately reachable from every state in $V$,
and we can never exit the target state (so we can set $C_t=0$ and $P_{t,t}=1$). 
If we play $\S$ in state $u\in V$, a cost of $C_u \geq 0$ is incurred and $\S$ transitions from state $u$ to state $v$ with probability $P_{u,v}$.
There is a {\em unit} reward on the target state $t$ that we can collect.
\end{definition}

In the following, we do not distinguish Markov systems and Markov chains,
and use both terms interchangeably.

\subsubsection{Unit Switching Cost}
Now we define our first problem, {\em multi-token Markov game with unit switching cost} (\FindOne).
In this problem, we have a set of (possibly different) \MS and switching from one \MS to another \MS incurs a {\em unit} cost. Our goal is to find a strategy that adaptively chooses the next \MS to play until a unit of reward is collected, and the expected total cost (switching cost plus movement cost) is minimized. Formally, the problem 
is defined as follows.

\begin{definition}[\FindOne]
\label{def:game_findone}
We are given a metric space $\M=(\mathbf{S}\cup \{\mathbf{R}\}, \dist)$ endowed with unit metric (i.e., $\dist(\S,\S')=1$ for any $\S\neq \S'\in \mathbf{S}\cup \{\mathbf{R}\}$).
Each node $\S_i\in \mathbf{S}$ is identified with a \MS 
$\S_i=\langle V_i,P_i,C_i,s_i,t_i \rangle$.
$\mathbf{R}$ is the root node (the initial position) with a unit cost directed edge to every $\S_i$. If we play \MS $\S_i$ in one round and decide to play another \MS $\S_j$ in the next round, we need pay a unit switching cost $\dist(\S_i,\S_j)=1$ in addition to the movement cost.
The game ends when we succeed to make one \MS reach its target state.
\end{definition}

For \FindOne, we provide a simple index strategy
that has a constant approximation ratio.
Here, following the the definition in \cite{BS94} (See also Definition~\ref{def:index} in the appendix), an index strategy means that we can define a suitable index (a real number) for each state
of the Markov chains, and the strategy always chooses to play the Markov chain in which the current state has the minimum index.
%

\begin{restatable}{theorem}{FindoneMS}
\label{thm:Find1MS}
There is a simple index strategy 
that can achieve a constant approximation ratio 
for the \FindOne problem.
\end{restatable}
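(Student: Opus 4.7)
The plan is to design a Gittins-style index per state and execute a ``prefix-at-a-time'' strategy based on it. Concretely, for each state $u$ of Markov system $\S_i$ I would define
\[
\gamma_i(u) \;=\; \min_{\tau \geq 1} \frac{\E_u[\text{movement cost before }\tau]}{\Pr_u[\text{target hit before }\tau]},
\]
where the minimum is over stopping times adapted to chain $i$ alone; let $\tau_i^\star(u)$ attain it. The algorithm then proceeds in epochs: at the start of each epoch, choose the system $i^\star$ whose current state $u^\star$ minimizes $\gamma_{i^\star}(u^\star)$, commit to playing $\S_{i^\star}$ for up to $\tau_{i^\star}^\star(u^\star)$ steps, and if the target is still not reached, re-compute indices and repeat. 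Committing to a whole prefix between switches is the key device that keeps the unit switching cost tractable, since each epoch contributes at most one switch.

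For the lower bound, I would decompose any strategy---including the optimum $\OPT$---into maximal contiguous runs on each system. Each such run is itself a valid stopping time on the chain, so by definition of $\gamma_i(u)$ its expected movement cost is at least $\gamma_i(u)$ times the probability it hits the target. Summing over runs shows that $\OPT$'s expected movement cost is bounded below by a convex combination of indices weighted by per-run success probabilities, with the probabilities summing to one (the game must eventually succeed). The switching cost of $\OPT$ contributes at most one unit per run and folds into the same combination.

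For the upper bound, per epoch the algorithm incurs at most $1$ switching cost plus an expected movement cost equal to the numerator of $\gamma_{i^\star}(u^\star)$, i.e., $\gamma_{i^\star}(u^\star)\cdot p^\star$, where $p^\star := \Pr_{u^\star}[\,\text{target hit within } \tau_{i^\star}^\star\,]$ is the epoch's success probability. Treating epochs as a sequence of Bernoulli trials with success probabilities $p^\star_1,p^\star_2,\ldots$, a Wald-type argument bounds the total expected cost by a constant times the lower-bound expression derived above.

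The main obstacle I foresee is controlling how the minimum index evolves across epochs when the chosen prefix fails: after a failure, the post-prefix state $u'$ may have a different (typically larger) index $\gamma_i(u')$, and the global minimum over all chains may jump. The analogous difficulty without switching costs is resolved by the DTW ``prevailing charge'' machinery, which shows that the sequence of indices triggered along a single chain forms a nondecreasing process whose aggregated cost matches the chain's Gittins commitment; I plan to port that machinery and show it loses only a constant factor under unit switching cost. A secondary subtlety is that chains with very small index may have prefixes whose expected movement cost is much below $1$, so the switching cost dominates; a short case split based on the size of $p^\star$, or equivalently ``pre-paying'' one unit of switching by adding it to each chain's initial-state cost, should handle this degenerate regime cleanly.
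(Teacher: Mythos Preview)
Your proposal has the right ingredients but leaves the central difficulty unresolved, and the paper's approach is genuinely different from yours.

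First, on the index: baking the unit switching cost into the index (what you call ``pre-paying'') is not a fix for a degenerate regime---it is the definition the paper uses throughout. The paper creates a dummy state $u'$ with cost $1$ and deterministic transition to $u$, and sets the index of an \emph{inactive} system at state $u$ to be the grade $\gamma_{u'}$ of that dummy state; an active system keeps grade $\gamma_u$. Your index $\gamma_i(u)$ is the ordinary grade and does not see the switching cost, so your lower- and upper-bound expressions are not about the same quantity. This matters exactly when $p^\star$ is small and the switching cost dominates, which you flagged but did not integrate.

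Second, and more seriously, your Wald-type upper bound does not close. After a failed epoch, the minimum (dummy) grade over all systems weakly increases, so ALG's per-epoch cost $1+\gamma_j p_j$ is governed by a \emph{nondecreasing} sequence $\gamma_1\le\gamma_2\le\cdots$, while your lower bound on $\OPT$ is a convex combination of grades along \emph{OPT's} run decomposition. These two index sequences need not be comparable: OPT may spend most of its probability mass on runs with small grade, whereas ALG may be forced (after failures) to pay for large grades. The DTW prevailing-charge machinery equates cost and prevailing grade only \emph{along a single trajectory with no switching cost}; it does not give you a trajectory-to-trajectory comparison between ALG and OPT once a switching penalty breaks the optimality of the greedy rule. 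Simply ``porting'' it will not bridge this gap.

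The paper sidesteps the direct comparison entirely. It runs the same greedy index strategy but analyzes it through a \emph{doubling/budget} framework: in phase $i$ it allots separate movement and switching budgets of order $\beta^i$, and the whole proof reduces to a budgeted subproblem (Lemma~\ref{lem:key_sub_process}): if $B\ge 10\,\E[\TC(\OPT)]$, the greedy rule with budget $2^8 B$ succeeds with probability at least $1/20$. The proof of that (Lemma~\ref{lm:precondition}) is a case analysis over how the budget is exhausted, hinging on two facts absent from your sketch: (i) a sandwich on dummy grades, $\Gamma_{i_j}\le \frac{\E[C_j+1\mid\cdot]}{\E[X_j\mid\cdot]}\le \Gamma_{i_{j+1}}$ (Claim~\ref{clm:bound_grade}), and (ii) a clean ``all-grades-high $\Rightarrow$ OPT is expensive'' statement: if every (dummy) grade is at least $\zeta B$, then \emph{any} strategy with total budget $B$ succeeds with probability at most $1/\zeta$ (Claim~\ref{clm:large_index_low_reward}). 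Together these let the paper argue that on the bad event ``ALG fails and $\sum_j \E[X_j\mid\mathcal{F}_{j-1}]<2$'', the residual instance has uniformly large grades, forcing $\E[\TC(\OPT')]$ to be large and hence the bad event to have small probability; a Freedman-type bound handles the complementary event. None of this is a Wald/convex-combination comparison.

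So: your algorithm (with the dummy-state index) coincides with the paper's, but your analysis plan does not yet contain the mechanism that controls the drift of the minimum index. If you want to salvage a direct argument, you would need an analogue of Claim~\ref{clm:large_index_low_reward} and a way to convert ``ALG's cumulative conditional success probability is small'' into ``all remaining dummy grades are large relative to $\OPT$''---which is essentially what the paper does inside its budgeted phase.
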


\noindent
{\bf Our technique:}
In particular, for each state $u$, we create a {\em dummy state} $u'$ 
that connects to $u$ with movement cost $1$.
This captures the unit switching cost.
The index $\index_i$ for $\S_i$ (at state $u$)
is the Gittins index $\gamma_u$ if $\S_i$ is active, and the Gittins index of $u$'s dummy state $u'$ otherwise. 
Our strategy is to simply choose to play the $\S_i$ with the smallest index $\index_i$. 

Although the strategy is extremely simple to state, it is difficult to analyze it directly. Instead, we analyze an alternative strategy, via the doubling framework developed in recent works \cite{ENS18,JLLS20} (Section~\ref{subsec:doubling}). In the doubling framework, we proceeds in phases and in each phase there is an exponentially increasing cost budget. Under this framework, it suffices to show the following guarantee for the budgeted sub-problem: our strategy can succeed
(i.e., collect one unit of reward from some Markov chain) 
with constant probability, under constraint that the total (movement plus switching) cost is below the given budget $B$, which is constant times the optimal cost of the original problem (Lemma~\ref{lem:key_sub_process}). Solving the budgeted sub-problem this is the key technical challenge.

In order to show we can succeed with constant probability under the budget constraint, we consider the set $\Omega_{bad}$ of trajectories in which we fail (do not reach the target).  A trajectory can be naturally partitioned into segments, each being a trajectory in one Markov chain and the switching cost we pay is the number of the segments. Since the cost budget is exhausted without success,
one can show that there is one segment (corresponding to one chain) that the expected cost is large but the success probability is small (conditioning on the former segments). From this, one can argue the grade (Gittins index) of the current chain is much larger than $B$, and the grades of all chains are also at least no smaller than it (due to the greedy rule). By the definition of the grade, it can be roughly understood as the expected movement cost one can hope for reaching the target in this chain. Hence, one can see that 
conditioning on $\Omega_{bad}$, the expected movement cost to reach a target for any strategy is much larger than $B$ (no chain is cheap). But the expected total cost of the optimal strategy is much less than $B$ (recall $B$ is a large constant times OPT). Therefore, one can conclude that the probability of $\Omega_{bad}$ is small.

\subsubsection{General Metric Switching Cost}

Next, we generalize \FindOne to more general metric (where the switching costs form a metric without the restriction to be unit) and more general requirement that we need to collect $K$ units of rewards for any positive integer $K$. We name the new problem \SFindK.

\begin{definition}[\SFindK]
\label{def:SFindK}
We are given a finite metric space $\M=(\mathbf{S}\cup\{\mathbf{R}\},\dist)$ (there is no additional assumption on metric $\dist$). 
Each node $\S_i\in \mathbf{S}$ is identified with a \MS 
$\S_i=\langle V_i,P_i,C_i,s_i,t_i \rangle$. 
Similarly, at the beginning of the game, the player is at the root $\mathbf{R}$, and needs to pay the switching cost $\dist(\mathbf{R},\S_i)$ if he wants to play \MS $\S_i$. 
Switching from $\S_i$ to $\S_j$ incurs a switching cost of  $\dist(\S_i,\S_j)$.
The objective is to adaptively collect at least $K$ units of rewards (making at least $K$ \MS reach their targets), while minimizing the expected total cost (movement cost plus switching cost).
The game ends when we succeed to make $K$ \MS reach their target states.
\end{definition}

\begin{restatable}{theorem}{FindKMS}
\label{thm:GFindkMS_const}
There is a constant factor approximation algorithm  
for the \SFindK problem.
\end{restatable}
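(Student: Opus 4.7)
The plan is to reduce \SFindK to the \StochKTSP problem of \cite{JLLS19}, for which a constant-factor approximation is already known, and then translate the stochastic $k$-TSP solution back into a strategy for the Markov game. Concretely, for each Markov system $\S_i$, I would associate a single random variable $X_i$ that abstracts ``the relevant cost of $\S_i$ in an optimal play,'' derived from the Gittins-index analysis used for \FindOne. A natural candidate for $X_i$ is the random total movement cost accumulated while running $\S_i$ under its Gittins-index stopping rule from its initial state $s_i$ until either the target $t_i$ is reached or the policy exits a ``cheap'' region; the reward is the indicator of reaching the target. This turns the input into a \StochKTSP instance on the same metric $\M$, where the tour pays $\dist$ costs for switching and the node visits pay the realized $X_i$.

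I would proceed in four steps. First, formalize the reduction: show that an adaptive strategy $\sigma$ for \SFindK induces a (possibly adaptive) tour on $\M$ whose expected length plus expected accumulated node-cost are $\Theta(1)$ times the expected total cost of $\sigma$. This is the lower-bound direction, giving $\OPT_{\textsf{ktsp}} \le O(1)\cdot \OPT_{\SFindK}$. The key claim here is that any strategy can be modified, losing only a constant factor, so that on each chain it plays only within a Gittins-prefix (i.e., never continues past the point where the Gittins index has exceeded a suitable threshold); this is exactly the regime in which the random variable $X_i$ faithfully represents the chain's cost. The justification mirrors the Gittins-index arguments used in \FindOne, where one shows that beyond the current grade the chain is no longer cheap.

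Second, invoke the $O(1)$-approximation algorithm for \StochKTSP from \cite{JLLS19} on the constructed instance to obtain an adaptive tour $\tau$ of expected cost $O(\OPT_{\textsf{ktsp}})$ that collects at least $K$ rewards. Third, convert $\tau$ back into a strategy for \SFindK: whenever the tour ``visits'' node $\S_i$, the Markov-game strategy travels to $\S_i$ (paying the metric switching cost), plays $\S_i$ under its Gittins-index stopping rule, and either collects a target (success) or abandons. By coupling the trajectories, the expected movement cost plus switching cost of this strategy is within a constant factor of the cost of $\tau$. Combining the two directions yields the claimed constant approximation. As in \FindOne, I would layer this on top of the doubling framework of \cite{ENS18,JLLS20}, so it suffices to solve the budgeted sub-problem of collecting one (or a constant fraction of $K$) rewards with constant probability given a budget $O(\OPT)$.

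The main obstacle is the lower-bound reduction: a strategy for \SFindK may interleave visits to chains in arbitrarily complex ways, pausing a chain and returning to it later, so a single random variable $X_i$ does not literally capture its behavior. Overcoming this requires showing, via the Gittins-index exchange arguments, that we may restrict attention to strategies in which each visit to $\S_i$ runs a prefix of the Gittins-index policy, and that the total contribution of $\S_i$ across all its visits is dominated (in distribution, up to constants) by a single draw of $X_i$. Establishing this domination while respecting the adaptive conditioning of the game is the delicate technical core of the proof; once it is in place, the reduction to \StochKTSP and the use of the known $O(1)$-approximation complete the argument.
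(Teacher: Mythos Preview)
Your high-level plan (doubling framework, reduce each $\S_i$ to a single random variable, invoke a constant-factor stochastic $k$-TSP algorithm, then translate back) matches the paper's architecture. But two concrete choices diverge from the paper in ways that leave real gaps.

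First, the random variable. You propose $X_i$ to be the realized movement cost under a Gittins stopping rule, together with a success indicator. The paper instead uses $\D(\S_i)$, the \emph{prevailing cost} (the maximum grade seen along the trajectory of a never-quitting player). This choice is not cosmetic: the prevailing cost is exactly the profit in the teasing game $\S^T$, and the fairness of $\S^T$ (Lemma~\ref{lm:fairness}) gives the one-line lower bound $\E[\FMC(\OPT)]\le\E[\MC(\OPT)]$, hence $\E[\TSPCOST(\SCOPT)]\le\E[\TC(\OPT)]$ (Claims~\ref{clm:relation_two_games} and~\ref{clm:games_reduction}). Your $X_i$ has no such fairness identity, and your ``main obstacle'' paragraph is essentially trying to reinvent this bound by stochastic domination across interleaved visits; that route is much harder and you do not actually carry it out. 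Also, your $X_i$ comes with a random success indicator, so the target problem is neither \StochKCost (deterministic reward, random cost) nor Stochastic Reward $k$-TSP (random reward, deterministic cost) as studied in \cite{JLLS20}; you would need a hybrid, for which no off-the-shelf $O(1)$ result is cited.

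Second, the conversion back. You write ``obtain an adaptive tour $\tau$'' and then simulate it by playing $\S_i$ under its Gittins rule whenever $\tau$ visits $i$. But an adaptive $k$-TSP tour branches on the \emph{realized} $X_i$, which in the Markov game you only learn after committing the movement cost; there is no coupling that lets you follow $\tau$'s branches. The paper avoids this by exploiting that the algorithm of \cite{JLLS20} is \emph{non-adaptive}: it outputs a fixed ordering $\Pi$. The paper then takes a prefix $\Pipref$ with bounded switching cost and makes a single left-to-right pass, never revisiting a chain. The subtle part you are missing entirely is the stopping rule within each chain: the paper computes a threshold $\gamma_{j+1}$ from the $K$-th order statistic of $\{\D(\S)\}_{\S\in\Pipref}$ and abandons $\S_i$ once its grade exceeds $\gamma_{j+1}$ (or a per-chain movement budget is hit). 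Bounding the expected movement cost of this one-pass strategy is done by comparing it to the Gittins-index play $\ALGGITTINS$ restricted to $\Pipref$ (Claims~\ref{clm:guarantee_on_OPTCB} and~\ref{clm:high_pro_cost}, Lemma~\ref{lem:bounded_expected_cost}); this comparison, not a direct coupling to the $k$-TSP tour, is what closes the argument.
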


\noindent
{\bf Our technique:}
For \SFindK, we also adopt the doubling framework, hence
only need to design an algorithm
$\METASfindK$ (Algorithm~\ref{alg:meta_SfindK}) for the budgeted sub-problem.
$\METASfindK$ should succeed with constant probability using a budget $B$ when $B$ is a constant factor of the optimal cost.
At a high level,
$\METASfindK$ first transform the problem to a \StochKCost instance $\M_{ktsp}$
by reducing each Markov chain to a related random variable.
Then it applies the {\em non-adaptive} constant factor approximation 
for \StochKCost (developed in \cite{JLLS20}) to obtain an ordering
$\Pi$ of vertices (chains).
We pick a prefix $\Pipref$ such that the switching cost for traversing $\Pipref$ is no larger than a small constant
proportion of the budget. One can show it is possible to collect $K$ units of rewards from $\Pipref$ such that the total movement cost is within the budget with constant probability.

Now, the key is show how to collect $K$ units of rewards from $\Pipref$ such that the movement plus switching cost is within the budget with constant probability. Obviously, ignoring the switching cost, the optimal way (optimal in terms of movement cost) of collecting $K$ units of rewards from $\Pipref$ is to play Gittins index. However, such play may switch back and forth frequently and hence leads to a high switching cost.
To keep the switching cost under control, we insist visiting the chain
in $\Pipref$ one by one and never revisit any chain (hence the switching cost is small). 
However, one may not want to play a chain to the end since the current state is not economical to play and switching to the next chain is a better option. Now the Gittins index comes into rescue. 
We show that there is an interesting threshold $\gamma_{j+1}$ (which is computed from the $K$-th order statistics of suitably defined random variables), such that if the Gittins index of the current state is larger than the threshold, we can give up and decide to switch to the next chain on the $\Pipref$. It turns out such a sequential algorithm (without switching back and forth) can also succeed with constant probability without incurring a much larger movement cost than keeping playing the chain with the smallest Gittins index.

\section{Related Work}
The original paper by  \cite{DTW03} only deals with the $K=1$ case (i.e., one chain reaches its target). 
Recent works \cite{KWW16,GJSS19} observe that it is not difficult to extend their argument to general positive integer $K$ without switching costs.
\cite{KWW16,Sin18} studied the problem 
under richer combinatorial constraints. 
\cite{GJSS19} study a more general problem:
there is a given packing or covering constraint $\calF\subseteq 2^{[n]}$ (e.g., matroid, matching, knapsack) of subsets of chains.
The goal is to make a subset $S$ of chains to reach their targets ($S\in \calF$), while minimizing the dis-utility (with upward-closed constraint) or maximize the utility (with downward-closed constraint).
For semi-additive objective function, they proposed a general reduction which utilizes the ``greedy" algorithms for the problem with full information and they can achieve the same approximation ratio as the "greedy" algorithm does for the full information problem.

Guha and Munagala \cite{GM09} considered two MAX-SNP problems for bandits  with switching costs: {\sl future utilization} and {\sl past utilization}. 
In their problems, each state has a reward and the rewards satisfies the {\em martingale property} (motivated by Bayesian considerations, see their paper for the details). Given two budgets for movement and switching, the future utilization problem aims to make the final reward of the finally chosen chain as large as possible, while the goal of the past utilization is to make the summation of rewards as large as possible. They provided $O(1)$-approximation algorithms for both problems. 
They approached the problem from linear programming with Lagrangian relaxation. Their problems are very different from our problems and it is unclear how to apply their technique to our problems neither.

Our problem is also related to some problems in the stochastic probing 
literature, in particular the classical Pandora's Problem defined in \cite{Wei79}. Suppose there are $n$ closed boxes with independent random rewards (with known distributions). The cost to open box $i$ is $c_i$.
When we open a box, the reward of the box is realized. 
At each time step, we need to decide either to pay some cost to open a new box, or stop and take the box with the maximum rewards.
The goal is to maximize the expected reward minus the opening cost.
Weitzman \cite{Wei79} provided an optimal indexing strategy to this problem.
In fact, one can show the problem is a special case of the Markov Game \cite{DTW03} and Weitzman's index can also be seen as a variant of the Gittin's Index. Recently, the problem has been extent in various ways
(see e.g., \cite{KWW16,Dov18,BK19}). 



Our problem is a stochastic combinatorial optimization problem.
Designing poly-time algorithms for those problems with provable approximation
guarantee has attracted significant attention in recent years 
(see e.g., the survey~\cite{li2016approximation}).
In this paper, we leverage the constant factor approximation algorithm 
for stochastic $k$-TSP \cite{JLLS20} (formally defined later), which is closely related to 
the stochastic knapsack and stochastic orienteering problems. 
In stochastic knapsack, we are given a set of items with random size and profit and a knapsack with fixed capacity. We can adaptively place the items in the knapsack irrevocably, such that the expected total profit is maximized. A variant of the stochastic knapsack has been shown to be PSPACE-hard\cite{DGV04}, and several constant factor approximation algorithms have been developed \cite{DGV04,DGV08,BGK11,GKNR12,LY13}.
Stochastic orienteering \cite{GKNR12} is a generalization of stochastic knapsack, in which there are metric switching costs between different items.
If the total cost is restricted to be no more than $B$, Gupta et al. \cite{GKNR12} provided an $O(\log \log B)$ upper bound of the adaptivity gap, and  Bansal and Nagarajan \cite{BN15} showed a lower bound of $\Omega((\log\log B)^{1/2})$ even when all profits are deterministic.

In online learning literature, there is also a body of work 
\cite{KSU08,Ort08,ADT12,DDKP14,KLM17} studying multi-armed bandit (MAB) with switching costs. However, here playing each arm provides i.i.d. reward, but the underlying distribution is not known. The objective is minimizing the regret.
The challenges and techniques in these settings are completely different.

\section{Preliminaries}
\label{sec:prel}
In this section, we first review the notion of {\em grade} (a variant of Gittins index)
introduced in \cite{DTW03}, then the {\em doubling technique} used in some previous stochastic optimization problems \cite{ENS18,JLLS20}. 
The analysis requires some well known concentration inequalities such as Chernoff Bound and Freedman's Inequality, which are presented in Appendix~\ref{sec:concentration}.
We define some notations which will be used throughout this paper.

\textbf{Notation:} For any (possibly adaptive) strategy $\P$, let $\R(\P)$ be the (random) number of units of rewards $\P$ can collect (i.e. the (random) number of \MS that reach target states). 
Let $\SC(\P)$ and $\MC(\P)$ be the (random) switching cost and movement cost respectively, and let $\TC(\P)=\SC(\P)+\MC(\P)$ be the total cost of the strategy $\P$.
We say a strategy $\P$ with movement budget (resp. switching budget) $B$ means that $\P$ can pay at most $B$ to advance the \MS (resp. to switch between different systems), 
and say a strategy $\P$ with budget $B$ means that if its movement cost plus switching cost is restricted to be at most $B$.  
Similarly, we say a strategy $\P$ with (movement/switching) budget $B$ in expectation means that the expectation of $(\MC(\P)/\SC(\P))$ $\TC(\P)$ is at most $B$.
\subsection{Grade}
\label{sec:gittins_index}

Give a finite Markov game defined in Dumitriu et al. \cite{DTW03}, we can define the {\em grade} of each state in each \MS. Grade is a slight variant of the original Gittins index defined for the infinite discounted game \cite{Git74}, particularly defined in a very similar way to Weber's prevailing charge \cite{Web92,frostig1999four}. In particular, the grade of state $u$ in \MS $\S$ depends only on $\S$, not other \MS.

\noindent
{\bf A New Game $\S_u(g)$:}
Consider a \MS $\S=\langle V,P,C,s,t \rangle$.
Given a non-negative real number $g\in\mathbb{R}_{\geq0}$ and
the initial state $u$ in $\S$, we define a new game $\S_u(g)$:
For each step, we can quit and end the game, {\bf or} pay the movement cost to advance \S for one step. 
If we reach the target state $t$, we can get $g$ units of profits 
\footnote{We use the term {\em profit} here to distinguish from the term {\em reward} 
(recall we get a unit of reward by reaching a target state).} 
and the game halts. 
The objective is to maximize the objective ``$\mathrm{profit}-\mathrm{cost}$'' in expectation. 
We use $\val(\S_u(g))$ to denote the value of this game, that is 
the expected objective achieved by the optimal strategy,
which we denote by $\OPT(\S_u(g))$. 
Let $R(\OPT(\S_u(g))$ is the (random) indicator that the strategy $\OPT(\S_u(g)$ reaches the target
(e.g., if we quit the game before reaching the target, $R=0$), and 
hence $\E[R(\OPT(\S_u(g)))]$ is the probability that we reach the target state using strategy $\OPT(\S_u(g))$. 
It is easy to see 
$$
\val(\S_u(g))=g \cdot\E[R(\OPT(\S_u(g))]- \E[\MC(\OPT(\S_u(g)))]\geq 0
$$ 
for any $g\in \mathbb{R}_{\geq 0}$, as the strategy can always quit at the very beginning.

\noindent
{\bf Grade $\gamma_u(\S)$:}
In particular, we define the {\em grade} $\gamma_u(\S)$ of state $u$ 
in \MS $\S$ as the unique value of $g$ for which an optimal player is indifferent between the two possible first moves in the game $\S_u(g)$, i.e. he can either play $\S$ for the first step or quit at the very beginning.
We also use $\gamma_u$ and $\S(g)$ as a shorthand for $\gamma_u(\S)$ and $\S_u(g)$ respectively when $\S$ and its current state are clear from the context.

To gain a bit more intuition about the grade, 
consider a pure strategy $\ALG$ for the game $\S(g)$.
Note that a pure strategy can be defined by a subset $Q\subset V$ of states: the player chooses to play $\S$, until either target $t$ is reached, or a state in $Q$ is reached and the player chooses to quit, which ends the game immediately.
Let event $\Omega$ be that the token in $\S$ reaches $t$
and let $C_{\S}$ be the (random) movement cost $\ALG$ spends on $\S$.
It is easy to see that 
\begin{gather*}
    \E[g\cdot R(\ALG)-\MC(\ALG)]=g\cdot \Pr[ \Omega]-\E[C_{\S}\mid \Omega]\Pr[\Omega]-\E[C_{\S}\mid \neg \Omega]\Pr[\neg \Omega].
\end{gather*}
One can see it is linear in $g$ for fixed set $Q$.
Hence, the value of the game $\val(\S(g))$
(as a function of $g$) is the maximum of a set of linear functions and is therefore a piece-wise linear convex function in $g$
(See Figure~\ref{fig:gittins_index}).
When $g$ is very small, the optimal strategy should choose to quit immediately, and both the cost and the profit are zero. 
When $g$ is very large, the optimal strategy should never quit before reaching the target.
Hence, as we increase $g$ gradually from $0$, 
there is a point at which we are indifferent between playing $\S$
and quitting immediately, which is the value of the grade for state $u$.
We set $\gamma_t=0$ for the target state $t$.
Readers can refer to \cite{DTW03} for more detailed discussion. 
We can also define a grade for a \MS,
that is the grade of its current state $u$, i.e., $\gamma_u(\S)$.
As shown in \cite{DTW03}, grades can be computed in poly-time (see Section 7 in \cite{DTW03}).


\begin{figure}[h]
\centering
\includegraphics[width =0.5\textwidth]{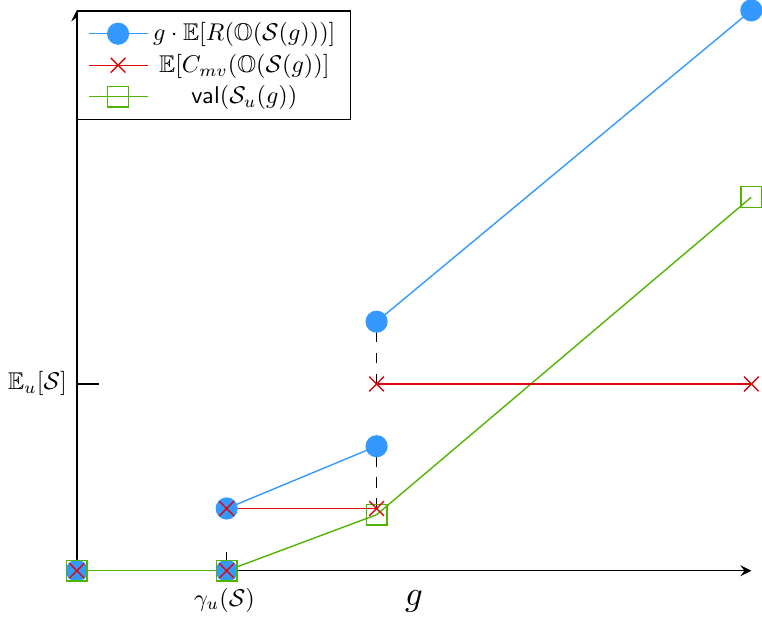}
\caption{An illustration of $\val(\S(g))$, which is a piecewise linear
convex function in $g$. Each piece corresponds to a subset $Q\subset V$.
The first piece corresponds to the empty set $Q=V$ (the player
quits immediately), and the last 
piece corresponds to $Q=\emptyset$ (the player never quits before reaching the target).
$\E_u[\S]$ is the expected cost of a never-quitting player.
In fact, each turning point corresponds to the grade of some state, and the first one 
corresponds to $\gamma_u(\S)$ where $u$ is the current state.
We also show the expected movement cost $\E[\MC(\OPT(\S_u(g)))]$ and the expected profit
$g \cdot\E[R(\OPT(\S_u(g))]$.
}
\label{fig:gittins_index}
\end{figure}
\vspace{-2mm}

By the definition of the grade, we can get the following lemma easily. 
\begin{lemma}[Optimal solution for $\S(g)$]
\label{lem:optimal_solution_single_GMS}
A strategy for $\S(g)$ is optimal if it chooses to advance \S whenever the grade is no more than $g$ and it chooses to quit whenever the grade is larger than $g$.
When the grade of \S equals to $g$, there is an optimal strategy that chooses to first advance $\S$  and 
there is one that chooses to quit immediately.
\end{lemma}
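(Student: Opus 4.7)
The plan is to lift the one-step indifference that defines $\gamma_v$ into a dynamic optimality statement via Bellman's principle of optimality. First I would introduce $V(v,g) := \val(\S_v(g))$ and record that, because $\S$ is Markovian and the game structure is stationary, $V$ satisfies the Bellman equation $V(v,g) = \max\{0,\, -C_v + \sum_{w} P_{v,w}\, V(w,g)\}$ for $v\ne t$, together with the boundary $V(t,g)=g$. This at once pins down the optimal action at any non-target state $v$ visited during play: quit when $-C_v + \sum_{w} P_{v,w} V(w,g) \le 0$, advance when the inequality is reversed, and either choice is optimal when equality holds.

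The second step is to identify this Bellman threshold with the grade $\gamma_v$. As recalled in Figure~\ref{fig:gittins_index}, every pure strategy for $\S_v(g)$ is specified by a quit-set $Q\subseteq V$ and has objective value affine in $g$; hence $V(v,\cdot)$ is the pointwise maximum of finitely many affine functions and is therefore piecewise linear and convex. The ``quit immediately'' strategy contributes the constant piece $0$, and by the very definition of grade, $\gamma_v$ is the unique value of $g$ at which this constant piece meets the next piece of $V(v,\cdot)$ (the one corresponding to a strategy that chooses to advance at least once). Consequently the advance value $-C_v + \sum_{w} P_{v,w} V(w,g)$ is strictly negative for $g<\gamma_v$, vanishes at $g=\gamma_v$, and is strictly positive for $g>\gamma_v$. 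Plugging this sign pattern into the Bellman rule yields exactly the policy stated in the lemma.

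The tie-breaking at $g=\gamma_v$ is then automatic: both the quit value and the advance value equal $0$, so quitting immediately is optimal, and by the principle of optimality any continuation that follows the same rule after one advance is also optimal with overall value $0$. The main conceptual obstacle, and the only step that needs a little care, is identifying $\gamma_v$ with the \emph{first} kink of $V(v,\cdot)$ rather than some later one; this uses the strict-preference direction of the grade's definition (for $g<\gamma_v$ the player strictly prefers to quit at $v$), which forces the constant piece $0$ to be the unique maximizer on $[0,\gamma_v)$ and hence makes the transition at $g=\gamma_v$ necessarily onto a first-advance piece, closing the argument.
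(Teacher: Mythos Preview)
Your argument is correct and, in spirit, matches the paper's treatment: the paper does not give a standalone proof but simply states that the lemma follows ``by the definition of the grade,'' relying implicitly on the piecewise-linear convexity of $\val(\S_v(\cdot))$ depicted in Figure~\ref{fig:gittins_index}. Your write-up makes this explicit by passing through the Bellman equation $V(v,g)=\max\{0,\,-C_v+\sum_w P_{v,w}V(w,g)\}$ and identifying the zero of the advance value with $\gamma_v$, which is exactly the content behind the paper's one-line justification.

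One small remark: in step two you assert that the advance value is \emph{strictly} negative on $[0,\gamma_v)$ and \emph{strictly} positive on $(\gamma_v,\infty)$, and you close this in the last paragraph by invoking the uniqueness clause in the paper's definition of grade. That is fine, but it is worth noting that monotonicity of $g\mapsto V(w,g)$ alone only gives a nondecreasing advance value; the strict sign change really does lean on the ``unique indifference point'' part of the definition (equivalently, on the fact that the never-quit strategy contributes a piece of slope one, so the advance value cannot stay at zero past $\gamma_v$). Since you flag this explicitly, there is no gap.
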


We say a game is a {\em fair game} if the value of the game is zero.
An observation that we use repeatedly is that the game $\S_u(\gamma_u(\S))$ is a fair game
(from Figure~\ref{fig:gittins_index}, one can see that $\gamma_u(\S)$ is the largest $g$ such that the game $\S_u(g)$ is a fair game).

Now we define a {\em prevailing cost} \cite{DTW03} and an epoch \cite{GJSS19}. A trajectory is a sequence of states traversed by a player.

\begin{definition}[Prevailing cost]
The prevailing cost of \MS $\S$ in a trajectory $\omega$ is $Y^{\max}(\omega)=\max_{u\in\omega}\gamma_u(\S)$.
\end{definition}
In other word, the prevailing cost is the maximum grade at any point.
In particular, the prevailing cost increases whenever the Markov system reaches a state with grade larger than each of the previously visited states. The prevailing cost can be viewed as a non-decreasing piece-wise constant function of time, which motivates the definition of epoch:
\begin{definition}[Epoch]
An epoch for a trajectory $\omega$ is any maximal continuous segment of $\omega$ where the prevailing cost does not change.
\end{definition}

We also define an interesting teasing game introduced 
in \cite{DTW03}, which is useful later.

\begin{definition}[Teasing game $\S^T$]
\label{def:teasing_game}
Consider the game $\S_s(\gamma_s)$ with initial state $s$.
Whenever the player reaches a state $u$ with grade $\gamma_u>\gamma_s$, we place $\gamma_u$ units of profits at the target state $t$ rather than $\gamma_s$. 
The objective is also to maximize the expectation of 
``profits - costs''.
The $\gamma_u$ profit provides just enough incentive for the player to continue advancing the \S. 
We denote the new teasing game by $\S^T$.
\end{definition}

For the new teasing game, we have the following lemma which also follows directly from the definition of the grade:
\begin{lemma}[Fairness of $\S^T$, Lemma 5.3 in\cite{DTW03}]
\label{lm:fairness}
$\S^T$ is a fair game, and a strategy for $\S^T$ is optimal if and only if the player never quits in the intermediate of an epoch, and only quits at the beginning of an epoch.
\end{lemma}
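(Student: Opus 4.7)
The plan is to prove both claims (fairness and the characterization of optimal strategies) by an epoch-by-epoch decomposition combined with the definition of grade. The key intuition is that each epoch, taken in isolation, behaves exactly like a fair single-arm game $\S_{u_k}(\gamma_k)$, where $u_k$ is the starting state of the $k$-th epoch and $\gamma_k = \gamma_{u_k}$ is the prevailing cost throughout the epoch.

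First I would record two easy structural facts. (i) The starting state $u_k$ of the $k$-th epoch satisfies $\gamma_{u_k} = g_k$ (the prevailing cost during epoch $k$), because the prevailing cost only jumps when we land on a state whose grade strictly exceeds the current maximum. (ii) Every other state $v$ visited during epoch $k$ has $\gamma_v \leq g_k$, by the very definition of the prevailing cost. Thus, during epoch $k$, the target of $\S^T$ carries a profit of exactly $g_k$, and any state the player sees has grade $\leq g_k$; this is precisely the situation of the single game $\S_{u_k}(g_k)$.

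Next I would prove fairness of $\S^T$ by induction over the finitely many distinct grade values $g_1 < g_2 < \cdots < g_m$ realized in the chain, proceeding downward from $g_m$. For the base case, if the initial state has the maximum grade $g_m$, then the prevailing cost can never increase, so $\S^T$ reduces to the single game $\S_{s}(g_m) = \S_s(\gamma_s)$, whose value is $0$ by the very definition of the grade. For the inductive step, decompose any play of $\S^T$ starting at $s$ into: the portion of play during the first epoch, which is a play of $\S_s(\gamma_s)$ stopped upon either reaching $t$ or reaching a state $u'$ with $\gamma_{u'} > \gamma_s$; followed by a play of the teasing game $\S^T$ restarted at $u'$. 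Since $\S_s(\gamma_s)$ is a fair game, and any allowed strategy for $\S^T$ induces an allowed strategy for $\S_s(\gamma_s)$ (quitting whenever it leaves the first epoch or quits itself), the expected ``profit minus cost'' collected during the first epoch is at most $0$, with equality iff the induced $\S_s(\gamma_s)$-strategy is optimal there. By the inductive hypothesis, $\S^T$ restarted at $u'$ has value $0$ (since $\gamma_{u'} > \gamma_s$, the grade strictly increases). Summing, $\val(\S^T)\leq 0$; on the other hand $\val(\S^T)\geq 0$ because quitting at time $0$ is always an option. Hence $\val(\S^T)=0$.

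For the optimality characterization, I would use Lemma~\ref{lem:optimal_solution_single_GMS} inside each epoch. Within epoch $k$ the player is effectively playing $\S(g_k)$, and the current state $v$ satisfies $\gamma_v \leq g_k$; the lemma says the unique optimal action is to advance whenever $\gamma_v < g_k$, and both actions (advance or quit) are optimal when $\gamma_v = g_k$, which happens exactly at the very first step of the epoch (at state $u_k$). Thus quitting in the interior of an epoch is strictly suboptimal, while quitting at the beginning of an epoch (i.e.\ at some $u_k$) preserves optimality. Combined with the inductive identity $\val(\S^T)=\val(\S_s(\gamma_s))+\E[\val(\S^T\text{ from }u')]=0$, any strategy that adheres to this rule achieves value $0$, and any strategy that violates it loses a strictly positive amount somewhere.

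The main obstacle I anticipate is bookkeeping the decomposition cleanly: one must argue that stopping a strategy for $\S^T$ at the boundary of the first epoch really does yield a valid (and not necessarily optimal) strategy for $\S_s(\gamma_s)$, and that the expected profit/cost accounting telescopes correctly across epochs despite the fact that the target profit is rewritten, not accumulated. The telescoping works precisely because at most one payout occurs (when $t$ is hit), and the payout equals the prevailing cost of the epoch in which it occurs; splitting this quantity as $\gamma_s + (\gamma_{u'} - \gamma_s)\cdot \ind[\text{epoch }2\text{ starts}] + \cdots$ aligns the single-game reward of the epoch-$k$ sub-game with the increment of the teasing-game profit at the epoch-$k$ boundary, letting the induction go through.
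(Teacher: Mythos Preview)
The paper does not prove this lemma; it is quoted from \cite{DTW03} (their Lemma~5.3) without argument. Your epoch-by-epoch induction is correct and is essentially the standard proof: each epoch, viewed in isolation, is a play of the fair single-arm game $\S_{u_k}(\gamma_{u_k})$ (with ``exit to the next epoch'' read as ``quit''), and the values telescope because the continuation from $u_{k+1}$ is again a teasing game with strictly larger initial grade, so the downward induction terminates. One minor caveat: your optimality argument implicitly assumes that the only state in epoch $k$ with grade equal to $g_k$ is the starting state $u_k$, so that ``interior of the epoch'' coincides with ``states of strictly smaller grade''; if ties in grade occur this identification can fail, but the paper elsewhere assumes all grades are distinct (cf.\ the proof of Claim~\ref{clm:high_pro_cost}), so no harm is done here.
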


By the above lemma, one can easily see that the expected movement cost of a never-quitting player of the game $\S^T$ is equal to the expected prevailing cost, where being never-quitting means that the player continues playing the system until system reaches the target state and he collects the profits.

\subsection{The Doubling Technique.}
\label{subsec:doubling}
In this subsection, we adopt the {\em doubling technique} which is similar to the ones used in related stochastic optimization problems such as \cite{ENS18} and \cite{JLLS20}.
See the pseudo-code of the framework in Algorithm~\ref{alg:alg_general}.
Basically, the framework proceeds in phases, and in $i$th phase, we call a sub-procedure denoted by $\ALGmeta(\M_{i-1},k_{i-1},B_i)$
in which we start with $\M_{i-1}$, the current state of all \MS, and aim at collecting the remaining $k_{i-1}$ units of reward with total cost budget $B_i=O(1)\beta^i$ (in expectation).

\begin{algorithm2e}[H]
\caption{A general algorithm  $\ALGgeneral$}
\label{alg:alg_general}
{\bf  Input:} The problem instance $\M$, objective number of rewards $K$\\
{\bf Process:}\\
 Set $\beta \in (1,2)$, $k_0=K$, $\M_0=\M$, $c=O(1)$\;
 \For{phase~$i=1,2,\cdots$}
 {
  $(\M_{i},k_{i})\leftarrow \ALGmeta(\M_{i-1},k_{i-1},B_i=c\beta^i)$\;
   \If{$k_i\leq 0$}{\textbf{Break}}
 }
\end{algorithm2e}

Recall that we have a unique root $\mathbf{R}$.
In particular, in \FindOne, we let $\ALGmeta(\M_{i},k_{i},B_{i+1})$ begin to play the game at the \MS where $\ALGmeta(\M_{i-1},k_{i-1},B_{i})$ halts, that's $\ALGmeta(\M_{i},k_{i},B_{i+1})$ does not need pay the unit switching cost for the \MS where $\ALGmeta(\M_{i-1},k_{i-1},B_{i})$ halts.
In \SFindK, we require that the strategy goes back to $\mathbf{R}$ after $\ALGmeta(\M_{i-1},k_{i-1},B_{i})$ halts for simplicity of the analysis, and hence
the next phase $\ALGmeta(\M_{i},k_{i},B_{i+1})$ starts the game at the root $\mathbf{R}$.
This blows up the total cost by at most a factor of 2 since
switching back to the root $\mathbf{R}$ costs at most $B_i$.
The main reason of doing so is 
to avoid the case where 
$\ALGmeta(\M_{i-1},k_{i-1},B_{i})$ stops at a \MS far-way from the other chains and $\ALGmeta(\M_{i},k_{i},B_{i+1})$ has to pay a lot in the first switch (rather this switching cost is amortized to the $i-1$th phase).


To analyze the algorithm framework, we define $\OPT(\M,k)$ to 
be the optimal strategy for the problem instance $\M$ (the game starts from the root $\mathbf{R}$)  with the target number of rewards $k$.
Intuitively, for $j\geq i$, the expected cost of $\OPT(\M_j,k_j)$ is no more than the one of $\OPT(\M_i,k_i)$ as $k_j\leq k_i$, i.e. in some sense $\M_j$ can get $k_i-k_j$ units of rewards for free. 
We can prove this formally and get the following lemma (whose proof can be found in Appendix~\ref{sec:larger_better}).

\begin{lemma}
\label{lm:larger_better}
For any $j\geq i\geq 1$ and any Algorithm $\ALGmeta$, one has 
\begin{align*}
    \E[\TC(\OPT(\M_i,k_i))]\geq \E[\TC(\OPT(\M_j,k_j))].
\end{align*}
Notice that the randomness is over the entire run of 
$\ALGgeneral$. 
\end{lemma}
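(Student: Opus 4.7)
Proof proposal.

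I induct on $j-i$; the case $j=i$ is a trivial equality, so it suffices to establish the single-step inequality
$$
\E[\TC(\OPT(\M_i, k_i))] \;\geq\; \E[\TC(\OPT(\M_{i+1}, k_{i+1}))].
$$
Define $V^*(\M,k) := \E[\TC(\OPT(\M,k)) \mid \M,k]$, the conditional optimal expected cost. Conditioning on the entire history through phase $i$ (which determines $\M_i$ and $k_i$) and taking expectations at the end, this reduces to the pointwise bound
$$
V^*(\M_i,k_i) \;\geq\; \E\!\left[V^*(\M_{i+1},k_{i+1}) \,\middle|\, \M_i,k_i\right].
$$

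The plan is to show that $V^*$ is a super-martingale under every single play made by $\ALGmeta$, and then iterate over the (random number of) plays constituting phase $i+1$ via the tower property. Concretely, if from some intermediate game state $(\M,k)$ the algorithm plays system $\S_i$ in current state $s_i$, paying $C_{s_i}$ and transitioning $s_i \mapsto s_i'$ to produce updated state $(\M',k')$, then I aim to establish $\E[V^*(\M',k') \mid \M,k,\text{play }\S_i] \leq V^*(\M,k)$. The argument splits on whether $s_i'$ hits the target $t_i$.

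In the reward case ($s_i' = t_i$, $k' = k-1$), any optimum strategy for $(\M,k)$ can be adapted into a strategy for $(\M_{s_i \to t_i}, k-1)$: treat subsequent plays of $\S_i$ as free (they incur zero cost and $\S_i$ stays at $t_i$) and stop after $k-1$ rewards are collected, which yields $V^*(\M_{s_i \to t_i}, k-1) \leq V^*(\M,k)$. In the advance case ($s_i' = u \neq t_i$, $k' = k$), I would combine the single-chain Bellman martingale $\E_{s_i'}[V_i(s_i')] = V_i(s_i) - C_{s_i}$ with a coupling of the optimal multi-token strategies on $\M$ and $\M_{s_i \to u}$, exploiting the independence of the systems so that the contributions of the other chains are unchanged under the perturbation. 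Averaging the two cases over $s_i'$ gives the per-step super-martingale bound, and iterating over phase $i+1$ completes the proof.

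The main obstacle is the advance case: lifting the single-chain martingale property to the multi-token value function $V^*$ is not immediate, since $V^*$ is the optimum of a joint strategy over all chains and a change in $\S_i$'s state could in principle perturb the optimal policy on the other chains as well. I would resolve this by simulating the optimal strategy for $(\M,k)$ directly on the perturbed instance $(\M_{s_i \to u}, k)$---playing $\S_i$ from its new state $u$ exactly when the original optimum plays it from $s_i$---so that the single-chain martingale yields the required expected cost comparison after averaging over $u$, while independence of the chains ensures the plays on the remaining systems are unaffected.
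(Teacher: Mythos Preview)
Your overall idea—coupling the optimal strategy for the earlier instance onto the later one—is the same as the paper's, but the paper executes it in a single stroke rather than play-by-play, and your advance-case coupling does not work as written.

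The paper conditions on the full trajectory $\omega_j$ that $\ALGgeneral$ reveals through phase $j$, and then runs the decision tree of $\OPT(\M_i,k_i)$ verbatim as a strategy $\ALG(\M_j,k_j)$ for the later instance, simply \emph{not charging} the movement cost of any transition already contained in $\omega_j\setminus\omega_i$. Since the skipped transitions have nonnegative cost and all remaining plays are coupled identically, one gets $\E[\TC(\OPT(\M_i,k_i))\mid\omega_j]\ge\E[\TC(\ALG(\M_j,k_j))\mid\omega_j]\ge\E[\TC(\OPT(\M_j,k_j))\mid\omega_j]$ pointwise in $\omega_j$. No case split, no per-step martingale, no single-chain Bellman identity is needed.

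Your advance-case coupling instructs the simulation to \emph{play} $\S_i$ from $u$ whenever the original plays it from $s_i$. This leaves the two copies of $\S_i$ permanently one step out of phase: the original incurs $C_{s_i},C_u,C_{u_1},\dots$ on $\S_i$ while your simulation incurs $C_u,C_{u_1},\dots,C_{u_T}$, a discrepancy $C_{u_T}-C_{s_i}$ that no single-chain Bellman identity controls (that identity compares expected costs \emph{to reach the target}, not costs of a prescribed number of plays, and there is no reason $\OPT(\M,k)$ plays $\S_i$ to its target). Worse, the decision tree of $\OPT(\M,k)$ branches on the state it actually observes, so following its timing while observing a shifted state sequence is not even a well-defined adaptive policy on the perturbed instance. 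The fix is exactly the paper's move: have the simulation \emph{skip} the first play of $\S_i$, coupling $u$ with the state $\OPT(\M,k)$ reaches after its own first play; the trajectories then realign, the cost saving is exactly $C_{s_i}\ge 0$, and your reward and advance cases merge into one. Once you adopt this correction, the per-play induction becomes superfluous and you have recovered the paper's all-at-once argument.
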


We use Lemma~\ref{lm:larger_better} to prove the following Lemma~\ref{lem:key_sub_process}, which is used for both \FindOne and 
\SFindK.

\begin{lemma}[Budgeted Subproblem]
\label{lem:key_sub_process}
We are given an \SFindK (or \FindOne) instance $\M$ with positive integers $k$ and $B\in \mathbb{R}_{\geq 0}$. 
For any $B> c_1\E[\TC(\OPT(\M,k))]$, if there is an algorithm $\ALGmeta$ that can succeed in collecting $k$ units of rewards with some constant probability (say more than $0.01$)
using expected total cost at most $c_2B$,
where $c_1,c_2$ are some universal constants, then there is a constant factor approximation algorithm for \SFindK (or \FindOne).
\end{lemma}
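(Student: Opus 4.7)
The plan is to feed $\ALGmeta$ into the doubling framework $\ALGgeneral$ (Algorithm~\ref{alg:alg_general}), after boosting its success probability by a constant number of repetitions inside each phase, and then bound the total expected cost by a geometric series argument anchored at the first phase whose budget exceeds $c_1 \E[\TC(\OPT(\M,k))]$.

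Concretely, first I would fix any $\beta\in(1,2)$ and an integer $m=O(1)$ large enough that running $\ALGmeta$ independently $m$ times in a single phase (each time with budget $B_i$) succeeds with probability at least $q:=1-\frac{1}{2\beta}$. By the hypothesis each independent call succeeds with probability $\geq 0.01$ whenever $B_i > c_1 \E[\TC(\OPT(\M_{i-1},k_{i-1}))]$, so this boost is possible and costs at most $mc_2 B_i=O(B_i)$ in expectation per phase. Next, I would define the critical index $i^\star$ to be the smallest $i$ with $B_i=c\beta^i>c_1\E[\TC(\OPT(\M,k))]$; geometric growth gives $B_{i^\star}=O(\E[\TC(\OPT(\M,k))])$.

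The crucial step uses Lemma~\ref{lm:larger_better}: for every $i\geq i^\star$ and every realization of the trajectory up to the start of phase $i$, we have $\E[\TC(\OPT(\M_{i-1},k_{i-1}))]\leq \E[\TC(\OPT(\M,k))]<B_i/c_1$, so the hypothesis of the lemma applies to the current residual instance and each post-critical phase succeeds with (conditional) probability at least $q$. Letting $N$ be the index of the final phase, we therefore have $\Pr[N\geq i^\star+j]\leq (1-q)^j=(2\beta)^{-j}$ for $j\geq 0$. Combining with the per-phase cost bound $mc_2 B_i=O(B_i)$ yields
\begin{align*}
\E[\TC(\ALGgeneral)]
&\leq \sum_{i=1}^{i^\star-1} mc_2 B_i \;+\; \sum_{j\geq 0} mc_2\,B_{i^\star+j}\,\Pr[N\geq i^\star+j] \\
&\leq mc_2 B_{i^\star}\cdot\frac{\beta}{\beta-1} \;+\; mc_2 B_{i^\star}\sum_{j\geq 0}\bigl(\beta\cdot\tfrac{1}{2\beta}\bigr)^{j} \;=\; O(B_{i^\star}),
\end{align*}
which is $O(\E[\TC(\OPT(\M,k))])$, the desired constant-factor guarantee. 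For the \SFindK variant, the extra cost of returning to the root $\mathbf{R}$ at the end of each phase is at most $B_i$, which only inflates the per-phase cost by a factor of two and is absorbed into the constants.

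The main obstacle is balancing the geometric blow-up of the budgets $B_i=c\beta^i$ against the residual failure probability: we need $\beta\cdot(1-q)<1$ for the tail sum to converge, which is precisely why $\ALGmeta$ must be repeated $m$ times per phase to push the success probability above $1-1/(2\beta)$. The other subtle point is that the critical-phase argument relies on $\E[\TC(\OPT(\M_{i-1},k_{i-1}))]$ being dominated by $\E[\TC(\OPT(\M,k))]$ \emph{pathwise} over the random trajectory produced by $\ALGgeneral$; this is exactly the content of Lemma~\ref{lm:larger_better} and is what makes the index $i^\star$ a deterministic function of the instance rather than of the random run.
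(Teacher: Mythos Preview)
Your proposal has a genuine gap at the ``crucial step.'' You assert that for every realization $\omega_{i-1}$ of the trajectory one has
\[
\E[\TC(\OPT(\M_{i-1},k_{i-1}))]\leq \E[\TC(\OPT(\M,k))],
\]
and you attribute this to Lemma~\ref{lm:larger_better}. But the lemma only states the inequality \emph{in expectation over the run of $\ALGgeneral$}; it does not give a pathwise comparison against the deterministic number $\E[\TC(\OPT(\M,k))]$. In fact the pathwise claim is false: take a single chain with $P_{s,a}=P_{s,b}=\tfrac12$, $C_a=1$, $C_b=100$, $P_{a,t}=P_{b,t}=1$. The original optimum costs roughly $50.5$ plus switching, but if phase~$1$ happens to drive the chain to $b$ and stops, the residual optimum costs $100$ plus switching. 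So your deterministic critical index $i^\star$ does \emph{not} guarantee that every post-$i^\star$ phase satisfies the hypothesis of $\ALGmeta$, and the geometric tail bound $\Pr[N\geq i^\star+j]\leq(2\beta)^{-j}$ is unjustified.

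What the \emph{proof} of Lemma~\ref{lm:larger_better} actually yields is the coupled pathwise inequality
\[
\E\bigl[\TC(\OPT(\M_{i-1},k_{i-1}))\,\big|\,\omega_{i-1}\bigr]\;\leq\;\E\bigl[\TC(\OPT(\M,k))\,\big|\,\omega_{i-1}\bigr],
\]
where the right-hand side is itself a random variable in $\omega_{i-1}$, not the unconditional optimum. The paper's argument is organized around this distinction: rather than fixing a deterministic threshold, it introduces the \emph{random} indicator $\ind_{\omega_{i-1}}$ of the event that the residual optimum exceeds $B_i/c_1$, derives the recursion $u_i(\omega_{i-1})\leq u_{i-1}(\omega_{i-1})/\beta^2+\ind_{\omega_{i-1}}$, and only then takes expectations and sums against $\beta^i$. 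The term $\sum_i \E[\ind_{\omega_{i-1}}]\beta^i$ is controlled via the coupled inequality above together with the fact that the right-hand side integrates to the deterministic $\E[\TC(\OPT)]$; this is where your deterministic-$i^\star$ shortcut loses information. Your boosting trick (repeat $m$ times per phase) and the final geometric summation are fine in spirit, but they must be grafted onto the random-indicator recursion rather than onto a fixed $i^\star$.
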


The proof is similar to the previous ones \cite{ENS18,JLLS20} with some subtle modifications and can be found in Appendix~\ref{sec:appendix_doubling}.

If we can design $\ALGmeta$ which satisfies the precondition of Lemma~\ref{lem:key_sub_process}, then by running $\ALGmeta$ in the framework, we get an $O(1)$-approximation algorithm. 
All our effort goes into designing $\ALGmeta$ that satisfies the precondition of Lemma~\ref{lem:key_sub_process} in the following sections.

\section{Markov Game with Unit Metric}
\label{sec:unit}
In this section, we consider the \FindOne problem (Definition \ref{def:game_findone}). 
Our main result is Theorem~\ref{thm:Find1MS}.
\FindoneMS*

\begin{algorithm2e}[H]
\caption{Algorithm for \FindOne}
\label{alg:alg_true_findone}
{\bf  Input:} The instance $\M$\\
\While{we have not collected any reward}
{
Choose to play \MS $\S_i$ with $i=\arg\min_{i}\index_i$; 
}
\end{algorithm2e}

Since our goal is to design an indexing strategy, we
need to define an index which can incorporate the information of  switching cost.
In particular, for each \MS $\S_i \in \mathbf{S}$ and every state $u$ of  $\S_i$, we create a {\em dummy state} $u'$ for $u$ with unit movement cost ($C_{u'}=1$) and deterministic transition to $u$ ($P_{u',u}=1$). 
The dummy states are used to capture the unit switching cost.
Let $\gamma'_u$ denote the grade of $u$'s dummy state, and call $\gamma'_u$ the {\em dummy grade} of $u$. 
We say a \MS $\S_i$ is  {\em active} at a time step $t$ if $\S_i$ is played in the previous step 
(i.e., continuing to play $\S_i$ in the $t$-th step does not incur any switching cost), and {\em inactive} otherwise.

Initially, we are at the root $\mathbf{R}$, and all of the \MS are inactive.
Now we define a grade $\index_i$ for $\S_i$ which is at state $u$: 
if $\S_i$ is active, then its $\index_i$ is defined to be the grade $\gamma_u$ of $u$; otherwise,  $\index_i$ is defined to be the dummy grade of $u$, which we denote by $\gamma'_u$. 

Our strategy for \FindOne is simply choosing to play the $\S_i$ with the smallest grade $\index_i$, breaking ties arbitrarily. See also Algorithm~\ref{alg:alg_true_findone}. 

Although Algorithm~\ref{alg:alg_true_findone} is simple to state,
directly analyzing it seems difficult.
Rather, we analyze an alternative algorithm via the doubling technique framework. 
More specifically, we apply the framework $\ALGgeneral$ (Algorithm~\ref{alg:alg_general}), which proceeds in phases. 
In each phase, it calls the sub-procedure $\METAfindone$ (Algorithm~\ref{alg:meta_findone}).
In the sub-procedure, we have a movement cost budget and a switching cost budget. We repeatedly play the \MS with the smallest grade, until one of the budgets is exhausted or all \MS reach their target states. 
Note that we may not stop immediately when we reach a target state and collect one unit of reward.
Instead, we should remove the present system and keep on playing (if there are still budget and available systems in this phase). Hence, the cost of the alternative algorithm is no less than that of Algorithm~\ref{alg:alg_true_findone}.
See also Appendix~\ref{sec:alg_findone}.




\subsection{Analysis.}
Recall that 
in order to prove the main result of this section (i.e. Theorem~\ref{thm:Find1MS}), 
it suffices to show that the sub-procedure $\METAfindone$ (Algorithm~\ref{alg:meta_findone}) satisfies the precondition of Lemma~\ref{lem:key_sub_process}. 
The key is to prove the following lemma, which is the precondition of Lemma~\ref{lem:key_sub_process} 
specialized for \FindOne.
Recall that we are still aiming at solving \FindOne where Algorithm~\ref{alg:alg_true_findone} stops immediately whenever it makes one \MS reach its target state. The alternative Algorithm~\ref{alg:meta_findone} may collect more than one unit of reward and it is only used for analysis and setting an upper bound for the expected cost of our true algorithm (Algorithm~\ref{alg:alg_true_findone}).

\begin{algorithm2e}[t]
\caption{Subprocedure $\METAfindone$}
\label{alg:meta_findone}
{\bf  Input:} The instance $\M$, budget $2^{8}B$\\
Set movement budget $2^{7}B$ and switching budget $2^{7}B$\;
Set $k\leftarrow 1$\;

\While{there are available \MS and condition ${\cal A}$ holds}
{
    Choose to play \MS $\S_i$ with $i=\arg\min_{i}\index_i$\; 
    \If{We reach a target state $t$ in $\S_i$}
    {
        Collect one unit reward and mark $\S_i$ as {\em unavailable};\\
        $k\leftarrow k-1$\;
    }
}
{\bf Return:} The updated instance $\M$, the remaining number of target states $k$\;
\BlankLine
{\bf Define:} Condition ${\cal A}$:\label{ln:condition_A}\\
The next move does not make the total movement cost or switching cost exceed $2^{7}B$\;
\end{algorithm2e}

\begin{lemma}
\label{lm:precondition}
For any input $\M$, let $\OPT$ be the optimal strategy for this instance.
If $B\geq 10\E[\TC(\OPT)]$, with probability at least 1/20,
\METAfindone (Algorithm~\ref{alg:meta_findone}) can collect at least one unit reward with budget $2^{8}\cdot B$.  
\end{lemma}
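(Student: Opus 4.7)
My plan is to prove the contrapositive: assume $\Pr[\Omega_{bad}] > 19/20$, where $\Omega_{bad}$ is the failure event (\METAfindone collects no reward within budget), and derive $\E[\TC(\OPT)] > B/10$, contradicting the hypothesis $B \geq 10\E[\TC(\OPT)]$.

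\textbf{Trajectory decomposition and teasing-game fairness.} Partition greedy's random trajectory into maximal segments $\sigma_1,\ldots,\sigma_m$, where each $\sigma_j$ is continuous play on a single chain $\S_{i_j}$ starting from state $u_j$. The key structural fact is that because the dummy grade satisfies $\gamma'_u > \gamma_u$, a greedy switch can only occur at the beginning of a new epoch of the current chain: inside an epoch the active index $\gamma_u$ is at most the grade $\alpha$ that initiated the epoch, and since greedy chose not to switch at that epoch's start we had $\alpha \leq \min_{i'\neq i}\index_{i'}$, so greedy continues playing $\S_i$ until some $\gamma_u>\alpha$ is reached, which by definition starts a new epoch. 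Hence each $\sigma_j$ executes an optimal-for-$\S^T_{i_j}$ strategy in the sense of Lemma~\ref{lm:fairness}, and conditional fairness gives
\begin{equation*}
\E[C_j \mid \mathcal{F}_{j-1}] = \E[R_j \mid \mathcal{F}_{j-1}],
\end{equation*}
where $C_j$ is the movement cost in $\sigma_j$, $R_j$ is the prevailing cost at target if reached in $\sigma_j$ (else $0$), and $\mathcal{F}_{j-1}$ is the history prior to $\sigma_j$.

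\textbf{Pigeonhole giving a large grade.} On $\Omega_{bad}$, no target is reached so $R_j = 0$ for every $j$, while one budget is exhausted so $\MC+\SC \geq 2^7 B - O(1)$. Therefore there exists (conditioning on the former segments) a segment $\sigma_{j^\star}$ whose conditional expected cost is $\Omega(B)$ but whose conditional success probability is tiny. Because the game $\S_{u_{j^\star}}(\gamma_{u_{j^\star}})$ is fair, $\gamma_{u_{j^\star}}=\E[C_{j^\star}]/\E[R_{j^\star}/\gamma_{u_{j^\star}}]$ in the appropriate sense, so large expected cost coupled with small success probability forces $\gamma_{u_{j^\star}}=\Omega(B)$. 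By the greedy rule, at that moment every other chain's current-state dummy grade is at least $\gamma_{u_{j^\star}}$, hence also $\Omega(B)$. I apply Freedman's inequality to the bounded-difference martingale $\sum_{j\leq J}(C_j-R_j)$, whose predictable quadratic variation is controlled by the per-segment grades $\gamma_{u_j}$, to promote the in-expectation statement to a high-probability conclusion on a sub-event $E\subseteq\Omega_{bad}$ with $\Pr[E]=\Omega(1)$.

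\textbf{Lower bound on OPT and the main obstacle.} To convert grade information into an OPT lower bound, I use that for any chain $\S_i$ and any strategy that ultimately reaches target $t_i$, fairness of $\S_{s_i'}(\gamma'_{s_i})$ gives expected cost on that chain (including the unit switching cost) of at least $\gamma'_{s_i}$. Combined with the greedy rule, on the event $E$ the minimum initial dummy grade $\min_i\gamma'_{s_i}$ is itself $\Omega(B)$: chains greedy never visited still sit at $s_i$ and inherit $\gamma'_{s_i}\geq\Omega(B)$ directly from the greedy-comparison at the moment of $\sigma_{j^\star}$, while chains greedy visited earlier were left at states with already-large dummy grade, which by monotonicity of $\val(\S_u(g))$ in $g$ forces $\gamma'_{s_i}=\Omega(B)$ as well. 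Averaging the per-chain lower bound over the distribution of chains OPT succeeds via then yields $\E[\TC(\OPT)]=\Omega(B)>B/10$, the desired contradiction. The main obstacle I anticipate is precisely this final translation: carefully bridging from ``greedy encountered high grades mid-run'' to ``all initial dummy grades $\gamma'_{s_i}$ are large,'' because it requires separating chains by whether greedy has visited them, propagating the greedy switching inequalities backwards across segments in a way that is robust to the randomness of greedy's path, and correctly calibrating the Freedman parameters so that the constants work out to give the desired $1/20$ in the lemma.
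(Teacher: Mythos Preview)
Your overall architecture---find a segment where expected cost is large relative to success probability, deduce the grades are $\Omega(B)$, then argue $\OPT$ must be expensive---matches the paper's. But the final step, which you correctly flag as ``the main obstacle,'' is genuinely broken as stated.

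You try to conclude that the \emph{initial} dummy grades $\gamma'_{s_i}$ are all $\Omega(B)$. This cannot work: the initial states $s_i$ and hence $\gamma'_{s_i}$ are fixed by the input instance before greedy runs, so no event in greedy's execution can force them to be large. Your proposed backward propagation (``monotonicity of $\val(\S_u(g))$ in $g$ forces $\gamma'_{s_i}=\Omega(B)$'') does not do what you want: monotonicity of $\val$ in $g$ says nothing about the grade at $s_i$ given information about the grade at a later state.

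The paper's fix is to work with the \emph{post-greedy} instance $\M'$ (current states after \METAfindone halts) and its optimum $\OPT'$. On the bad event, all \emph{current} dummy grades are $\Omega(B)$, so by a fair-game argument (Claim~\ref{clm:large_index_low_reward}) any budget-$B$ strategy on $\M'$ succeeds with probability $O(1/B \cdot 1/B^{-1})$, forcing $\E[\TC(\OPT')\mid\text{bad}]\geq\Omega(B)$. The crucial bridge you are missing is Lemma~\ref{lm:larger_better}, which says $\E[\TC(\OPT')]\leq\E[\TC(\OPT)]\leq B/10$ \emph{unconditionally}; combining these bounds the probability of the bad event. Without this lemma there is no way to connect the post-greedy grades back to the original $\OPT$.

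Two smaller points. First, the paper does not apply Freedman to the cost--reward martingale $\sum(C_j-R_j)$ as you propose; it instead proves directly (via case analysis on which budget exhausts, with several sub-cases) that $\Pr[\sum_j\T_j<2 \wedge \text{no success}]\leq 17/20$, then uses Freedman only for the clean implication $\sum_j\T_j\geq 2\Rightarrow\Pr[\text{success}]\geq 1-e^{-3/4}$. Your martingale $\sum(C_j-R_j)$ has unbounded increments ($R_j$ is a prevailing cost, not $\{0,1\}$), so Freedman would need truncation and extra care. Second, the movement-budget-exhausted case requires separate handling (the paper splits it into three sub-cases, including one where the very last move cost $\xi$ is itself $\geq 2^6 B$), which your sketch does not address.
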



\begin{proofof}{Lemma~\ref{lm:precondition}}
Due to Condition~${\cal A}$ (Line~\ref{ln:condition_A}),
the total cost budget cannot be violated.
Hence, it suffices to prove the success probability 
is at least 1/20.

We only need to consider the case when there are more than one system, otherwise it is optimal to switch the only system and make it reach the target state.
Suppose $\METAfindone$ decides to play some \MS $\S_{i_{j}}$ after switching $j$ times, i.e., the sequence of chains played by
$\METAfindone$ 
is $(\mathbf{R},\S_{i_1},\S_{i_2},\cdots,\S_{i_j},\cdots)$.
Note that $\METAfindone$ may revisit a chain (i.e., $\S_{i_t}=\S_{i_j}$ for some $i_t\ne i_j$).
Let $\omega_{j}$ be the path (a sequence of states) \METAfindone traverses on $\S_{i_j}$ after it switches to $\S_{i_j}$ and before it switches to $\S_{i_{j+1}}$ (or stops due to running out budget).
We define the stopping time $\tau$ as the numbers of switching of \METAfindone when it halts (equivalently, the switching cost of \METAfindone).
For $\tau+1\leq j\leq 2^{7}B$, we let $\omega_j=\emptyset$. 
Let $\omega=\cup_{j\geq 0}^{2^{7}B}\omega_j=\cup_{j\geq 0}^{\tau}\omega_j$ be the whole trajectory traversed by $\METAfindone$ and let 
$\Omega$ denote the set of all possible trajectories 
$\METAfindone$ can traverse.
We use the notation $\omega_{[0:j]}=\cup_{t=0}^{j}\omega_t$ to denote the prefix of $\omega$.

Now we define the Boolean random variables $X_{j}$
for $j\leq \tau$:
if \METAfindone can get the reward in $\S_{i_j}$, then $X_{j}=1$. Otherwise (i.e., \METAfindone switches out or runs out the budget) $X_{j}=0$. 
If $j>\tau$, we let random variables $X_{j}=0$.
Similarly, for $j\leq \tau$, 
we define random variable $C_{j}$ to represent the movement cost \METAfindone spends on $\S_{i_j}$ 
($C_j$ does not include the unit switching cost).


 Let $\index_{i_j}$ be the smallest grade of all systems when the $j$-th switch occurs, which is the dummy grade of the current state of $\S_{i_j}$.
 We use $s_{i_j}$ to denote the current state of $\S_{i_j}$ when the $j$-th switch
 occurs, and $s'_{i_j}$ be its dummy state.
 Hence, $\index_{i_j}=\gamma_{s'_{i_j}}(\S_{i_j})$.
 By the greedy process of \METAfindone, we know that 
 \begin{align*}
     \index_{i_1}\leq \index_{i_2}\leq \cdots\leq \index_{i_j}\leq \cdots\leq \index_{i_\tau}.
 \end{align*}
 
For a trajectory $\omega$,
let $\T_j(\omega)=\E[X_j\mid \omega_{[0:j-1]}]$.
Indeed, one can see that $\T_j$ is a random variable with randomness
from $\omega_{[0:j-1]}$.
Let $\T=\sum_{j=1}^{\tau}\T_j$ and $\T(\omega)=\sum_{j=1}^{\tau}\T_j(\omega)=\sum_{j=1}^{\tau}
\E[X_j\mid\omega_{[0:j-1]}]$. 
 \begin{claim}
 \label{clm:bound_grade}
 We know that
 \begin{align*}
     \Gamma_{i_{j+1}}\geq \frac{\E[C_j+1\mid \omega_{[0;j-1]}]}{\T_j(\omega)}
     =\frac{\E[C_j+1\mid \omega_{[0;j-1]}]}{\E[X_j\mid \omega_{[0:j-1]}] }\geq \Gamma_{i_j}.
 \end{align*}
 \end{claim}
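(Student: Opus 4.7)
The plan is to view the segment-$j$ play of \METAfindone as a strategy on a single Markov system and then extract both inequalities from two different choices of the profit parameter $g$ in the game $\S_u(g)$. The key structural observation I would first establish is that, conditional on the history $\omega_{[0:j-1]}$, the dummy grades of all chains other than $\S_{i_j}$ remain frozen throughout segment $j$ (those chains are inactive, so their states do not evolve), and their minimum equals exactly $\Gamma_{i_{j+1}}$: this is because \METAfindone's greedy rule will switch to whichever inactive chain has the smallest dummy grade the moment $\S_{i_j}$ is abandoned, and the dummy grade of $\S_{i_j}$ at the state $u$ where it was abandoned is strictly larger than its own active grade $\gamma_u$ at that state (prepending a unit of movement cost before reaching $u$ can only raise the indifference profit, so $\gamma_{u'}>\gamma_u$). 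Hence during segment $j$, \METAfindone is playing $\S_{i_j}$ starting at the dummy state $s'_{i_j}$, continuing while the current active grade is at most $\Gamma_{i_{j+1}}$ and stopping the moment it strictly exceeds $\Gamma_{i_{j+1}}$ or the target is reached. The total cost accrued is $C_j+1$ (the $+1$ being the unit cost of leaving the dummy state, i.e., the switching cost) and the target-indicator is $X_j$.

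Given this identification, both inequalities follow directly from the value of the corresponding single-chain game. For the upper bound, by Lemma~\ref{lem:optimal_solution_single_GMS} the rule above is an \emph{optimal} strategy for the game on $\S_{i_j}$ from $s'_{i_j}$ with profit $\Gamma_{i_{j+1}}$, and any such game has non-negative value (the player can quit at the outset), so its optimal expected utility is non-negative:
\[
\Gamma_{i_{j+1}}\cdot\E[X_j\mid\omega_{[0:j-1]}] \;-\; \E[C_j+1\mid\omega_{[0:j-1]}] \;\geq\; 0,
\]
which rearranges into $\Gamma_{i_{j+1}} \geq \E[C_j+1\mid\omega_{[0:j-1]}]/\T_j(\omega)$. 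For the lower bound, I would instead invoke the game on $\S_{i_j}$ from $s'_{i_j}$ with profit $\Gamma_{i_j} = \gamma_{s'_{i_j}}(\S_{i_j})$, which is by definition a \emph{fair} game of value $0$. Every strategy for a fair game achieves non-positive expected utility, including the segment-$j$ strategy, so
\[
\Gamma_{i_j}\cdot\E[X_j\mid\omega_{[0:j-1]}] \;-\; \E[C_j+1\mid\omega_{[0:j-1]}] \;\leq\; 0,
\]
and rearranging gives $\E[C_j+1\mid\omega_{[0:j-1]}]/\T_j(\omega) \geq \Gamma_{i_j}$.

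The step that needs the most care is the structural identification in the first paragraph, specifically the two bookkeeping points: (i) the dummy-state construction really folds the unit switching cost into a single unit move, so that the ``cost'' in the single-chain game equals $C_j+1$; and (ii) the threshold governing segment $j$ is indeed $\Gamma_{i_{j+1}}$ and not some smaller value coming from $\S_{i_j}$ itself after its own re-deactivation. For (ii), the comparison $\gamma_{u'}>\gamma_u$ noted above is exactly what is needed: at the moment \METAfindone switches out, the new dummy grade of $\S_{i_j}$ is strictly larger than the active grade that triggered the switch, so $\S_{i_j}$ does not become a smaller competitor than $\Gamma_{i_{j+1}}$. A minor caveat worth flagging is Condition~${\cal A}$: if the global budget cap truncates segment $j$, then the executed strategy is only a restriction of the Gittins-optimal one and the two identities above may fail simultaneously; however, on all sample paths for which the $(j{+}1)$-th switch actually occurs the budget was not exhausted within segment $j$, so the argument goes through verbatim on precisely the trajectories for which both $\Gamma_{i_j}$ and $\Gamma_{i_{j+1}}$ are defined.
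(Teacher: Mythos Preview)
Your proposal is correct and follows essentially the same approach as the paper: interpret segment~$j$ as a strategy in the single-chain game $\S_{s'_{i_j}}(g)$, then take $g=\Gamma_{i_{j+1}}$ (where the strategy is optimal, giving non-negative expected utility) for the left inequality and $g=\Gamma_{i_j}$ (where the game is fair, so every strategy has non-positive expected utility) for the right inequality. Your additional bookkeeping---justifying carefully that $\Gamma_{i_{j+1}}$ is already determined at the $j$-th switch via $\gamma_{u'}>\gamma_u$, and flagging the budget-truncation caveat---actually goes a bit beyond what the paper spells out, which simply asserts ``we know the value of $\Gamma_{i_{j+1}}$ when the $j$-th switch occurs'' and adds the parenthetical ``when the cost budget is not exhausted.''
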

 
 \begin{proof}
 Recall that the algorithm plays $\S_{i_j}$ whose grade is $\Gamma_{i_j}$ when the $j$-th switch occurs. 
 Note that we know the value of $\Gamma_{i_{j+1}}$ when the $j$-th switch occurs.
 Denote $U_j=\{u\in V_{i_j}\mid \gamma_u\leq \Gamma_{i_{j+1}}\}$.
 \METAfindone continues playing the system $\S_{i_j}$ until it reaches the target state or reach a state outside $U_j$ when the cost budget is not exhausted.
 
We know that $\S_{s_{i_j}'}(\index_{i_j})$ is a fair game where $s_{i_j}'$ is the dummy state of $s_{i_j}$.
On one hand, by Lemma~\ref{lem:optimal_solution_single_GMS}, the strategy determined by $U_j$ may not be the optimal solution to the game $\S_{s_{i_j}'}(\index_{i_{j}})$, so we have
\begin{align*}
    \E[X_j\mid \omega_{[0:j-1]}]\cdot\index_{i_j}-\E[C_j+1\mid \omega_{[0:j-1]}]\leq \val(\S_{s_{i_j}'}(\index_{i_{j}}))=0.
\end{align*}

On the other hand, as $\index_{i_{j+1}}\geq \index_{i_j}$, 
the game $\S_{s_{i_j}'}(\index_{i_{j+1}})$ is better than fair, hence
$\val(\S_{s_{i_j}'}(\index_{i_{j+1}}))\geq 0$. 
The strategy determined by $U_j$ is exactly the optimal solution to the game $\S_{s_{i_j}'}(\index_{i_{j+1}})$ according to Lemma~\ref{lem:optimal_solution_single_GMS}.
Thus we can conclude 
\begin{align*}
    \E[X_j\mid \omega_{[0:j-1]}] \cdot \index_{i_{j+1}}
    -\E[C_j+1\mid \omega_{[0:j-1]}]=\val(\S_{s_{i_j}'}(\index_{i_{j+1}}))\geq 0.
\end{align*}
This finishes the proof of the claim.
 \end{proof}
 
Recall that it suffices to show the success probability of \METAfindone is at least 1/20 when $B\geq 10\E[\TC(\OPT)]$.
The following key lemma states that with probability at least 3/20 that either \METAfindone succeeds to collect at least one unit of reward, or the summation of the conditional expectation is large. 
\begin{lemma}
\label{lm:large_ET} (Key Lemma)
With probability at least 3/20, either $\sum_{j=1}^{\tau}\T_j\geq 2$ or $\sum_{j=1}^{\tau}X_j\geq 1$.
Equivalently, one has
\begin{align*}
    \Pr\left[\sum_{j=1}^{\tau}\T_j\geq 2\vee \sum_{j=1}^\tau X_j\geq 1\right]\geq 3/20.
\end{align*}
\end{lemma}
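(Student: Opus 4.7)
The plan is to argue by contradiction. Define $\Omega_{\mathrm{bad}} := \{\T < 2\} \cap \{\sum_{j} X_j = 0\}$ and assume $\Pr[\Omega_{\mathrm{bad}}] > 17/20$; I will derive a contradiction with the hypothesis $B \geq 10\,\E[\TC(\OPT)]$.

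First I would sum Claim~\ref{clm:bound_grade} over $j = 1,\ldots,\tau$. Because the indices satisfy $\index_{i_1} \leq \index_{i_2} \leq \cdots \leq \index_{i_{\tau+1}}$, the upper bound telescopes to
\[
  \sum_{j=1}^{\tau} \E[C_j + 1 \mid \omega_{[0:j-1]}] \;\leq\; \index_{i_{\tau+1}} \cdot \T .
\]
On $\Omega_{\mathrm{bad}}$ the right-hand side is at most $2\,\index_{i_{\tau+1}}$, while the realized total cost $\sum_{j}(C_j+1)$ is at least $2^{7}B$ (the algorithm must halt via Condition~$\mathcal{A}$ since no reward was collected, so some \MS remain available). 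I would then apply Freedman's inequality to the martingale $\sum_{j}\bigl((C_j+1) - \E[C_j+1 \mid \omega_{[0:j-1]}]\bigr)$, whose increments are bounded by the remaining budget, to conclude that on a subevent of $\Omega_{\mathrm{bad}}$ of probability still $\gtrsim 1/4$ one has $\index_{i_{\tau+1}} \geq c\,B$ for a suitable absolute constant $c$.

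On this subevent, the greedy rule of \METAfindone forces every chain's current (dummy or real) grade to be at least $\index_{i_{\tau+1}} \geq cB$. By the fair-game property of the teasing game (Lemma~\ref{lm:fairness}), the expected movement cost of any never-quitting player on a chain equals its expected prevailing cost, which is at least the current grade of that chain; hence every chain is $\Omega(B)$-expensive to finish from its current state. Coupling the per-chain transition outcomes seen by \METAfindone with those of $\OPT$, I can then argue that $\OPT$, which must take at least one chain to its target, pays $\Omega(B)$ in expectation on that chain, contradicting $\E[\TC(\OPT)] \leq B/10$.

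The main obstacle will be the last coupling step: since grades fluctuate along a trajectory, ``expensive from the current state'' does not transfer directly to a lower bound on $\E[\TC(\OPT)]$, because $\OPT$ is not adapted to \METAfindone's state. I would handle this via the prevailing-cost identity — for every chain, the expected movement cost of any never-quitting strategy equals the expected prevailing cost, which is non-decreasing — so the chain that $\OPT$ eventually drives to target induces a prevailing cost no smaller than the $\Omega(B)$ grade witnessed by \METAfindone, yielding the desired lower bound on $\E[\TC(\OPT)]$ and hence the required contradiction.
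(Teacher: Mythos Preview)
Your overall direction—show that on $\Omega_{\mathrm{bad}}$ the smallest remaining grade must be $\Omega(B)$ and then contradict $\E[\TC(\OPT)]\le B/10$—matches the paper's, but two steps do not go through as written.

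First, the assertion $\sum_j(C_j+1)\ge 2^{7}B$ on $\Omega_{\mathrm{bad}}$ is not always true. When it is the \emph{movement} budget that triggers Condition~$\mathcal{A}$, the halting rule only says $\sum_j C_j+\xi>2^{7}B$ for the next move cost $\xi$, and $\xi$ itself can be $\Theta(B)$; then $\sum_j C_j$ may be tiny and your telescoped bound gives no information on $\index_{i_{\tau+1}}$. The paper handles this by an explicit case split (which budget runs out, and whether $\xi\ge 2^{6}B$): when $\xi$ is large, the fact that \METAfindone was about to play that state already forces all other dummy grades to be $\ge 2^{6}B$; when $\xi$ is small, the realized cost really is $\Omega(B)$ and one proceeds as you suggest (plain Markov suffices there—Freedman is unnecessary and awkward since the increments $C_j$ can be $\Theta(B)$).

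Second, and more importantly, the final coupling step is a genuine gap. The identity ``expected movement cost $=$ expected prevailing cost'' holds for a never-quitting player on a \emph{single} chain, as an expectation over that chain's randomness; it is not a per-trajectory inequality, and it does not apply to $\OPT$, which plays several chains adaptively and stops once any one reaches its target. Conditioning on \METAfindone's trajectory destroys the fair-game identity you need. The paper sidesteps coupling entirely: it passes from $\OPT$ to $\OPT'$, the optimal strategy on the post-\METAfindone instance $\M'$, via Lemma~\ref{lm:larger_better} (so $\E[\TC(\OPT')]\le\E[\TC(\OPT)]\le B/10$ unconditionally), and then proves a clean composition-of-fair-games bound (Claim~\ref{clm:large_index_low_reward}): if every dummy grade in $\M'$ is at least $\zeta B$, then any strategy with total budget $B$ succeeds with probability at most $1/\zeta$. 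This turns the grade lower bound into $\E[\TC(\OPT')\mid\omega\in\Omega_1]\ge(1-1/\zeta)B$, and hence $\Pr[\Omega_1]$ is small. Replacing your coupling by this $\OPT\to\OPT'$ reduction together with Claim~\ref{clm:large_index_low_reward} is the missing idea; the remainder of the paper's proof is the case analysis described above, with a union bound over the sub-cases summing to less than $17/20$.
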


\begin{proof}
Recall that $\T_j=\E[X_j\mid{\cal F}_{j-1}]$, $\T=\sum_{j=1}^{\tau}\T_j$,
and $\Omega$ is the set of all possible trajectories.
To prove the statement, we bound the probability of a few bad cases: 

\noindent \textbf{Case (1):} [$\T< 2$, $\sum_{j=1}^\tau X_j=0$ and the switching budget runs out first.]

Let $\Omega_1=\{\omega\in \Omega \mid \T(\omega)=\sum_{j=1}^{\tau}\T_j(\omega)<2,\sum_{j=1}^\tau X_j (\omega)=0, \tau=2^{7}\cdot B \}$ be the set of trajectories corresponding to Case (1). 


Conditioning on any sample path (trajectory) $\omega\in \Omega_1$, there must exist $j\in[2^{7}\cdot B]$ such that $\T_j(\omega)\leq \frac{2}{2^{7}B}= 2^{-6}\cdot B^{-1}$.
By Claim~\ref{clm:bound_grade}, we know that 
$$
\Gamma_{i_{j+1}}\geq \frac{\E[C_j+1\mid \omega_{[0:j-1]}]}{\T_j(\omega)}
\geq \frac{1}{\T_j(\omega)}
\geq  2^6\cdot B.
$$

Recall that $\M$ is the initial instance accepted by \METAfindone, and suppose the $\M'$ is the resulting instance at the end of \METAfindone (their difference can be determined by $\omega$). 

Conditioning on $\omega$,
we let $\OPT'$ be the optimal strategy for instance $\M'$.
If \METAfindone does not collect any reward, we still need to collect one unit reward from $\M'$. If \METAfindone succeeds to collect at least one unit of reward, $\OPT'$ does nothing and halts immediately. 
Note that $\E[\TC(\OPT')\mid \omega]$ is a random variable with randomness from $\omega$.

By Lemma~\ref{lm:larger_better}, we have 
\[
\E[\TC(\OPT')]\leq \E[\TC(\OPT)]\leq B/10.
\]

Recall that the smallest grade of all available \MS does not decrease and thus grades of all \GS are at least $2^{6}\cdot B$ when \METAfindone halts.
We want to bound the probability $\Pr[\TC(\OPT')\leq B]$.
We need the following claim:
\begin{claim}
\label{clm:large_index_low_reward}
For any positive real number $B$ and any \SFindK instance $\M$, if the grades of all \MS are at least $\zeta\cdot B$ for some constant $\zeta\geq 1$, then for any strategy $\ALG$ with total cost budget $B$, one has
\begin{align*}
    \Pr[\ALG \text{ succeeds to collect at least one unit reward } ]\leq 1/\zeta.
\end{align*}
\end{claim}
\begin{proof}
With the input $\M$, let $\Gamma(\S_i)$ be the grade of the inactive $\S_i$ and $u_i'$ be the dummy state of the initial state $u_i$ of $\S_i$.
We first simplify the game by only requiring to pay the unit switching cost once
for each \MS.
Equivalently, only the initial state $u_j$ has its corresponding dummy state $u_j'$ in $\S_j$, and when the player decides to switch to $\S_j$ again, she switches to 
the current normal state (instead of the dummy state).
\footnote{Hence, the simplified game can be reduced to a Markov game
without switching cost, hence solved optimally.}
Consider an arbitrary strategy $\ALG$.
Obviously, the cost of $\ALG$ is only less in this simplified game
(trajectory-wise). 

Now, imagine that we run $\ALG$ on a composition game 
$\CGame=\S_{u_1'}(\Gamma(\S_1))\circ \S_{u_2'}(\Gamma(\S_2))\circ\cdots\circ \S_{u_m'}(\Gamma(\S_m))$: The new composition game has the same set of Markov
chains; hence a trajectory in the original game is also a trajectory
in the new game.
We know $\S_{u_i'}(\Gamma(\S_i))$ is a fair game and any combination (simultaneous, sequential, or interleaved) of independent fair games is still fair (e.g., Lemma 5.4 in \cite{DTW03}).
Hence, $\CGame$ is a fair game.

We use $p_j$ to denote the probability that $\ALG$ makes system $\S_j$ reach its target state and $C_j$ be the cost that $\ALG$ spends on $\S_j$ in $\CGame$.
Indeed, one can see $C_j$ is also the cost $\ALG$ spends on $\S_j$ in the simplified game.
%
As the composition game is fair, thus we know the expected ``profits-cost'' of 
$\ALG$ is at most 0.
In particular, one has
$\sum_{j=1}^m \Gamma(\S_j)p_j- \sum_{j=1}^m\E[C_j]\leq \val(\CGame) = 0$.



For each $j\in[m]$, one has $ \Gamma(\S_j)\geq \zeta\cdot B$. 
As the budget is $B$, $\sum_{j=1}^m\E[C_j]\leq B$. Hence, we have that
\begin{align*}
    \Pr[\ALG \text{ succeeds to collect at least one unit reward } ]
    \leq  \sum_{j=1}^m p_j
    \leq  \frac{B}{\min_j\Gamma(\S_j)}
    \leq  \frac{B}{\zeta \cdot B}
     =  1/\zeta.
\end{align*}
This finishes the proof of the claim.
\end{proof}
Note that the randomness of $\OPT'$ comes from $\omega$.
Applying Claim~\ref{clm:large_index_low_reward} to $\OPT'$ with total cost budget $B$, we can see 
\[
\Pr[\OPT' \text{ succeeds to collect at least one unit reward} \wedge \TC(\OPT')\leq B \mid \omega\in \Omega_{1}]\leq 2^{-6},
\]
which means that $ \E[\TC(\OPT')\mid \omega \in \Omega_1]\geq B(1-2^{-6})\geq 0.9B$. Combining the fact that $\E[\TC(\OPT')]\leq B/10$, we know that 
\[
\Pr[\omega\in \Omega_1]\leq 1/9.
\]

\noindent \textbf{Case (2):} $\T<2$, $\sum_{j=1}^\tau X_j=0$ and the movement budget runs out first. 
We define some additional variables to analyze this case.
Let $\xi(\omega)$ be the movement cost of the (next) move which breaks Condition ${\cal A}$ and makes \METAfindone halt.
We divide Case (2) further into the following cases.

{\bf Sub-case (2.1):}[Large breaking cost.]
This Sub-case corresponds to the set $\Omega_{2.1}=\{\omega\in \Omega: \T(\omega)<2,\sum_{j=1}^{\tau}X_j(\omega)=0, \sum_{j=1}^{\tau} C_j(\omega)+\xi(\omega) \geq 2^{7}\cdot B, \xi(\omega)\geq 2^6\cdot B\}$.
As $\xi(\omega)\geq 2^{6}\cdot B$, conditioning on $\omega\in \Omega_{2.1}$, it is impossible for $\OPT'$ to make $\S_{\tau}$ reach the target state within budget $B$.
Here we say $\OPT'$ does something within budget $B$, means the event that $\OPT'$ 
does something and its total cost is no more than $B$ at the time of accomplishment.

For those $\omega\in \Omega_{2.1}$, as $\xi(\omega)\geq 2^{6}\cdot B$, we know \METAfindone must have decided to play some \MS whose grade is at least $2^{6}\cdot B$.
More specifically, let $u$ be final current state of $\S_{\tau}$ when \METAfindone halts, and by the definition we have $C_u=\xi(\omega)$.
By the definition, we know that $\gamma_u\geq C_u\geq 2^6\cdot B$.
We define $\S_{i_{\tau+1}}$ to be the (inactive) system with the second smallest grade when the $\tau$-th switch occurs ($\S_{i_{\tau}}$ has the smallest grade then), and $\Gamma_{i_{\tau+1}}$ is the corresponding grade.
One can see that $\Gamma_{i_{\tau+1}}\geq \gamma_u\geq 2^6\cdot B$. 

Similar to {\bf Case (1)}, by Claim~\ref{clm:large_index_low_reward}, one has
\[
\Pr[\OPT' \text{ succeeds to collect at least one unit reward within Budget $B$}\mid \omega\in \Omega_{2.1}]\leq 2^{-6},
\]
which implies that
$\Pr[\omega\in \Omega_{2.1}]\leq 1/9$ by the fact that $\E[\TC(\OPT')]\leq B/10.$


{\bf Sub-case (2.2):}[Small expected movement cost, small breaking cost.]
Particularly, this Sub-case corresponds to the set $\Omega_{2.1}=\{\omega\in \Omega: \T(\omega)<2,\sum_{j=1}^{\tau}X_j(\omega)=0, \sum_{j=1}^{\tau}\E[C_j\mid \omega_{[0:j-1]}] \leq 2^{4}\cdot B, \sum_{j=1}^{\tau} C_j(\omega)+\xi(\omega) > 2^{7}\cdot B, \xi(\omega) \leq 2^6\cdot B\}$.
Obviously, if $\sum_{j=1}^{\tau} C_j(\omega)+\xi(\omega) > 2^{7}\cdot B$ and $\xi(\omega) \leq 2^6\cdot B$, then we know $ \sum_{j=1}^{\tau} C_j(\omega)\geq 2^{6}\cdot B$.
Thus the probability of this Sub-case can be bounded by Markov Inequality.
In particular, one has
\begin{align*}
    \Pr[\omega\in \Omega_{2.1}]
    \leq & \Pr\left[\omega\in\Omega:  \sum_{j=1}^{\tau} C_j(\omega)\geq 2^{6}\cdot B\wedge \sum_{j=1}^{\tau}\E[C_j\mid \omega_{[0:j-1]}] \leq 2^{4}\cdot B\right]\\
    \leq & 
    \Pr\left[\sum_{j=1}^{\tau} C_j(\omega)\geq 2^{6}\cdot B\mid \sum_{j=1}^{\tau}\E[C_j\mid \omega_{[0:j-1]}] \leq 2^{4}\cdot B \right]
    \leq  2^{-2},
\end{align*}
where the last step follows by applying Markov Inequality on the probability space $\{\omega\in \Omega:\sum_{j=1}^{\tau}\E[C_j\mid \omega_{[0:j-1]}] \leq 2^{4}\cdot B\}$.

{\bf Sub-case (2.3):}[Large expected movement cost, small breaking cost.]
The corresponding sample path set is $\Omega_{2.3}=\{\omega\in \Omega : \T(\omega)<2,\sum_{j=1}^{\tau}X_j(\omega)=0, \sum_{j=1}^{\tau}\E[C_j\mid \omega_{[0:j-1]}] > 2^{4}\cdot B, \sum_{j=1}^{\tau} C_j(\omega)+\xi(\omega) > 2^{7}\cdot B,\xi(\omega)\leq 2^6\cdot B\}$. 

Let $\M''$ be the instance when the $\tau$-th switch occurs, $k''$ be the remaining number of rewards we need to collect, and let $\OPT''$ be the optimal strategy for $\M''$ with the objective $k''$.
By the same proof of Lemma~\ref{lm:larger_better} (trajectory-wise), we also have
\[
\E[\TC(\OPT'')]\leq \E[\TC(\OPT)]\leq B/10.
\]

Conditioning on any sample path $\omega\in \Omega_{2.3}$, we know that $\T(\omega)=\sum_{j=1}^{\tau}\E[X_j\mid \omega_{[0:j-1]}]\leq 2$ and $\sum_{j=1}^{\tau}\E[C_j+1\mid \omega_{[0:j-1]}]\geq 2^4\cdot B$. Then there exists $j\leq \tau$ such that $\frac{\E[C_j+1\mid{\omega_{[0:j-1]}}]}{\E[X_j\mid \omega_{[0:j-1]}]}\geq 2^3\cdot B$ and by Claim~\ref{clm:bound_grade}, we know $\Gamma_{i_{j+1}}\geq 2^3\cdot B$.

Denote the subset $\Omega_{2.3.1}=\{\omega\in\Omega_{2.3}:\Gamma_{i_\tau}\geq 2^3\cdot B\}$.
We know that 
\[
\Pr[\OPT'' \text{ succeeds to collect at least one unit reward within Budget $B$}\mid \omega\in\Omega_{2.3.1}]\leq 2^{-3}.
\]
by Claim~\ref{clm:large_index_low_reward}.
Hence, by the fact that $B/10\geq \E[\TC(\OPT'')]\geq \E[\TC(\OPT'')\mid \omega\in \Omega_{2.3.1}]\Pr[\Omega_{2.3.1}] \geq B(1-2^{-3})\Pr[\Omega_{2.3.1}]$,
we can see that $\Pr[\Omega_{2.3.1}]\leq 4/35$ .

Otherwise, consider those $\omega\in\Omega_{2.3}\setminus\Omega_{2.3.1}$ such that $\Gamma_{i_{\tau}}< 2^3\cdot B$.
By the greedy property of the algorithm, we know that $\Gamma_{i_j}\leq \Gamma_{i_{\tau}}$ for all $1\leq j\leq \tau$.
By Claim~\ref{clm:bound_grade}, we know that
$2^3\cdot B\geq \Gamma_{i_{j+1}}\geq   \frac{\E[C_j+1\mid \omega_{[0;j-1]}]}{\E[X_j\mid \omega_{[0:j-1]}] }$ for $1\leq j \leq \tau-1$.
As $\T(\omega)=\sum_{j=1}^{\tau}\E[X_j\mid \omega_{[0:j-1]}]\leq 2$, then we know that $\sum_{j=1}^{\tau-1}\E[C_j+1\mid\omega_{[0:j-1]}]\leq 2^4\cdot B$. 
However, $\sum_{j=1}^\tau C_j(\omega)>2^6 B$ for this case.
Use the same Markov inequality argument as in Sub-case (2.2), one has
\begin{align*}
    \Pr[\omega\in \Omega_{2.3}\setminus\Omega_{2.3.1}]
    \leq & \Pr\left[\omega\in\Omega:  \sum_{j=1}^{\tau-1} C_j(\omega)\geq 2^{6}\cdot B\wedge \sum_{j=1}^{\tau-1}\E[C_j\mid \omega_{[0:j-1]}] \leq 2^{4}\cdot B\right]\\
    \leq & 
    \Pr\left[\sum_{j=1}^{\tau-1} C_j(\omega)\geq 2^{6}\cdot B\mid \sum_{j=1}^{\tau-1}\E[C_j\mid \omega_{[0:j-1]}] \leq 2^{4}\cdot B \right]
    \leq  2^{-2}.
\end{align*}


{\bf Union Bound:}
By union bound over {\bf Case (1)} and {\bf Case (2)}, we have that
\begin{align*}
    \Pr[\omega\in \Omega : \METAfindone \text{ fails and } T(\omega)<2] &= \Pr[\Omega_1 \cup \Omega_{2.1}\cup \Omega_{2.2}\cup\Omega_{2.3}]\\
    &\leq \Pr[\Omega_1]+\Pr[\Omega_{2.1}]+\Pr[\Omega_{2.2}]+\Pr[\Omega_{2.3}]\\
    & \leq 1/9+1/4+1/9+ \Pr[\Omega_{2.3}\setminus\Omega_{2.3.1}]+\Pr[\Omega_{2.3.1}]\\
    &\leq 1/9+1/4+1/9+1/4+4/35\\
    &<  17/20.
\end{align*}
This completes the proof.
\end{proof}

The lemma below complements Lemma~\ref{lm:large_ET} and shows that if the summation of conditional expectation is large, then the algorithm should succeed with a constant probability.
\begin{lemma}
\label{lm:good_proerty_large_T}
Suppose $X_1,X_2,\cdots,X_n$ are a sequence of random variables taking values in $\{0,1\}$, and $\F_j=\sigma(X_1,\cdots,X_j)$ is the filtration defined by the sequence. Given any real number $\mu$, if $\sum_{j=1}^{n}\E[X_j\mid{\cal F}_{j-1}]\geq \mu$, then
\begin{align*}
    \Pr\left[\sum_{j=1}^{n}X_j\geq 1\right]\geq 1-e^{-3\mu/8}.
\end{align*}
\end{lemma}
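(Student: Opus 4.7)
The plan is to bound the complementary event $\{\sum_{j=1}^n X_j=0\}$ by an exponential-supermartingale (Chernoff--Markov) argument adapted to the filtration $\{\F_j\}$; since $X_j\in\{0,1\}$, this is exactly $\{\sum_j X_j\geq 1\}^c$, so a bound of $e^{-3\mu/8}$ on its probability is what the lemma asks for.

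Writing $p_j := \E[X_j\mid\F_{j-1}]$, for any $\lambda\geq 0$ the conditional MGF satisfies
\begin{align*}
\E\!\left[e^{-\lambda X_j}\mid\F_{j-1}\right] \;=\; 1-p_j(1-e^{-\lambda}) \;\leq\; \exp\!\bigl(-p_j(1-e^{-\lambda})\bigr),
\end{align*}
by $1-x\leq e^{-x}$. Consequently the process
\begin{align*}
N_j \;:=\; \exp\!\left(-\lambda\sum_{i\leq j}X_i \;+\; (1-e^{-\lambda})\sum_{i\leq j}p_i\right)
\end{align*}
is a non-negative $\F_j$-supermartingale with $N_0=1$, so $\E[N_n]\leq 1$. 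Using the hypothesis $\sum_j p_j\geq\mu$ almost surely together with Markov's inequality,
\begin{align*}
\Pr\!\left[\textstyle\sum_j X_j=0\right] \;=\; \Pr\!\left[e^{-\lambda\sum_j X_j}\geq 1\right] \;\leq\; \E\!\left[e^{-\lambda\sum_j X_j}\right] \;\leq\; e^{-(1-e^{-\lambda})\mu}.
\end{align*}
Choosing $\lambda=\ln(8/5)$ yields $1-e^{-\lambda}=3/8$, so $\Pr[\sum_j X_j=0]\leq e^{-3\mu/8}$, and rearranging gives the claim.

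An equivalent derivation is a direct invocation of Freedman's inequality (Appendix~\ref{sec:concentration}) on the martingale $M_n := \sum_j (X_j - p_j)$: its increments are bounded by $1$ and its predictable quadratic variation $\sum_j p_j(1-p_j)$ is at most $\sum_j p_j$, so the Bernstein exponent $-t^2/(2(v+t/3))$ with $t=v=\sum_j p_j\geq\mu$ collapses exactly to $-3\mu/8$. Either route is routine; the only nontrivial step is the one-line supermartingale bound via $1-x\leq e^{-x}$, so I foresee no real obstacle. (In fact, letting $\lambda\to\infty$ in the MGF bound gives the strictly sharper $1-e^{-\mu}$, so the lemma's constant $3/8$ is not tight.)
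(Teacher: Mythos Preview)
Your proof is correct. Your second route via Freedman's inequality on $M_n=\sum_j(X_j-p_j)$ is exactly the paper's argument: the paper sets $Y_j=X_j-\E[X_j\mid\F_{j-1}]$, bounds $W_n=\sum_j\E[Y_j^2\mid\F_{j-1}]\le\sum_j\E[X_j\mid\F_{j-1}]$, and applies Freedman with $\ell=\sigma^2=\mu$ and $M=1$ to obtain $\exp(-3\mu/8)$.

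Your primary route is genuinely different and a bit nicer: the exponential-supermartingale (Chernoff--Markov) argument is entirely self-contained, avoids the black-box Freedman citation, and as you observe yields the strictly sharper bound $\Pr[\sum_j X_j=0]\le e^{-\mu}$ by taking $\lambda\to\infty$ (which recovers the familiar product bound $\prod_j(1-p_j)\le e^{-\sum_j p_j}$). The paper's Freedman approach has the advantage of being a one-line invocation of a stated appendix theorem, but it inherits the looser Bernstein constant $3/8$; your MGF argument buys both elementarity and a tighter exponent at the cost of a short explicit computation.
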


The proof of Lemma~\ref{lm:good_proerty_large_T} can be found in Appendix~\ref{sec:prob_toss_coins}.
By Lemma~\ref{lm:large_ET}, one has
\begin{align}
\label{eq:large_ET}
    \Pr\left[\sum_{j=1}^{2^{7}B}X_j\geq 1\right]+\Pr[T\geq 2]\geq & \Pr\left[\sum_{j=1}^{2^{7}B}X_j\geq 1\vee T\geq 2\right]\geq 3/20.
\end{align}

By Lemma~\ref{lm:good_proerty_large_T}, one has
\begin{align}
\label{eq:good_proerty_large_T}
     \Pr\left[\sum_{j=1}^{2^{7}B}X_j\geq 1 \mid T\geq 2\right]\geq 1-e^{-3/4}\geq 1/2.
\end{align}

By combining Equation~\ref{eq:large_ET} and Equation~\ref{eq:good_proerty_large_T}, one can easily show that 
\begin{align*}
 \Pr\left[\sum_{j=1}^{2^{7}B}X_j\geq 1\right] 
    \geq & \Pr\left[\sum_{j=1}^{2^{7}B}X_j\geq 1 \mid T\geq 2\right]\cdot \Pr[T\geq 2]
    \geq  (1-e^{-3/4})\cdot \Pr[T\geq 2] \\
    \geq & (1-e^{-3/4}) \cdot \left(\frac{3}{20}-\Pr\left[\sum_{j=1}^{2^{7}B}X_j\geq 1\right]\right)
    \geq  \frac{1}{2} \cdot \left(\frac{3}{20}-\Pr\left[\sum_{j=1}^{2^{7}B}X_j\geq 1\right]\right),
\end{align*}
which implies that 
\begin{align*}
    \Pr\left[\sum_{j=1}^{2^{7}B}X_j\geq 1\right]\geq 1/20.
\end{align*}
Thus we complete the proof of Lemma~\ref{lm:precondition}.
\end{proofof}

Theorem~\ref{thm:Find1MS} follows from Lemma~\ref{lem:key_sub_process} and Lemma~\ref{lm:precondition} directly.

The simple index-based strategy achieves a constant approximation factor in unit metric space. 
However, there is a simple counter-example on general metric, which
shows \METAfindone may perform arbitrarily bad. See Appendix~\ref{sec:counter_example}
for the example.
This suggests that new techniques are needed for more general metric. This is the focus of the next section.

\section{Markov Game with General Metric}
\label{sec:general}
In this section, we consider \SFindK (Definition~\ref{def:SFindK}).
Note that our requirement is also more general: we need to collect $K$ units of rewards (make $K$ chains reach their targets). 
Our main result is an efficient strategy $\ALGSfindK$ which can achieve a constant factor approximation.


$\ALGSfindK$ (See Algorithm~\ref{alg:alg_SfindK} in Appendix) adopts the same doubling framework as $\ALGgeneral$ (Algorithm~\ref{alg:alg_general}), except that we need a more complicated sub-procedure \METASfindK (which approximates the budgeted version).
In particular, \METASfindK should satisfy the precondition of  Lemma~\ref{lem:key_sub_process}.
i.e., the expected total cost of \METASfindK is bounded by $c_2\E[\TC(\OPT(\M,K))]$ for some universal constant $c_2$, and it can collect $K$ units of rewards with constant probability when $B$ is large enough. 
For clarity, we present the lemma below, which is the precondition of Lemma~\ref{lem:key_sub_process} specialized for $\SFindK$.

\begin{lemma}
\label{lem:modified_subprocess_simple_FindK_metric}
For any input $\M$ with the objective number of rewards $K$, let $\OPT$ be the optimal strategy for this instance. If $B\geq10\E[\TC(\OPT)]$, with probability at least 0.1, $\METASfindK$ (Algorithm~\ref{alg:meta_SfindK}) can collect at least $K$ units of rewards (i.e. make at least $K$ system reach their targets) with budget $O(1)B$ in expectation.
\end{lemma}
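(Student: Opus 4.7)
The plan is to instantiate the high-level outline sketched immediately before the lemma. The proof splits into three pieces: a reduction to stochastic $k$-TSP, a prefix truncation based on the non-adaptive $k$-TSP approximation, and a threshold-based sequential execution controlled by the grade.

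First, I would reduce the \MS instance $\M$ to a \StochKCost instance $\M_{ktsp}$ on the same metric by associating with each chain $\S_i$ the random variable $Y_i^{\max}$ equal to the prevailing cost of a never-quitting play of $\S_i$ (the maximum grade encountered on the random trajectory to the target). By Lemma~\ref{lm:fairness} (fairness of the teasing game $\S^T$), $\E[Y_i^{\max}]$ coincides with the expected movement cost of running $\S_i$ to completion, so $Y_i^{\max}$ is a faithful ``cost'' for the $k$-TSP surrogate. An argument analogous to those in \cite{DTW03,GJSS19} then shows $\E[\TC(\OPT_{ktsp}(\M_{ktsp},K))] = O(\E[\TC(\OPT(\M,K))])$: one converts any adaptive Markov-game strategy into a $k$-TSP strategy by finishing every chain it touches, paying $Y_i^{\max}$ in place of the true movement cost, and fairness of $\S^T$ bounds the discarded portion by a constant factor.

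Second, I apply the non-adaptive $O(1)$-approximation for \StochKCost from \cite{JLLS20} to $\M_{ktsp}$ to obtain a fixed ordering $\Pi$ with expected total (switching-plus-$Y$) cost $O(B)$. Pick the prefix $\Pipref$ to be the longest prefix of $\Pi$ whose switching cost is at most $\alpha B$ for a sufficiently small universal constant $\alpha$. By Markov's inequality applied to the non-adaptive cost on $\Pipref$, with constant probability the $K$-th order statistic of $\{Y_i^{\max}\}_{i \in \Pipref}$ is at most a value $\gamma^\star = O(B)$; call this the good event $E$. Let $\gamma_{j+1}$ be the corresponding order-statistic-based threshold computed from the suffix of $\Pipref$ starting at position $j{+}1$, so that $\gamma_{j+1} \le \gamma^\star$ on $E$ and the sequence $(\gamma_{j+1})$ is non-decreasing.

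Third, and most technical, I analyze the sequential execution: visit the chains of $\Pipref$ in the order given by $\Pi$, never revisiting; in chain $j$ advance as long as the current grade is at most $\gamma_{j+1}$, and abandon otherwise. For the success probability I note that any chain with $Y_i^{\max} \le \gamma_{j+1}$ is completed (its trajectory never crosses the threshold), so on $E$ at least $K$ chains succeed, giving constant success probability. For the expected movement cost I use that whenever $g \ge \gamma_u$ the game $\S_u(g)$ has non-negative value, and hence by Lemma~\ref{lem:optimal_solution_single_GMS} the expected movement cost on chain $i$ under threshold $\gamma_{j+1}$ is at most $\gamma_{j+1} \cdot \Pr[\text{target reached in chain } i]$; summing bounds the total expected movement cost by $\gamma^\star$ times the expected number of reached targets, which is $O(\gamma^\star \cdot K) = O(B)$ on $E$, with the complementary event contributing only an $O(B)$ tail via Markov and an application of Lemma~\ref{lm:larger_better}-style monotonicity. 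Adding the switching budget $\alpha B$ yields the claimed $O(B)$ bound in expectation.

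The hard part will be Step three: simultaneously controlling the success probability and the movement cost. The threshold $\gamma_{j+1}$ must be small enough that $\sum_j \gamma_{j+1} \cdot \Pr[\text{success on chain } j]$ stays $O(B)$, yet large enough that the thresholds do not prematurely cut off the $K$ cheapest chains in $\Pipref$. The order-statistic viewpoint together with the fairness of $\S^T$ should resolve this, but wiring up the measurability of $\gamma_{j+1}$ (it must be computable when the $j$-th switch occurs), together with the conditioning needed to couple the order-statistic bound on $\M_{ktsp}$ with the actual Markov-chain execution, is where the main technical care is required.
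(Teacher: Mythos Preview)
Your reduction to \StochKCost and the prefix truncation match the paper. But Step three has a genuine gap, and it stems from a misreading of the threshold.

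First, $\gamma_{j+1}$ in Algorithm~\ref{alg:meta_SfindK} is a \emph{single} fixed value, not a sequence indexed by the chain position. The index $j$ refers to the sorted list $\{\gamma_j\}_{j=1}^n$ of all grades, and $\gamma_{j+1}$ is chosen once, up front, by the two-sided condition $\Pr[\D^{[K]}(\S)\le\gamma_j]<0.3$ and $\Pr[\D^{[K]}(\S)\le\gamma_{j+1}]\ge 0.3$. There is no measurability issue; the threshold is deterministic given the instance.

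Second, and more seriously, your movement-cost bound does not hold. You argue that the expected movement cost is at most $\gamma^\star\cdot K$ and then assert $\gamma^\star\cdot K=O(B)$. But Lemma~\ref{lem:good_set_guarantee} only gives $\sum_{i=1}^K\D^{[i]}(\S)\le 10\PG B$, which implies $\gamma^\star\le 10\PG B$, not $\gamma^\star\le O(B/K)$. When $K$ is large this bound blows up by a factor of $K$. (And the algorithm does not stop after $K$ successes, so the factor $K$ is not even available.)

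The paper's route is quite different. It introduces an auxiliary strategy $\ALGGITTINS$ that plays $\Pipref$ by the Gittins rule with a fair-movement-cost budget of $10\PG B$, and uses fairness of the teasing game to get $\E[\MC(\ALGGITTINS)]\le 10\PG B$ directly. The \emph{lower} threshold condition $\Pr[\D^{[K]}\le\gamma_j]<0.3$ is then essential: it forces, with probability at least $0.65$, the event $\D^{[K]}\ge\gamma_{j+1}$ together with $\ALGGITTINS$ succeeding, and on that event $\ALGGITTINS$ visits every state that $\METASfindK$ visits, so $\MC(\ALGGITTINS)\ge\MC(\METASfindK)$. The expected-cost bound (Lemma~\ref{lem:bounded_expected_cost}) is finished by contradiction: the per-chain movement cap $100\PG B$ (Condition~(I), which you omit) makes the per-chain costs bounded independent random variables, so a Chernoff bound converts a hypothetical large expectation for $\METASfindK$ into a high-probability large realization, which via the coupling above would contradict $\E[\MC(\ALGGITTINS)]\le 10\PG B$. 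Your proposal has neither the lower threshold condition, nor the per-chain cap, nor the $\ALGGITTINS$ comparison; without these ingredients the $O(B)$ movement bound does not go through.
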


It suffices to prove that sub-procedure \METASfindK (Algorithm~\ref{alg:meta_SfindK}) satisfies this lemma, which is the main task of this section.
\subsection{Stochastic $k$-TSP.}
The sub-procedure \METASfindK makes use of an $O(1)$-approximation  for the \StochKCost problem \cite{ENS18,JLLS20}, defined as follows.

\begin{definition}[\StochKCost]
We are given a metric $\M=(\mathbf{S}, d)$ with a $\depot \in \mathbf{S}$ and each vertex $v \in \mathbf{S}$ has an independent stochastic selection cost $\CSC(v) \in \mathbb{R}_{\geq 0}$. All cost distributions are given as input but the actual cost instantiation $\CSC(v)$ is only known after vertex $v$ is visited. 
Suppose a vertex $v$ can only be \emph{selected} if $v$ is visited.\footnote{\label{footnote:DistSelect} 
In the original version in \cite{JLLS20}, there is one more condition that we are currently at vertex $v$ to select it.
Up to a factor of 2, our version of the problem is equivalent to the original version.} 
The goal is to adaptively find a tour $T$ originating from the  $\depot$ and select a set $S$ of $k$ visited vertices while minimizing the expected {\em total} cost, which is the sum of the length of $T$ and the cost of the selected vertices:
\[
\E \Big[\mathrm{Length}(T) + \sum_{v \in S} \CSC(v) \Big].
\]
\end{definition}

\begin{theorem}[Theorem 2 in \cite{JLLS20}]
\label{thm:const_scktsp}
There is a non-adaptive constant factor approximation algorithm $\ALGStochKCost$ for \StochKCost.
\end{theorem}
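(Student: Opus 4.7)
The statement is cited as Theorem~2 of~\cite{JLLS20}; below is the route I would take to prove it from scratch. The approach has the two layers typical of stochastic covering: an outer doubling loop that guesses the expected optimum cost $B^{\star}$ to within a factor of two, and an inner routine that, for a fixed budget $B = \Theta(B^{\star})$, outputs a non-adaptive tour together with a committed selection rule whose expected total cost is $O(B)$ and which collects $k$ vertices with constant probability. Running the inner routine geometrically and stopping at the first success pays only a constant factor, along the same lines as Lemma~\ref{lem:key_sub_process}.

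\textbf{Truncation.} Fix a budget $B$. Replace each selection cost by $\widetilde\CSC(v) := \min(\CSC(v), B)$. The tail contribution $\sum_{v \in S^{\star}} \E[\CSC(v) - \widetilde\CSC(v)]$ incurred by the optimal selection set $S^{\star}$ is at most $O(B^{\star})$: whenever OPT pays more than $B$ to a single vertex, that event alone accounts for a constant fraction of OPT's expected cost by Markov's inequality. Hence one may assume every selection cost lies in $[0,B]$, paying only a constant factor in the approximation.

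\textbf{LP relaxation and rounding.} Introduce a fractional variable $x_v \in [0,1]$ per vertex and write a natural covering LP: a constraint $\sum_v x_v \cdot \Pr[\widetilde\CSC(v)\leq\theta]\geq k$ for a chosen threshold $\theta$, a flow-based lower bound on the tour length that must visit $\{v:x_v>0\}$, and objective equal to the fractional tour length plus $\sum_v x_v \cdot \E[\widetilde\CSC(v)\cdot\mathbf{1}\{\widetilde\CSC(v)\leq\theta\}]$. Plugging in the marginal visiting probabilities of the optimal adaptive policy yields a feasible LP solution of value $O(B^{\star})$. Round the fractional tour using Garg's primal--dual \KMST{} algorithm, followed by MST-doubling and shortcutting, to obtain a deterministic non-adaptive tour of length $O(B)$; then commit to selecting the first $k$ visited vertices whose realized truncated cost falls below $\theta$.

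\textbf{Main obstacle.} The delicate coupling is between the deterministic tour and the stochastic selection. The threshold $\theta$ must be chosen so that along the rounded tour the expected number of ``cheap'' vertices is at least, say, $2k$, allowing Chernoff or Freedman concentration (as recorded in the preliminaries) to guarantee $k$ cheap vertices with constant probability, while simultaneously keeping the expected conditional selection cost $\sum_v x_v \cdot \E[\widetilde\CSC(v)\cdot\mathbf{1}\{\widetilde\CSC(v)\leq\theta\}]$ within $O(B)$. Balancing these two requirements, while also ensuring that the rounded tour structure (which must serve both the covering and the selection objective simultaneously) does not blow up either, is the technical core of~\cite{JLLS20}, and is where the final constant factor ultimately comes from.
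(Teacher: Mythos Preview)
The paper does not prove this theorem. It is quoted verbatim as Theorem~2 of \cite{JLLS20} and used purely as a black box: the only content the paper adds is the remark that the algorithm is non-adaptive in the sense of producing a fixed ordering $\Pi$ of the vertices, which is then consumed by $\METASfindK$. There is therefore nothing in this paper to compare your sketch against.

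Your outline is a plausible reconstruction of how a result like \cite{JLLS20} is typically obtained (budget guessing by doubling, cost truncation at the current budget, an LP whose integral rounding goes through a \KTSP/\PCST-style primal--dual, followed by a threshold selection rule and a Chernoff/Freedman step). One caution on your truncation paragraph: the sentence ``whenever OPT pays more than $B$ to a single vertex, that event alone accounts for a constant fraction of OPT's expected cost by Markov's inequality'' is not how the bound works. Markov controls the probability that OPT's \emph{total} cost exceeds $B$, not the per-vertex tails; the right statement is simply that the truncated instance has pointwise smaller costs, so the optimal adaptive value on the truncated instance is at most that on the original, and hence the LP relaxation on truncated costs is still a valid lower bound up to a constant. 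You then separately need to argue that your algorithm's expected \emph{un}-truncated selection cost is $O(B)$, which is where the threshold $\theta$ and the ``never select a vertex whose realized cost exceeds $\theta$'' rule do real work. As written, that last step is asserted rather than shown, and it is exactly the place where a naive argument breaks.
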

\begin{remark}
It is important to note that the strategy in \cite{JLLS20}
is non-adaptive.
Here a non-adaptive strategy is an ordering $\Pi$ of all vertices and $\Pi$ is independent of the realization of the costs.
Note that the strategy visits the vertices according to the ordering $\Pi$ and may stop and choose $k$ visited vertices before visiting all vertices by some criterion, depending on the realization of the costs.
We use $\PG =O(1)$ to be the constant approximation ratio of this algorithm.
\end{remark}

As mentioned, $\ALGStochKCost$ is not only a sub-procedure which can output an ordering of the chains (vertices). In fact there is also some probing and selection process after outputting the ordering in $\ALGStochKCost$.
We let $\CSC(\ALGStochKCost)$ and $\SC(\ALGStochKCost)$ be the (random) selection cost and switching cost of $\ALGStochKCost$ respectively.



\subsection{Reduction to \StochKCost}
\subsubsection{A Fair Game \SFindKF.}
\label{sec:fair_game}
We define another new game which is closely related to the teasing game $\S^T$ (see definition~\ref{def:teasing_game}) and plays a key role in the following proof.
\begin{definition}[\SFindKF]
We are given a finite metric space $\M=(\mathbf{S}\cup\{\mathbf{R}\},\dist)$ (there is no additional assumption on metric $\dist$). 
Each node $\S_i\in \mathbf{S}$ is identified with a \MS 
$\S_i=\langle V_i,P_i,C_i,s_i,t_i \rangle$. 
Similarly, at the beginning of the game, the player is at the root $\mathbf{R}$, and needs to pay the switching cost $\dist(\mathbf{R},\S_i)$ if he wants to play \MS $\S_i$. 
Switching from $\S_i$ to $\S_j$ incurs a switching cost of  $\dist(\S_i,\S_j)$.
Let $V_i^F\subseteq V_i$ denote the subset of states $\S_i$ has been transited to by the player during the game.
If $t_i\in V_i^F$, he can pay the prevailing cost as the fair movement cost and get one reward from $\S_i$.
The objective is to adaptively collect at least $K$ units of rewards (making at least $K$ \MS reach their targets), while minimizing the expected total cost (fair movement cost plus switching cost).
\end{definition}
 
Let  $\FMC(\P)$, $\SC(\P)$ and $\Ftotalcost(\P)$ be the (random) fair movement cost, switching cost and total cost of any strategy $\P$ under the rule of \SFindKF respectively. 
Let $\FOPT$ be the optimal strategy to \SFindKF.
We have the following claim. The proof can be found in Appendix~\ref{sec:omitted proof}.

\begin{claim}
\label{clm:relation_two_games}
The following inequality holds:
$
    \E[\Ftotalcost(\P)]=\E[\SC(\FOPT)+\FMC(\FOPT)]\leq \E[\TC(\OPT)].
$
\end{claim}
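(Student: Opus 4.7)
The plan is to exhibit an explicit strategy $\P$ for \SFindKF by having $\P$ imitate $\OPT$ move by move, and then show $\E[\Ftotalcost(\P)]\leq\E[\TC(\OPT)]$. Since $\FOPT$ minimizes expected total cost among all \SFindKF strategies, this will deliver
\[
\E[\SC(\FOPT)+\FMC(\FOPT)]=\E[\Ftotalcost(\FOPT)]\leq\E[\Ftotalcost(\P)]\leq\E[\TC(\OPT)],
\]
which is the content of the claim (the first equality is just the definition of $\Ftotalcost$).

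Concretely, I would let $\P$ make the same decisions as $\OPT$: which chain to play at each step, when to switch, and when to stop. Under this coupling, the sequence of switches executed by $\P$ is literally the one executed by $\OPT$, so $\SC(\P)=\SC(\OPT)$ sample-path-wise. The heart of the argument is controlling the fair movement cost chain by chain. Fix a system $\S_i$ and condition on all randomness outside $\S_i$ (the trajectories of the other chains together with any external randomness $\OPT$ consults); by independence of the chains, the resulting conditional rule for advancing $\S_i$ is a well-defined stopping strategy for the single-chain teasing game $\S_i^T$ of Definition~\ref{def:teasing_game}, at each step either advancing $\S_i$ by one state or quitting. Under this identification the cost paid inside $\S_i^T$ equals exactly the movement cost $\OPT$ spends on $\S_i$, while the profit collected equals the prevailing cost on $\S_i$ if $\OPT$ drives $\S_i$ to its target and $0$ otherwise---which by construction is precisely the fair movement cost $\P$ pays on $\S_i$. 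By Lemma~\ref{lm:fairness} the game $\S_i^T$ is fair, so every strategy has nonpositive expected utility, yielding expected profit $\leq$ expected cost. Integrating over the outside randomness and summing over $i$ gives $\E[\FMC(\P)]\leq\E[\MC(\OPT)]$, which combined with $\SC(\P)=\SC(\OPT)$ produces the required $\E[\Ftotalcost(\P)]\leq\E[\TC(\OPT)]$.

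The one subtle point is the per-chain reduction above: because $\OPT$ can base its decisions on $\S_i$ on the histories of other chains, the rule it induces on $\S_i^T$ is random rather than deterministic, so Lemma~\ref{lm:fairness} does not apply to it directly. This is where I expect to be most careful. Independence of the chains makes it benign: after conditioning on the outside randomness, the conditional rule is a deterministic stopping strategy for $\S_i^T$, fairness applies pointwise, and a single application of Fubini integrates the bound back. Everything else---equality of switching costs under the coupling, and additivity of fair movement cost across chains---is immediate from the definition of \SFindKF.
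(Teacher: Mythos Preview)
Your proposal is correct and follows essentially the same approach as the paper: run $\OPT$ as a strategy for \SFindKF (this is your $\P$), use fairness of the teasing games $\S_i^T$ to deduce $\E[\FMC(\OPT)]\leq\E[\MC(\OPT)]$, and then invoke the optimality of $\FOPT$ to conclude. The paper states the fairness step in one line without the conditioning/Fubini justification you supply, so your treatment of the interleaving subtlety is actually more careful than the paper's, but the underlying idea is identical.
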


\subsubsection{The reduction.}
\label{subsec:reduction}
At a high level, $\METASfindK$ reduces an instance of \SFindK 
to an instance of \StochKCost which has the same metric.
A Markov chain $\S$ in \SFindK is replaced by a random variable, i.e. the random selection cost.
More specifically, for each $\S=\langle V,P,C,s,t \rangle$, we consider the teasing game $\S^T$. 
Let $\D(\S)$ be the (random) prevailing cost of a non-quitting strategy
where being non-quitting means that the player continues to play $\S^T$ until it reaches the target state. 
Then we treat $\S$ as a vertex in \StochKCost and let its selection cost distributed as $\D(\S)$. 

We note that the distribution of $\D(\S)$
(in particular, $\Pr[\D(\S)\leq B]$ for any real number $B$) 
can be computed efficiently
using the technique for computing grade
(see Lemma~\ref{lm:cal_cdf} in Appendix~\ref{appd:computation_cdf}). 

We establish a simple relation between 
the optimal cost of \StochKCost and optimal cost of our original problem \SFindK.
We let $\E[\CSC(\P)]$, $\E[\SC(\P)]$, $\E[\TSPCOST(\P)]$ be the expected selection cost, expected switching cost and 
expected total cost of any strategy $\P$ for the \StochKCost problem, respectively. 
Let $\SCOPT$ be the optimal strategy to \StochKCost. 
Note that we use the same notation $\SC(\cdot)$ to represent the switching cost for all games we have defined since the switching costs are counted in the same way. 

Given any instance $\M$, let $\OPT$ be the optimal strategies for this instance $\M$ for our original problem \SFindK.
Now, we claim that 
the optimal cost of \StochKCost is no more than the optimal cost of our original problem. 
The proof of Claim~\ref{clm:games_reduction} is via the fair game we introduced in Section ~\ref{sec:fair_game}
and can be found in Appendix~\ref{sec:omitted proof}.

\begin{claim}
\label{clm:games_reduction}
With the notations defined above, it holds that
\begin{align*}
    \E[\TSPCOST(\SCOPT)]=\E[\SC(\SCOPT)+\CSC(\SCOPT)]\leq \E[\TC(\OPT)].
\end{align*}
\end{claim}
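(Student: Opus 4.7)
The plan is to chain two inequalities through the fair game:
\[
\E[\TSPCOST(\SCOPT)] \leq \E[\Ftotalcost(\FOPT)] \leq \E[\TC(\OPT)].
\]
The second inequality is exactly Claim~\ref{clm:relation_two_games}, so the new content is the first one: the optimum of the derived $\StochKCost$ instance is no larger than the optimum of the fair game $\SFindKF$.

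The key observation is that in $\SFindKF$, advancing a chain incurs no per-step movement cost: the only (fair) movement cost is the prevailing cost, paid as a single lump sum upon reaching the target \emph{and} choosing to collect the reward. Consequently, quitting a chain partway through is weakly dominated by playing it all the way to its target and then declining to collect: both options contribute zero fair movement cost on that chain, but the latter leaves the player with perfect information about the chain's final prevailing cost $\D(\S_i)$. Therefore, without loss of generality the optimal $\SFindKF$ strategy $\FOPT$ may be assumed to play every visited chain to its target and then decide whether to collect based on the revealed prevailing cost.

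Under this reformulation $\FOPT$ is literally a strategy for the derived $\StochKCost$ instance: it adaptively visits vertices (paying the same switching/travel cost, since the underlying metric is identical), observes the realized selection cost $\D(\S_i)$ (the prevailing cost of the fully-played trajectory on $\S_i$), and selects a subset of vertices corresponding to the chains for which it chose to collect. Hence $\E[\TSPCOST(\SCOPT)] \leq \E[\Ftotalcost(\FOPT)]$, and composing with Claim~\ref{clm:relation_two_games} yields the desired bound. The main subtlety I expect is justifying the ``play every visited chain to its target'' reformulation rigorously: the inserted completion segments are pure executions of the Markov dynamics that only generate observations without committing to any new cost, so the strategy remains non-anticipating; and by definition of $\D(\S_i)$ as the prevailing cost of a non-quitting trajectory, the player's observation at target coincides in distribution with $\D(\S_i)$, so the reformulated strategy is a bona fide $\StochKCost$ strategy with expected cost equal to the $\SFindKF$ cost of $\FOPT$.
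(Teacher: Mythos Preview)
Your proposal is correct and matches the paper's approach almost exactly. Both arguments chain through $\FOPT$, invoke Claim~\ref{clm:relation_two_games} for the second inequality, and for the first inequality exploit that advancing a chain in $\SFindKF$ is free so that $\FOPT$ may as well reveal the full trajectory (hence $\D(\S_i)$) of every visited chain; the paper phrases this as constructing a strategy $\ALG^F$ that ``observes the whole sample path on $\S_i$'' whenever $\FOPT$ first visits it, while you phrase it as reformulating $\FOPT$ to play every visited chain to its target---these are the same idea.
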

Recall we are under the assumption that $\E[\TC(\OPT)]\leq B/10$. 
Then we have
\begin{align*}
    \E[\TSPCOST(\ALGStochKCost)]\leq & \PG \E[\TSPCOST(\SCOPT)]\\
    \leq & \PG \E[\TC(\OPT)] \\
    \leq & \PG B/10,
\end{align*}
where the first line is by Theorem~\ref{thm:const_scktsp} and the second line follows from Claim~\ref{clm:games_reduction}.

\subsection{Sub-procedure $\METASfindK$.}
Now we provide a high level description of $\METASfindK$.
The details can be found in Algorithm~\ref{alg:meta_SfindK}.
We first transform the problem to a \StochKCost instance $\M_{ktsp}$,
by reducing each Markov chain to a related random variable.
Then we use the constant factor approximation algorithm 
$\ALGStochKCost$ developed in \cite{JLLS20} to obtain an ordering
$\Pi$ of vertices (chains).
Let $\Pipref$ be the prefix of $\Pi$ such that the switching cost
for traversing $\Pipref$ is no larger than $10\PG B$.
Since the expected cost of the
$\ALGStochKCost$ of instance $\M_{ktsp}$ is at most $\PG B/10$, with a large constant probability, one can collect $K$ units of rewards from $\Pipref$, by Markov inequality.
Obviously, ignoring the switching cost, the optimal way 
(optimal in terms of movement cost) of collecting $K$ units of rewards from $\Pipref$ is to play the chains in $\Pipref$ 
according to grade. However, such play may switch frequently among the chains and incur a huge switching cost.
To keep the switching cost under control, we visit each chain
in $\Pipref$ only once (by the way we define the prefix, we can see the switching cost is certainly within the given budget). 

A key question now is to decide when 
to switch to the next if the current state is not economical to keep playing (it requires a large expected movement cost to reach the target of the chain). 
It turns out that we can find a threshold $\gamma_{j+1}$ (which is computed from the $K$-th order statistics
of $\{\D(\S)\}_{\S\in \Pipref}$), such that 
if the grade of the current state is larger than the threshold,  
or our algorithm has spent too much on this chain, say a movement budget $100\PG B$ for each chain,
then it is time to switch to the next chain on the  $\Pipref$.

As mentioned, our algorithm needs to estimate the distribution of the $K$-th {\em order statistic} for a collection of random variables
(that is the $K$-th smallest value).
This can be approximated simply either by sampling (Monte Carlo) or 
the Bapat-Beg theorem and the fully polynomial randomized approximation scheme (FPRAS) for estimating the permanent \cite{JSV04} (see the details in Appendix~\ref{sec:est_order_stat}).
In the following, we safely ignore the estimation error and error probability for clarity purpose.

\begin{algorithm2e}[ht]
\caption{Algorithm  $\METASfindK$}
\label{alg:meta_SfindK}
{\bf  Input:} The instance $\M$, objective number of rewards $K$\\
{\bf Process:}\\

Compute the grades of all states and sort them in increasing order $\{\gamma_j\}_{j=1}^{n}$\;


Reduce the instance $\M$ to a \StochKCost instance $\M_{ktsp}$, by replacing each $\S\in \M$ with a vertex in $\M_{ktsp}$ with selection
cost distributed as $\D(\S)$
(See the definition of $\D(\S)$ in Section~\ref{subsec:reduction})\;

$\Pi  \leftarrow  \ALGStochKCost(\M_{ktsp})$ \label{lin:good_set}
($\Pi$ is an ordering of vertices)
\;

Let $\Pipref=\{\S_0=\mathbf{R},\S_1,\S_2,\cdots,\S_m\}$ be the longest prefix of $\Pi$ such that $\sum_{i=0}^{m-1}d(\S_i,\S_{i+1})\leq 10\PG B$\;

Estimate the distribution of $K$-th order statistic $\D^{[K]}(\S)$ for $\{\D(\S)\}_{\S\in \Pipref}$\;

Find the unique $\gamma_j$, $j\in [n]$ such that
$\Pr[\D^{[K]}(\S)\leq \gamma_j]< 0.3\wedge \Pr[\D^{[K]}(\S)\leq \gamma_{j+1}]\geq 0.3$ \label{lin:threshold}\;
\For{ $i=1,\cdots,m$}
{
Pay switching cost $d(\S_{i-1},\S_i)$ and switches to $\S_i$\;

Play $\S_i$ until it reaches its target state or some condition(s) in $\mathcal{A}$ (defined below) holds.
\label{lin:single_OPT}\;

If $\S_i$ reaches its target state, let $K\leftarrow K-1$\;

}

{\bf Return:} The updated instance $\M$; the remaining number of target states $K$\;
\BlankLine\;
{\bf Define:} Conditions $\mathcal{A}$ (bad events):\;
(I) The next move on $\S_i$ makes the movement cost on $\S_i$ exceed $100\PG B$\;
(II) The current grade of $\S_i$ is more than $\gamma_{j}$\;
\end{algorithm2e}


\subsection{Analysis.}
Recall that it suffices to prove that \METASfindK satisfies Lemma~\ref{lem:modified_subprocess_simple_FindK_metric}.
We only need to consider the case $B\geq 10\E[\TC(\OPT)]$, and prove two guarantees of \METASfindK:\\
i) It can succeed to make $K$ \MS reach their target states with probability at least 0.1;\\
ii) The expectation of its total cost is bounded by $400\PG B$.


Recall that $\D(\S)$ represents the (random) selection cost of $\S$ under the rule of \StochKCost.
The set $\Pipref$ we found in the line~\ref{lin:good_set} of Algorithm~\ref{alg:meta_SfindK} has some good properties stated below:
\begin{lemma}
\label{lem:good_set_guarantee}
Suppose $B\geq 10\E[\TC(\OPT)]$. The set $\Pipref$ found in the line~\ref{lin:good_set} of Algorithm~\ref{alg:meta_SfindK} satisfies that with probability at least 0.99, one has $\sum_{i=1}^{K}\D^{[i]}(\S)\leq 10\PG B$, where $\D^{[i]}(\S)$ is the $i$-th order statistic for $\{\D(\S)\}_{\S\in \Pipref}$.
\end{lemma}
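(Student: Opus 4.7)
The plan is to combine the expected-cost guarantee of $\ALGStochKCost$ on the reduced \StochKCost instance $\M_{ktsp}$ with a single Markov inequality and a one-line order-statistic dominance argument. The non-adaptivity of $\ALGStochKCost$ (the remark after Theorem~\ref{thm:const_scktsp}) is exactly what makes $\Pipref$, defined purely from the fixed ordering $\Pi$, the right object to compare against here.

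First I will invoke the chain of bounds established in Section~\ref{subsec:reduction}: under the hypothesis $B \geq 10\,\E[\TC(\OPT)]$, Claim~\ref{clm:games_reduction} combined with Theorem~\ref{thm:const_scktsp} gives $\E[\TSPCOST(\ALGStochKCost)] \leq \PG B/10$. Applying Markov's inequality to the nonnegative total-cost random variable $\TSPCOST(\ALGStochKCost) = \SC(\ALGStochKCost) + \CSC(\ALGStochKCost)$ yields
\[
\Pr\bigl[\TSPCOST(\ALGStochKCost) \leq 10\PG B\bigr] \;\geq\; 1 - \tfrac{\PG B/10}{10\PG B} \;=\; 0.99,
\]
so on this good event both nonnegative summands $\SC(\ALGStochKCost)$ and $\CSC(\ALGStochKCost)$ are individually at most $10\PG B$.

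Next comes the dominance argument. Let $W$ be the (random) set of vertices that $\ALGStochKCost$ probes as it walks the fixed ordering $\Pi$ starting from the root $\mathbf{R}$. Because $\Pipref$ is by construction the \emph{longest} prefix of $\Pi$ whose cumulative traversal cost is at most $10\PG B$, and because $\SC(\ALGStochKCost) \leq 10\PG B$ on the good event, every vertex $\ALGStochKCost$ touches must already lie in $\Pipref$; that is, $W \subseteq \Pipref$. Since $\ALGStochKCost$ ultimately selects some $K$-element subset of $W$, its selection cost $\CSC(\ALGStochKCost)$ is at least the sum of the $K$ smallest realized costs in $W$, and enlarging the pool from $W$ to $\Pipref \supseteq W$ only weakly decreases each order statistic. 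Chaining these three observations,
\[
\sum_{i=1}^{K} \D^{[i]}(\S)\big|_{\Pipref} \;\leq\; \sum_{i=1}^{K} \D^{[i]}(\S)\big|_{W} \;\leq\; \CSC(\ALGStochKCost) \;\leq\; 10\PG B,
\]
which is the claimed bound on the good $0.99$-probability event.

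I do not anticipate any real obstacle. The only points needing a sentence of care are (i) that the inclusion $W \subseteq \Pipref$ genuinely uses both the maximality in the definition of $\Pipref$ and the non-adaptivity of $\ALGStochKCost$ (so that the probed prefix lives in the same ordering $\Pi$ used to define $\Pipref$), and (ii) that $|W| \geq K$, and hence $|\Pipref| \geq K$, because $\ALGStochKCost$ is required to select $K$ visited vertices, which makes the order statistics well defined. Both points are immediate from the problem formulation.
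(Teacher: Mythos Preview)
Your proposal is correct and follows essentially the same approach as the paper: both use the bound $\E[\TSPCOST(\ALGStochKCost)]\le \PG B/10$, apply Markov's inequality at threshold $10\PG B$, and then argue that whenever the total cost of $\ALGStochKCost$ stays below $10\PG B$ the visited vertices lie inside $\Pipref$ and the $K$ smallest realized costs there are dominated by $\CSC(\ALGStochKCost)$. The only cosmetic difference is that the paper phrases the last step as a contrapositive (if $\sum_{i=1}^K \D^{[i]}(\S)>10\PG B$ then either $\SC$ or $\CSC$ exceeds $10\PG B$), whereas you argue directly via the inclusion $W\subseteq \Pipref$ and order-statistic monotonicity; your explicit use of non-adaptivity to justify that $W$ is a prefix of $\Pi$ is a nice touch that the paper leaves implicit.
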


\begin{proof}
As $B\geq 10\E[\TC(\OPT)]$, we know that $B\geq 10\E[\CSC(\SCOPT)+\SC(\SCOPT)]$ by Claim~\ref{clm:games_reduction}. 

By Theorem~\ref{thm:const_scktsp}, we know that
$\PG B\geq 10\E[\CSC(\ALGStochKCost)+\SC(\ALGStochKCost)]$.
By Markov Inequality, we know that
\begin{align}
\label{eq:tail_ktsp}
    \Pr[\CSC(\ALGStochKCost)+\SC(\ALGStochKCost)\geq 10\PG B]\leq 0.01.
\end{align}

If $\sum_{i=1}^{K}\D^{[i]}(\S)> 10\PG B$, then it means that
\begin{align}
\label{eq:contradition}
    \CSC(\ALGStochKCost)+\SC(\ALGStochKCost)\geq 10\PG B.
\end{align}
More specifically, if $\CSC(\ALGStochKCost)\leq 10\PG B$, as $\sum_{i=1}^{K}\D^{[i]}(\S)> 10\PG B$, then $\ALGStochKCost$ needs to visit vertices outside $\Pipref$ to select some ``cheap'' vertices, which incurs a switching cost larger than $10\PG B$.
So either switching cost or the selection cost of $\ALGStochKCost$ is larger than $10\PG B$.

By combining Equation~\ref{eq:tail_ktsp} and Equation~\ref{eq:contradition}, we get
\begin{align*}
    \Pr\left[\sum_{i=1}^{K}\D^{[i]}(\S)\leq 10\PG B\right]\geq 0.99.
\end{align*}

\end{proof}


We make use of the good properties of $\Pipref$ via the fair game we define in Section ~\ref{sec:fair_game}.
Define an important intermediate strategy 
$\ALGGITTINS$ for the input istance $\M$ under the rule of $\SFindKF$ as follows:
it always chooses the \MS in $\Pipref$
with the smallest grade  (breaking ties arbitrarily) to play under the condition that $\FMC(\ALGGITTINS)\leq 10\PG B$, pays the {\em fair} movement cost {\em immediately} when some chain reaches its target and halts when it makes $K$ systems reach their targets or the next step would break the fair movement cost budget.

To be more clear on the stopping condition of $\ALGGITTINS$ on the cost budget, suppose $\ALGGITTINS$ has not made $K$ chains reach targets, the fair movement cost spent is $X$ and the smallest grade among available chains (those not in the target states) is $Y$. Then $\ALGGITTINS$ chooses the chain with smallest grade $Y$ if $X+Y\leq 10\PG B$, and halts if no such chain exists.\footnote{In fact, if there is no switching cost and no movement cost budget, $\ALGGITTINS$ is the optimal solution to make $K$ chains in $\Pipref$ to reach targets with the minimum expected movement cost.}
Then we have the following claim:

\begin{claim}
\label{clm:guarantee_on_OPTCB}
With probability at least 0.99, $\ALGGITTINS$ can succeed to make $K$ systems reach their targets.
Further, one has:
\label{clm:low_expected_cost}
\[
\E[\MC(\ALGGITTINS)]\leq 10\PG B.
\]
\end{claim}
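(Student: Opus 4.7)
The plan is to separate the two assertions. The bound $\E[\MC(\ALGGITTINS)] \leq 10\PG B$ is immediate from the construction of $\ALGGITTINS$: the stopping rule ``play only if $X + Y \leq 10\PG B$'' guarantees that on every realization $\MC(\ALGGITTINS) \leq 10\PG B$ pointwise, and the expectation inherits the same bound. The real work is the success probability, which I would obtain via a pathwise statement: whenever $\sum_{i=1}^{K} \D^{[i]}(\S) \leq 10\PG B$ (order statistics taken over $\Pipref$), $\ALGGITTINS$ collects $K$ rewards. Combined with Lemma~\ref{lem:good_set_guarantee}, which supplies $\Pr[\sum_{i=1}^{K} \D^{[i]}(\S) \leq 10\PG B] \geq 0.99$, this yields the claimed $0.99$ success probability.

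To establish the pathwise statement I would prove two structural facts about unconstrained grade-greedy on $\Pipref$ under the \SFindKF rules. Fact~1 (completion in $\D$-order): grade-greedy completes the chains in strictly increasing order of $\D(\S)$. I would argue by contradiction: if $\D(X) < \D(C)$ but $C$ completes before $X$, consider the first moment during the run at which $C$'s current-state grade attains $\D(C)$ (which must happen, by definition of $\D$ as the max grade along the walk). At the step where $C$ is advanced out of that state, grade-greedy would have preferred $X$, whose current-state grade never exceeds $\D(X) < \D(C)$ and which is still active by assumption --- contradicting the greedy rule. Fact~2 (fair cost per chain): the fair movement cost paid for each completed chain is exactly $\D(\S)$, since interleaving does not alter the walk on $\S$, so the prevailing cost at the moment the walk first hits the target equals the fully-played value $\D(\S)$. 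Facts~1 and~2 together imply that unconstrained grade-greedy pays total fair cost $\sum_{i=1}^{K} \D^{[i]}(\S)$ and completes exactly the $K$ chains attaining the $K$ smallest $\D$-values.

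Finally, I would verify that the stopping condition $X + Y \leq 10\PG B$ is never tripped on a realization with $\sum_{i=1}^{K} \D^{[i]}(\S) \leq 10\PG B$. At any decision point where $j < K$ chains have already completed, Fact~1 gives $X = \sum_{i=1}^{j} \D^{[i]}(\S)$, while $Y$ --- the smallest current-state grade among remaining chains --- is bounded by each remaining chain's own $\D$-value (grades are pointwise bounded by the max along the walk), so $Y \leq \D^{[j+1]}(\S)$. Hence $X + Y \leq \sum_{i=1}^{j+1} \D^{[i]}(\S) \leq \sum_{i=1}^{K} \D^{[i]}(\S) \leq 10\PG B$ and $\ALGGITTINS$ proceeds. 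Iterating across decision points, the algorithm reaches $K$ completions on every such realization. The main obstacle is Fact~1: isolating the peak-grade moment along $C$'s walk and exhibiting the greedy violation there; once Fact~1 is in place, Fact~2 and the stopping-rule check are essentially bookkeeping.
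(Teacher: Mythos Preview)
Your argument for the success probability is correct and is essentially a fleshed-out version of the paper's proof: the paper just asserts that on the event $\sum_{i=1}^{K}\D^{[i]}(\S)\leq 10\PG B$ the algorithm succeeds, while you supply the mechanism via Facts~1 and~2 and the stopping-rule check. Both facts are true and your contradiction argument for Fact~1 is sound.

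Your argument for the movement-cost bound, however, has a genuine gap. The quantity $X$ in the stopping rule is the \emph{fair} movement cost $\FMC$ (the accumulated prevailing costs of completed chains), not the actual movement cost $\MC$. These are different, and $\MC(\ALGGITTINS)$ is \emph{not} bounded pointwise by $10\PG B$. For a minimal counterexample, take a single chain with states $s,t$, $P_{s,t}=p$, $P_{s,s}=1-p$, $C_s=1$: the grade is $\gamma_s=1/p$, so with budget $>1/p$ the stopping rule never fires, and $\MC$ is a $\mathrm{Geom}(p)$ random variable, unbounded. What the stopping rule \emph{does} bound (pointwise) is $\FMC(\ALGGITTINS)$ --- and your Fact~1 is precisely what makes this clean, since after $j$ completions one has $\FMC=\sum_{i\le j}\D^{[i]}$ and the check $X+Y\le 10\PG B$ held at the peak of the $j$-th chain.

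The missing step is the equality $\E[\MC(\ALGGITTINS)]=\E[\FMC(\ALGGITTINS)]$, which is where the paper invokes the fairness of the teasing games (Lemma~\ref{lm:fairness}). Because grade-greedy only switches away from a chain $\S$ at the moment $\S$'s grade attains a new maximum --- and the same holds when the budget stopping rule fires, since the threshold $10\PG B - X$ was respected at every prior play of $\S$ in the current session --- $\ALGGITTINS$ only quits each $\S$ at the beginning of an epoch. Thus it is an optimal player in every teasing game $\S^T$, so on each chain expected movement cost equals expected profit, which equals the expected prevailing-cost contribution. Summing over chains gives $\E[\MC]=\E[\FMC]\le 10\PG B$. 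Without this fairness step you only get $\E[\FMC]\le\E[\MC]$ (the wrong direction), so the gap is not cosmetic.
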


\begin{proof}
The probability of success is a direct corollary of Lemma~\ref{lem:good_set_guarantee}, as the fair movement cost of a non-quitting player on single $\S$ is exactly distributed as the selection cost $\D(\S)$.
By Lemma~\ref{lem:good_set_guarantee},
one has
\begin{align*}
    \Pr\left[\sum_{i=1}^{K}\D^{[i]}(\S)\leq 10\PG B\right]\geq 0.99,
\end{align*}
where $\D^{[i]}(\S)$ is the $i$-th order statistic for $\{\D(\S)\}_{\S\in \Pipref}$.
Suppose there are $m$ non-quitting players and every one plays one chain on $\Pipref$. With probability at least 0.99, the summation of the smallest $k$ fair movement costs they pay will be no more than $10\PG B$,
in which case $\ALGGITTINS$ can succeed to make $K$ systems reach their targets.

The bound on movement cost follows from the definition and the relation that
\begin{align*}
    \E[\MC(\ALGGITTINS)]= \E[\FMC(\ALGGITTINS)]\leq 10\PG B,
\end{align*}
where the first equality comes from that $\ALGGITTINS$ is a fair player under the rule of \SFindKF.
More specifically, we can observe that $\ALGGITTINS$ is playing a series of teasing game $\S^T$, and it can get the profits under the rule of $\S^T$ and need to pay the fair movement cost under the rule of \SFindKF instead. 
Moreover, its ``expected fair movement cost'' equals to ``expected profits''.

As it plays all of teasing games optimally by Lemma~\ref{lm:fairness}, the ``expected profits'' equals to its ``expected movement cost''.
Thus, we get the equality.
\end{proof}

Recall again the objective is to show that $\METASfindK$ satisfies the precondition of Lemma~\ref{lem:modified_subprocess_simple_FindK_metric}. 
We prove the guarantee on success probability first:
\begin{lemma}[Success probability]
With probability at least 0.1, \METASfindK can make at least $K$ \MS reach their target states.
\end{lemma}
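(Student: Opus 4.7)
The plan is to lower-bound $\Pr[\METASfindK\text{ succeeds}]$ by exhibiting two events whose joint occurrence forces the algorithm to collect $K$ rewards, and then bounding that joint probability from below by $0.1$. The first event $E_1$ says that at least $K$ chains in $\Pipref$ have prevailing cost low enough that condition (II) (the grade threshold $\gamma_j$) never triggers when $\METASfindK$ plays them non-quittingly to target; the second event $E_2$ says that on those chains the actual movement cost stays within the per-chain budget $100\PG B$ imposed by condition~(I). On $E_1\cap E_2$, $\METASfindK$ walks through $\Pipref$ sequentially and drives at least $K$ chains to their targets.

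To bound $\Pr[E_1]$, I would use the calibration in Line~\ref{lin:threshold}: $\Pr[\D^{[K]}(\S)\leq \gamma_{j+1}]\geq 0.3$ identifies a population of at least $K$ ``candidate'' chains with moderate prevailing cost. Translating this into a bound on the number of chains with prevailing cost $\leq \gamma_j$ (which is what (II) really requires) calls for a case split on whether $\Pr[\D^{[K]}(\S)\leq \gamma_j]\geq 0.1$ or not; in the former case the ``safe'' chains already suffice, while in the latter case the probability mass is concentrated at the single value $\gamma_{j+1}$ and a finer trajectory-level argument — tracking when a state of grade exactly $\gamma_{j+1}$ is visited along each chain's path — recovers the needed constant. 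To bound $\Pr[E_2 \mid E_1]$, I would invoke the fair-game identity $\E[\MC(\S)]=\E[\D(\S)]$ for a non-quitting play of $\S^T$ (Lemma~\ref{lm:fairness}), combined with the aggregate estimate from Lemma~\ref{lem:good_set_guarantee} that the sum of the $K$ smallest prevailing costs is $\leq 10\PG B$ with probability $\geq 0.99$. Markov's inequality applied per chain together with a union bound then shows that only a small constant fraction of the candidate chains are killed by condition~(I), leaving $K$ successful completions with room to spare.

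Combining the two pieces by a union bound over the failure events yields $\Pr[E_1\cap E_2]\geq 0.1$, which is the desired statement. The main obstacle I expect is precisely the $\gamma_j$--$\gamma_{j+1}$ boundary step inside $E_1$: the stopping rule (II) is strict in $\gamma_j$, but the calibration in Line~\ref{lin:threshold} only controls the CDF at $\gamma_{j+1}$, so if the distribution of $\D^{[K]}(\S)$ has a large jump at $\gamma_{j+1}$ then most of the probability mass lies on ``marginal'' chains whose grade attains $\gamma_{j+1}$ before target. Resolving this requires the trajectory-level refinement sketched above, together with care in sharing probability budget between $E_1$ and $E_2$ so that the final product stays above $0.1$.
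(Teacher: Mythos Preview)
Your two-event decomposition matches the paper's skeleton, but the plan for bounding $E_2$ has a genuine gap. You propose to invoke the fair-game identity $\E[\MC(\S)]=\E[\D(\S)]$ for non-quitting play, then Lemma~\ref{lem:good_set_guarantee}, then per-chain Markov plus a union bound. Two problems: (i) the fair-game identity applies to a \emph{non-quitting} player, whereas $\METASfindK$'s play on each chain is the threshold strategy $\OPT(\S(\gamma_{j+1}))$, so the identity does not directly control the per-chain movement cost $\Z(\S)$ you need; (ii) Lemma~\ref{lem:good_set_guarantee} only controls the sum of the $K$ smallest \emph{prevailing} costs, and a per-chain Markov bound followed by a union bound over the candidate chains loses a factor proportional to the number of candidates, which nothing in the setup keeps $O(1)$. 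The assertion ``only a small constant fraction are killed by condition~(I)'' has no quantitative backing in your outline.

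The paper closes this gap with an aggregate comparison, not a per-chain one. It introduces the intermediate strategy $\ALGGITTINS$ (Gittins-index play on $\Pipref$ under a fair-movement-cost budget of $10\PG B$) and uses fairness to get $\E[\MC(\ALGGITTINS)]\leq 10\PG B$ (Claim~\ref{clm:guarantee_on_OPTCB}). The key observation is that on any trajectory where $\ALGGITTINS$ collects $K$ rewards, its movement cost is at least $\sum_{i=1}^{K}\Z^{[i]}$, since the latter is the minimum movement cost for \emph{any} strategy collecting $K$ rewards from $\Pipref$. A short contradiction (if $\Pr[\sum\Z^{[i]}\geq 100\PG B]>0.2$, union-bound with the $0.99$ success of $\ALGGITTINS$ to force $\E[\MC(\ALGGITTINS)]\geq 19\PG B$) yields $\Pr[\sum_{i=1}^{K}\Z^{[i]}\leq 100\PG B]\geq 0.8$. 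Because the sum being at most $100\PG B$ forces each summand to be at most $100\PG B$, condition~(I) is automatically satisfied on the $K$ cheapest successful chains --- no per-chain Markov and no union bound.

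On the $\gamma_j/\gamma_{j+1}$ boundary: you are right that this is delicate, but the paper does not perform the case split you sketch. It works throughout with $\OPT(\S(\gamma_{j+1}))$ as the reference play (which by Lemma~\ref{lem:optimal_solution_single_GMS} can be taken to coincide with $\METASfindK$'s threshold rule modulo the per-chain budget) and uses the event $\D^{[K]}\leq\gamma_{j+1}$ directly with the $\geq 0.3$ guarantee from Line~\ref{lin:threshold}. The final union bound is simply $0.3-0.2=0.1$. Your proposed trajectory-level refinement for the ``concentrated at $\gamma_{j+1}$'' case is not needed once the $E_2$ argument is done aggregately.
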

\begin{proof}
Note that we use $\D(\S)$ for system $\S$ to represent the random selection cost under the rule of $\StochKCost$, and now we consider the instance under the original rule, i.e. in \SFindK.

For simplicity, we use $\Z(\S,\gamma_{j+1})$ to represent the (random) movement cost $\MC (\OPT(\S(\gamma_{j+1})))$, where the game $\S(\gamma_{j+1})$ and its optimal strategy $\OPT(\S(\gamma_{j+1}))$ is used to defined the grade in Subsection~\ref{sec:gittins_index}.

Suppose we play each $\S\in\Pipref$ one by one according to $\OPT(\S(\gamma_{j+1}))$ and denote the random subset of systems which reach their targets by $\Pi_{\mathrm{suc}}\subseteq \Pipref$.
If $|\Pi_{\mathrm{suc}}|\geq k$, we let
$\Z^{[i]}$ be the $i$-th order statistic for $\{\Z(\S)\}_{\S\in \Pi_{\mathrm{suc}}}$ and $i\in[k]$, otherwise we define $\Z^{[i]}$ to be the $i$-th order statistic when $i\in[|\Pi_{\mathrm{suc}}|]$ and $\Z^{[i]}=0$ for $ |\Pi_{\mathrm{suc}}|<i\leq k$.
We have the following claim:
\begin{equation}
\label{eq:bound_cost}
\Pr[\sum_{i=1}^{K}\Z^{[i]} \geq 100\PG B]\leq 0.2.
\end{equation}

This can be proved by contradiction. 
Suppose $\Pr[\sum_{i=1}^{K}\Z^{[i]} \geq 100\PG B]> 0.2$. Since the success probability of $\ALGGITTINS$ is at least $0.99$, by union bound, one can see that
\begin{align*}
 \Pr\left[\sum_{i=1}^{K}\Z^{[i]} \geq 100\PG B\wedge\ALGGITTINS \text{ gets } K \text{ rewards } \right] \geq &~ 0.19.
\end{align*}
Further, we know when event that $(\sum_{i=1}^{K}\Z^{[i]} \geq 100\PG B\wedge\ALGGITTINS \text{ gets } K \text{ rewards })$ happens, then the event $(\MC(\ALGGITTINS)\geq 100\PG B\wedge\ALGGITTINS \text{ gets } K \text{ rewards })$ also happens, as $\sum_{i=1}^{K}\Z^{[i]}$ is the minimum possible movement cost of any strategy which makes $K$ systems on $\Pipref$ reach targets.
This implies that
\begin{align*}
    \Pr\left[\MC(\ALGGITTINS)\geq 100\PG B\wedge\ALGGITTINS \text{ gets } K \text{ rewards }\right] 
    \geq~\Pr\left[\sum_{i=1}^{K}\Z^{[i]} \geq 100\PG B\wedge\ALGGITTINS \text{ gets } K \text{ rewards } \right]
    \geq 0.19.
\end{align*}

Further, the inequality $\Pr[\MC(\ALGGITTINS)\geq 100\PG B]\geq 0.19$ follows directly and
the expected movement cost of $\ALGGITTINS$ is at least $100\PG B*0.19=19\PG B$.
This contradicts Claim~\ref{clm:guarantee_on_OPTCB}.
Thus we complete the proof of Equation~\ref{eq:bound_cost}.

Moreover, we know that $\Pr[\D^{[k]}(\S)\leq \gamma_{j+1}]\geq 0.3$ by the condition in Line~\ref{lin:threshold} in \METASfindK.
By Union bound, we know that
\begin{align*}
    \Pr\left[\D^{[k]}(\S)\leq \gamma_{j+1}\wedge \sum_{i=1}^{K}\Z^{[i]} \leq 100\PG B\right]\geq 0.1,
\end{align*}
which implies the success probability of \METASfindK.
More specifically, the event $\D^{[k]}(\S)\leq \gamma_{j+1}$ ensures that the random subset $|\Pi_{\mathrm{suc}}|\geq K$, in which case event $\sum_{i=1}^{K}\Z^{[i]} \leq 100\PG B$ implies that our algorithm $\METASfindK$ can make at least $K$ chains reach target states.
\end{proof}

Now it remains to bound the expected total cost of \METASfindK.
We state a technical result at first.

\begin{claim}
\label{clm:high_pro_cost}
With probability at least 0.65, $\ALGGITTINS$ can succeed to make $K$ systems reach their targets with no less movement cost than $\METASfindK$.
In other word, one has
\begin{align*}
    \Pr[\ALGGITTINS \text{ gets } K \text{ rewards } \wedge \MC(\ALGGITTINS)\geq\MC(\METASfindK)]\geq 0.65.
\end{align*}
\end{claim}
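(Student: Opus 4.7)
The plan is to prove the claim via a trajectory-wise coupling of $\ALGGITTINS$ and $\METASfindK$, combined with the high-probability success of $\ALGGITTINS$ established in Claim~\ref{clm:guarantee_on_OPTCB}. Concretely, for every chain $\S \in \Pipref$ I would fix, once and for all, the realized infinite trajectory $\omega(\S)$ that $\S$ follows from its initial state to its target. Under this coupling both algorithms observe the next state of $\omega(\S)$ whenever they advance $\S$; the only way they differ is in when they switch and when they stop.

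The key step is then to show that on the event $E = \{\ALGGITTINS \text{ collects } K \text{ rewards}\}$ the inequality $\MC(\METASfindK) \leq \MC(\ALGGITTINS)$ holds pathwise. The comparison splits into two cases per chain $\S_i \in \Pipref$. If $\ALGGITTINS$ drives $\S_i$ to its target, then $\ALGGITTINS$ pays for the entire $\omega(\S_i)$, while $\METASfindK$ pays only for the prefix up to the first state of grade exceeding $\gamma_j$ (or the $100\PG B$ cap, or the target), so the per-chain comparison is immediate. Otherwise, $\ALGGITTINS$ abandoned $\S_i$ in favor of a chain with strictly smaller grade; I plan to use this greedy-by-grade property together with the specific choice of $\gamma_j$ in Line~\ref{lin:threshold} (which is tied to the $K$-th order statistic of $\D(\S)$) to argue that, even then, the prefix of $\omega(\S_i)$ visited by $\ALGGITTINS$ is at least as long as the prefix visited by $\METASfindK$. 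Summing these per-chain inequalities delivers the global $\MC(\METASfindK) \leq \MC(\ALGGITTINS)$ on $E$.

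Finally, combining with $\Pr[E] \geq 0.99$ from Claim~\ref{clm:guarantee_on_OPTCB}, the event $E \cap \{\MC(\ALGGITTINS) \geq \MC(\METASfindK)\}$ has probability at least $0.99$ up to a small slack absorbed from corner cases of the per-chain argument, which comfortably exceeds the target $0.65$.

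The main obstacle, unsurprisingly, is the second case of the per-chain comparison, where $\ALGGITTINS$ abandons a chain that $\METASfindK$ keeps playing productively. It is precisely to control this case that both the threshold $\gamma_j$ (defined through the $K$-th order statistic of the prevailing cost) and the $100\PG B$ per-chain movement cap in $\METASfindK$ were introduced; turning this intuition into a clean deterministic inequality on the coupled sample space (and ruling out pathological interleavings of $\ALGGITTINS$'s greedy switches) is where the bulk of the technical work will live.
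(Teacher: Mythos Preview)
Your coupling and per-chain decomposition are exactly the paper's approach, but your treatment of the second case has a real gap, not just a ``corner case''. You assert that on $E=\{\ALGGITTINS\text{ gets }K\text{ rewards}\}$ the inequality $\MC(\METASfindK)\le\MC(\ALGGITTINS)$ holds pathwise, and you hope the choice of $\gamma_j$ alone enforces this. It does not. If $\D^{[K]}(\S)\le\gamma_j$, then $\ALGGITTINS$ can collect its $K$ rewards entirely from chains with small prevailing cost and abandon some other $\S_i\in\Pipref$ at a state whose grade is still $\le\gamma_j$; $\METASfindK$, visiting $\S_i$ in order, will keep advancing it (condition~(II) has not triggered) and may pay strictly more than $\ALGGITTINS$ on that chain. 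So on $E$ alone the per-chain comparison can fail.

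What the paper does---and what you are missing---is to condition in addition on the event $\{\D^{[K]}(\S)\ge\gamma_{j+1}\}$. By Line~\ref{lin:threshold} this event has probability at least $0.7$, and it is precisely this $0.7$ (not something close to $1$) that produces the stated $0.65$. On $E\cap\{\D^{[K]}(\S)\ge\gamma_{j+1}\}$, the greedy-by-grade rule forces every chain not yet at its target, at the moment $\ALGGITTINS$ halts, to have current grade $\ge\gamma_{j+1}>\gamma_j$; hence on \emph{every} chain in $\Pipref$, $\ALGGITTINS$ has traversed at least the prefix that $\METASfindK$ traverses (which stops as soon as the grade exceeds $\gamma_j$ or the $100\PG B$ cap is hit). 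Summing gives $\MC(\METASfindK)\le\MC(\ALGGITTINS)$ pathwise on this intersection. The union bound $0.99+0.7-1=0.69$ and a short conditional-probability computation then yield the $0.65$. Your second-case argument should be replaced by this single observation; there is no deterministic comparison on all of $E$, and the ``small slack'' you allude to is in fact the full $0.3$ mass of $\{\D^{[K]}(\S)\le\gamma_j\}$.
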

\begin{proof}
On one hand, by Claim~\ref{clm:guarantee_on_OPTCB}, we know the success probability of $\ALGGITTINS$ is at least 0.99, i.e.
\begin{align}
\label{eq:success_probability}
    \Pr[\ALGGITTINS \text{ gets } K \text{ rewards }]\geq 0.99.
\end{align}

On the other hand, we know that $\Pr[\D^{[K]}(\S)>\gamma_{j}]\geq 0.7$, which means that
\begin{align}
\label{eq:k-order statistic}
    \Pr[\D^{[K]}(\S)\ge \gamma_{j+1}]\geq 0.7,
\end{align}
as we have assumed that the grade is distinct of each other. 
By Union Bound over Equation~\ref{eq:success_probability} and Equation~\ref{eq:k-order statistic}, one has
\begin{align*}
    \Pr[\ALGGITTINS \text{ gets } K \text{ rewards }\wedge\D^{[K]}(\S)\ge \gamma_{j+1}]\geq 0.69.
\end{align*}

Conditioning on that $\D^{[K]}(\S)\ge \gamma_{j+1}$ holds and $\ALGGITTINS$ succeeds to get $K$ units of rewards, the grade of all $\MS\in\Pipref$ after $\ALGGITTINS$ halts are at least $\gamma_{j+1}$.
This is true as $\ALGGITTINS$ always chooses the $\MS$ in the $\Pipref$ with the smallest grade to advance.
As $\METASfindK$ never plays the system when the grade of its current state exceeds $\gamma_{j+1}$,
we know $\MC(\ALGGITTINS)\geq \MC(\METASfindK)$.
In particular,
\begin{align*}
    \Pr[\MC(\ALGGITTINS)\geq \MC(\METASfindK)]
    \geq \Pr[\ALGGITTINS \text{ gets } K \text{ rewards }\wedge\D^{[K]}(\S)\ge \gamma_{j+1}]
    \geq  0.69.
\end{align*}

Combining these together, one has
\begin{align*}
   &~ \Pr[\ALGGITTINS \text{ gets } K \text{ rewards } \wedge \MC(\ALGGITTINS)\geq\MC(\METASfindK)]\\
   = & ~ \Pr\left[\MC(\ALGGITTINS)\geq\MC(\METASfindK)\mid \ALGGITTINS \text{ gets } K \text{ rewards}\right]\cdot \Pr[\ALGGITTINS \text{ gets } K \text{ rewards}]\\
   \geq & ~ \Pr\left[\D^{[K]}(\S)\ge \gamma_{j+1}\mid \ALGGITTINS \text{ gets } K \text{ rewards}\right]\cdot \Pr[\ALGGITTINS \text{ gets } K \text{ rewards}]\\
   \geq &  \Pr\left[\D^{[K]}(\S)\ge \gamma_{j+1}\mid \ALGGITTINS \text{ gets } K \text{ rewards}\right] * 0.99,
\end{align*}
where the first inequality comes from the argument above and the second inequality comes from Equation~\ref{eq:success_probability}.

The remaining is to bound $\Pr\left[\D^{[K]}(\S)\ge \gamma_{j+1}\mid \ALGGITTINS \text{ gets } K \text{ rewards}\right]$.
By conditional probability, we know that
\begin{align*}
     \Pr\left[\D^{[K]}(\S)\ge \gamma_{j+1}\mid \ALGGITTINS \text{ gets } K \text{ rewards}\right]
    =  \frac{\Pr[\D^{[K]}(\S)\ge \gamma_{j+1}\wedge \ALGGITTINS \text{ gets } K \text{ rewards}]}{\Pr[\ALGGITTINS \text{ gets } K \text{ rewards}]}
    \geq  0.69,
\end{align*}
which can complete the proof with elementary calculation.
\end{proof}

Now we are ready to bound the total cost.

\begin{lemma}[Bounded expected cost]
\label{lem:bounded_expected_cost}
The expected total cost of \METASfindK is upper bounded by $O(1)B$.
More specifically, one has
\begin{align*}
   \E[\TC(\METASfindK)]\leq 410\PG B.
\end{align*}
\end{lemma}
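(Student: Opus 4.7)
The plan is to control $\E[\TC(\METASfindK)]=\E[\SC(\METASfindK)]+\E[\MC(\METASfindK)]$ by treating switching and movement contributions separately, targeting $10\PG B + 400\PG B = 410\PG B$. The switching cost is handled immediately by construction: the prefix $\Pipref$ was chosen in Line~\ref{lin:good_set} precisely so that $\sum_{i=0}^{m-1}d(\S_i,\S_{i+1})\leq 10\PG B$, and since $\METASfindK$ traverses $\Pipref$ in one pass without revisits, $\SC(\METASfindK)\leq 10\PG B$ holds deterministically. It therefore suffices to show $\E[\MC(\METASfindK)]\leq 400\PG B$.

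For the movement cost the key tool is the coupling in Claim~\ref{clm:high_pro_cost}. Let $E$ denote the event ``$\ALGGITTINS$ succeeds in collecting $K$ rewards and $\MC(\ALGGITTINS)\geq \MC(\METASfindK)$'', so $\Pr[E]\geq 0.65$. I would split
\[
\E[\MC(\METASfindK)] = \E[\MC(\METASfindK)\mathbf{1}_E] + \E[\MC(\METASfindK)\mathbf{1}_{E^c}].
\]
On $E$ the pathwise dominance combined with Claim~\ref{clm:low_expected_cost} yields $\E[\MC(\METASfindK)\mathbf{1}_E]\leq \E[\MC(\ALGGITTINS)]\leq 10\PG B$, which controls the bulk of the contribution.

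For the complementary term $\E[\MC(\METASfindK)\mathbf{1}_{E^c}]$ I would invoke the fair-game structure of each chain's stopping rule. On each $\S_i\in\Pipref$, the algorithm plays a capped version of the optimal strategy for the nonnegative-value game $\S_i(\gamma_j)$, so Lemma~\ref{lem:optimal_solution_single_GMS} gives the per-chain bound $\E[M_i]\leq \gamma_j\cdot\Pr[\D(\S_i)\leq \gamma_j]$. Summing over $\Pipref$ yields $\E[\MC(\METASfindK)]\leq \gamma_j\cdot\E[N]$ where $N = |\{\S\in\Pipref:\D(\S)\leq \gamma_j\}|$. Lemma~\ref{lem:good_set_guarantee} together with the minimality in Line~\ref{lin:threshold} forces $\gamma_j < \gamma_{j+1}\leq 10\PG B$, and $\Pr[N\geq K]<0.3$ (again from Line~\ref{lin:threshold}) combined with the fact that $N$ is a Poisson binomial (whose median lies within one of its mean) forces $\E[N]\leq K$. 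Multiplying by $\Pr[E^c]\leq 0.35$ then converts this into an $O(\PG B)$ bound on $\E[\MC(\METASfindK)\mathbf{1}_{E^c}]$.

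The main obstacle is ensuring that the per-chain $\gamma_j K$-type bound actually tightens to $O(\PG B)$: in isolation this quantity may appear to scale with $K$, so the crucial insight is that the coupling on $E$ (not the fair-game bound alone) absorbs the dominant contribution, leaving only the $\Pr[E^c]\leq 0.35$ residual to the fair-game calculation, where the cap condition (I) $M_i\leq 100\PG B$ provides the deterministic control that closes the gap. Adding the switching bound to the two movement contributions produces $\E[\TC(\METASfindK)]\leq 410\PG B$ as claimed.
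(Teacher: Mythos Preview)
Your decomposition $\E[\MC(\METASfindK)]=\E[\MC(\METASfindK)\mathbf{1}_E]+\E[\MC(\METASfindK)\mathbf{1}_{E^c}]$ handles the first term correctly, but the second term is not actually controlled. The fair-game bound $\E[M_i]\le\gamma_j\Pr[\D(\S_i)\le\gamma_j]$ sums to $\gamma_j\,\E[N]$, and even granting $\gamma_j\le 10\PG B$ and $\E[N]\le K$, this yields $10\PG B\cdot K$, not $O(\PG B)$. Multiplying by $\Pr[E^c]\le 0.35$ does not help: that would only be legitimate if you had a \emph{pathwise} bound $\MC(\METASfindK)\le\gamma_j N$ on $E^c$, which you do not (the fair-game inequality is an expectation bound, not a pointwise one). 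The cap from condition~(I) gives only the pathwise bound $\MC(\METASfindK)\le 100\PG B\cdot m$, which scales with the length $m$ of $\Pipref$ and can be arbitrarily large. So neither route closes the gap, and your final paragraph is hand-waving over a real obstruction.

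The paper's proof avoids this by working in the other direction: it assumes $\E[\MC(\METASfindK)]>400\PG B$ and derives a contradiction. The crucial step is that the per-chain costs $Z'(\S)$ are \emph{independent} across $\S\in\Pipref$ (since $\METASfindK$ plays each chain in isolation) and bounded in $[0,100\PG B]$ by condition~(I). A Chernoff--Hoeffding bound then converts the large expectation into a high-probability event $\Pr[\MC(\METASfindK)\ge 200\PG B]\ge 0.9$. Intersecting this with the event $E$ from Claim~\ref{clm:high_pro_cost} (probability $\ge 0.65$) via a union bound gives $\Pr[\MC(\ALGGITTINS)\ge 200\PG B]\ge 0.55$, whence $\E[\MC(\ALGGITTINS)]\ge 110\PG B$, contradicting Claim~\ref{clm:low_expected_cost}. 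The point is that concentration, not a direct expectation split, is what transfers the coupling on $E$ into control of the full expectation.
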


\begin{proof}
We prove this by contradiction.
Suppose $\E[\TC(\METASfindK)]> 410\PG B$.
Recall that the switching cost of \METASfindK is at most $10\PG B$. 
Then the assumption implies that
\begin{align*}
    \E[\TC(\METASfindK)]>400 \PG B.
\end{align*}
In fact, $\METASfindK$ plays each system in $\Pipref$ independently of other system.
Recall that we let $\Z(\S)$ be the (random) movement cost $\MC (\OPT(\S(\gamma_{j+1})))$.
The only difference between $\METASfindK$ and $\OPT(\S(\gamma_{j+1}))$ on $\S$ is that $\METASfindK$ stops playing $\S$ if the next step will break the movement budget $100\PG B$.
We let $\Z'(\S)$ be the (random) movement cost $\METASfindK$ spends on $\S$ and obviously, we have $\E[\MC(\METASfindK)]=\sum_{\S\in\Pipref}\E[\Z'(\S)]$.

The assumption implies that $\sum_{\S\in\Pipref}\E[\Z'(\S)]> 400\PG B$. 
By the Condition (I) in ${\cal A}$, $\Z'(\S)\in [0,100\PG B]$. 
By Chernoff bound (Theorem~\ref{thm:ChernoffBound}), one has:
\begin{align*}
    \Pr\left[\sum_{\S\in\Pipref}\Z'(\S)\leq 200\PG B\right]
    \leq  \Pr\left[\bigg|\sum_{\S\in\Pipref}\Z'(\S)-\E[\sum_{\S\in\Pipref}\Z'(\S)]\bigg|\geq 300\PG B\right]
    \leq  2e^{-9/2}
    \leq  0.1,
\end{align*}
which means with probability at least 0.9, the movement cost of $\METASfindK$ exceeds $200\PG B$, i.e.
\begin{align}
    \Pr[\MC(\METASfindK)\geq 200\PG B]=\Pr\left[\sum_{\S\in\Pipref}\Z'(\S)\geq 200\PG B\right]\geq 0.9.
\end{align}

In particular, by Claim~\ref{clm:high_pro_cost}, with probability at least 0.65, $\ALGGITTINS$ pays more movement cost than $\METASfindK$. 
Then by union bound, one has
\begin{align*}
    \Pr[ \MC(\ALGGITTINS)\geq 200\PG B]\geq 0.55, 
\end{align*}
which implies that $\E[\MC(\ALGGITTINS)]\geq 110\PG B$ and contradicts Claim~\ref{clm:low_expected_cost}.

Thus we complete the proof.
\end{proof}

Hitherto we have proved that \METASfindK satisfies Lemma~\ref{lem:modified_subprocess_simple_FindK_metric}, and thus finished the proof of the main result:
\FindKMS*

\section{Conclusions and Open Problems}
In this work, we 
present a simple index strategy for \FindOne and a more involved algorithm for \SFindK, both achieving constant approximation ratios.
We did not attempt to optimize the exact constants and the constants directly implied from our analysis are quite large for both problems.
\footnote{The approximation ratio implied from the current analysis of \FindOne is around $10^5$, estimated as follows.
\METAfindone uses $2^8B$ budget and can succeed with probability at least 1/20 conditioning on that the expected cost of optimal solution is no more than $B/10$.
Plugging these constants in Lemma~\ref{lem:key_sub_process}, we know the $\beta$ in the doubling framework should be set as $1/\beta^2=1-1/20=0.95$, and the final approximation ratio can be around $2^8*\frac{10}{\beta-1}\approx 10^5$.
The approximation ratio of \SFindK depends on the the approximation ratio for 
\StochKCost, which is already quite large.
}
Designing new algorithms or analysis with small approximation constants is a very interesting further direction. In particular, we suspect the approximation ratio of \FindOne is a small constant.

One interesting future direction is to study the general problem proposed in
\cite{GJSS19} with switching cost. In particular, there is a given combinatorial constraint $\calF\subseteq 2^{[n]}$, and our goal is to make a subset $F\in \calF$
of chains reach their targets.
Another interesting extension is to significantly generalize the stochastic reward $k$-TSP studied in \cite{ENS18,JLLS20} as follows: we have metric graph, in which each node is associated with a Markov chain. Each target state of a Markov chain
is associated with a random reward $R_v\in \mathbb{Z}^+$, which is realized when we reach the target. The goal is to collect a total reward of at least $k$.

\section{Acknowledgments}
The research is supported in part by the National Natural Science Foundation of China Grant 61822203, 61772297, 61632016, 61761146003, and Turing AI Institute of Nanjing and Xi'an Institute for Interdisciplinary Information Core Technology,
NSF awards CCF-1749609, DMS-1839116, DMS-2023166, CCF-2105772, Microsoft Research Faculty Fellowship, Sloan Research Fellowship, and Packard Fellowships.
Part of the work was done while Liu visited Shanghai Qi Zhi Institute.



\newpage
\appendix
\section{Appendix for Section~\ref{sec:prel}}
\subsection{Concentration Inequalities}
\label{sec:concentration}
We need the following two well known 
concentration inequalities.

\begin{theorem}
[Chernoff-Hoeffding Bound]
\label{thm:ChernoffBound}
Let $X_1,X_2,\cdots,X_n$ be independent random variables taking values in $[0,1]$ and define $X := \frac{1}{n}\sum_{i \in [n]} X_i$. Then for any $\epsilon\in[0,1]$, we have
\[
\p\left[ |X - \E[X]|\geq \epsilon \right]\leq 2\exp\left(-n\epsilon^2/2 \right).
\]
\end{theorem}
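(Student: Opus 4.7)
The statement to prove is the Chernoff-Hoeffding bound, which is a standard concentration inequality, so my plan is to give the classical proof via the Cram\'er-Chernoff method. First I would reduce the two-sided bound to two one-sided bounds by a union bound: it suffices to show $\p[X - \E[X] \geq \epsilon] \leq \exp(-n\epsilon^2/2)$ and an identical inequality for the lower tail (applied to $1 - X_i$).

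For the upper tail, I would centre the variables by setting $Y_i = X_i - \E[X_i]$, so that $\E[Y_i] = 0$ and $Y_i$ takes values in an interval of length at most $1$. Then for any $t > 0$, Markov's inequality applied to the exponential moment gives
\[
\p\!\left[\sum_{i=1}^n Y_i \geq n\epsilon\right] \leq e^{-tn\epsilon}\,\E\!\left[e^{t\sum_i Y_i}\right] = e^{-tn\epsilon}\prod_{i=1}^n \E[e^{tY_i}],
\]
where the factorisation uses independence of the $Y_i$.

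The technical heart of the argument, and the step I expect to be the main obstacle, is bounding $\E[e^{tY_i}]$ for each $i$. I would invoke Hoeffding's lemma: for a zero-mean random variable $Y$ taking values in an interval of length $\ell$, one has $\E[e^{tY}] \leq \exp(t^2 \ell^2 / 8)$. The proof of the lemma itself relies on convexity of $x \mapsto e^{tx}$, which lets one bound $e^{tY}$ by a linear interpolation between the two endpoints of the support; taking expectations and using $\E[Y]=0$ produces a function of $t$ that one analyses by Taylor-expanding its logarithm around $t=0$ (second derivative is at most $\ell^2/4$). Since $Y_i \in [-\E[X_i], 1-\E[X_i]]$ has range at most $1$, this yields $\E[e^{tY_i}] \leq e^{t^2/8}$.

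Combining, $\p[\sum_i Y_i \geq n\epsilon] \leq \exp(-tn\epsilon + nt^2/8)$. Optimising over $t$ (set $t = 4\epsilon$) produces $\exp(-2n\epsilon^2)$, which is stronger than the stated $\exp(-n\epsilon^2/2)$; to get exactly the stated form, one may use $t = \epsilon$ and the looser bound $\exp(-n\epsilon^2 + n\epsilon^2/8)$, or simply relax $2n\epsilon^2$ to $n\epsilon^2/2$. Adding the symmetric bound for the lower tail via a union bound gives the factor $2$ in front and completes the proof.
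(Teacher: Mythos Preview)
Your proposal is correct and follows the standard Cram\'er--Chernoff argument via Hoeffding's lemma. The paper itself does not prove this theorem at all: it is stated in the appendix on concentration inequalities as a ``well known'' result and used as a black box, so there is no paper proof to compare against.
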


\begin{theorem}[Freedman's Inequality, Theorem 1.6 in~\cite{Fre75}]
\label{thm:FreedmanInequality}
Consider a real-valued martingale difference sequence $\{X_t\}_{t\geq 0}$ such that $X_0=0$, and $\E[X_{t+1}|\mathcal{F}_t]=0$ for all $t$, where $\{\mathcal{F}_t\}_{t\geq 0}$ is the filtration defined by the martingale. Assume that the sequence is uniformly bounded, i.e., $|X_t|\leq M$ almost surely for all $t$. Now define the predictable quadratic variation process of the martingale to be $W_t=\sum_{j=1}^t \E[X_j^2|\mathcal{F}_{j-1}]$ for all $t\geq 1$. Then for all $\ell \geq 0$ and $\sigma^2>0$ and any stopping time $\tau$, we have
\[
\p\Big[ \Big|\sum_{j=0}^\tau X_j \Big|\geq \ell \wedge W_\tau \leq \sigma^2 \text{for some stopping time } \tau \Big] \leq 2\exp\Big(- \frac{\ell^2/2}{\sigma^2+M \ell/3} \Big).
\]
\end{theorem}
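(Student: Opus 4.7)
The plan is to use the classical exponential supermartingale method. The first ingredient is a one-step conditional exponential moment bound: using the elementary analytic estimate $e^x \leq 1 + x + x^2/(2(1-x/3))$ valid for $x < 3$, applied to $x = \theta X_j$ with $\theta \in (0, 3/M)$, and combined with the martingale property $\E[X_j \mid \mathcal{F}_{j-1}] = 0$, I would derive
$$\E\!\left[e^{\theta X_j} \mid \mathcal{F}_{j-1}\right] \;\leq\; \exp\!\left(\frac{\theta^2\,\E[X_j^2 \mid \mathcal{F}_{j-1}]/2}{1-\theta M/3}\right).$$

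With this in hand, the next step is to assemble the process
$$Z_t \;=\; \exp\!\left(\theta \sum_{j=0}^t X_j \;-\; \frac{\theta^2 W_t / 2}{1-\theta M/3}\right),\qquad Z_0 = 1,$$
and verify that it is a nonnegative $\{\mathcal{F}_t\}$-supermartingale (direct conditional computation using the bound above gives $\E[Z_t \mid \mathcal{F}_{t-1}] \leq Z_{t-1}$). After that, I would combine optional stopping with a Chernoff-style argument: applying optional stopping to $Z_{\tau \wedge n}$ and letting $n \to \infty$ via Fatou's lemma (which needs only nonnegativity) gives $\E[Z_\tau] \leq 1$. On the event $\{\sum_{j=0}^\tau X_j \geq \ell \,\wedge\, W_\tau \leq \sigma^2\}$ we have $Z_\tau \geq \exp(\theta\ell - \tfrac{\theta^2 \sigma^2/2}{1-\theta M/3})$, so Markov's inequality bounds the probability by $\exp(-\theta\ell + \tfrac{\theta^2 \sigma^2/2}{1-\theta M/3})$. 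Choosing $\theta = \ell/(\sigma^2 + M\ell/3) \in (0, 3/M)$ and simplifying collapses this to $\exp(-\ell^2/(2(\sigma^2 + M\ell/3)))$. Finally, applying the entire argument to $\{-X_j\}$ (whose predictable quadratic variation is identical) and taking a union bound accounts for the absolute value and produces the factor of $2$.

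\textbf{Main obstacle.} The delicate point is making the optional stopping argument rigorous when $\tau$ is potentially unbounded while simultaneously restricting to the event $\{W_\tau \leq \sigma^2\}$. I would handle this by introducing the auxiliary stopping time $\tau^\star = \tau \wedge \inf\{t : W_{t+1} > \sigma^2\}$, which coincides with $\tau$ on the event of interest and keeps $W_{\tau^\star}$ controlled pathwise; then nonnegativity of $Z_{\tau^\star}$ together with Fatou makes the limiting step routine and lets me transfer the bound back to $\tau$ on the relevant event. Everything else is the choice of $\theta$ plus the elementary inequality on $e^x - 1 - x$, both of which are short calculations.
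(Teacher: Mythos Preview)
The paper does not prove this statement at all: Theorem~\ref{thm:FreedmanInequality} is quoted in Appendix~\ref{sec:concentration} as a known concentration inequality, with a citation to Freedman~\cite{Fre75}, and is used as a black box (specifically in the proof of Lemma~\ref{lm:good_proerty_large_T}). There is therefore no proof in the paper to compare your proposal against.

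That said, your outline is the standard exponential--supermartingale argument that underlies Freedman's original proof and its textbook presentations. The one-step bound via $e^{x}\le 1+x+\tfrac{x^{2}}{2(1-x/3)}$ for $x<3$, the supermartingale $Z_t$, optional stopping with a truncation to control $W_\tau$, and the choice $\theta=\ell/(\sigma^{2}+M\ell/3)$ are all correct; your computation of the optimized exponent checks out, and the symmetry-plus-union-bound step for the two-sided version is the usual way to get the factor of~$2$. Your remark about the auxiliary stopping time $\tau^\star$ is exactly the right way to make the passage to unbounded $\tau$ rigorous. In short, your plan is sound and essentially reproduces the classical proof, but it goes beyond what this particular paper supplies.
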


\subsection{The Doubling Technique}
\label{sec:appendix_doubling}

\subsubsection{Proof of Lemma~\ref{lm:larger_better}}
\label{sec:larger_better}
\newtheorem*{lm:larger_better}{Lemma~\ref{lm:larger_better}}
\begin{lm:larger_better}
For any $j\geq i\geq 1$ and any Algorithm $\ALGmeta$, one has 
\begin{align*}
    \E[\TC(\OPT(\M_i,k_i))]\geq \E[\TC(\OPT(\M_j,k_j))].
\end{align*}
Notice that the randomness is over an entire run of 
$\ALGgeneral$. 
\end{lm:larger_better}

\begin{proof}
For any phases~$i\geq 0$, we define:
\begin{itemize}
    \item $\omega_{i,\S}$:  the trajectory of \MS $\S$ traversed by $\ALGgeneral$ in the first $i$ phases.
    \item $\omega_i=\cup_{\S\in\mathbf{S}}\omega_{i,\S}$: the collection of the trajectories of all \MS systems traversed by $\ALGgeneral$ in the first $i$ phases.
\end{itemize}

At a first glance, $\OPT(\M_j,k_j)$ is a sub-tree of $\OPT(\M_i,k_i)$ and this inequality holds directly. But this is not true as $\OPT(\M_j,k_j)$ is required to start at the root $\mathbf{R}$.

In particular, we know that $\M_i$ and $k_i$ can be determined by $\omega_{i}$.
First, for all $j\geq i\geq 0$, we have that 
    $$
    \E[\TC(\OPT(\M_i,k_i))]=\E_{\omega_{j}}\left[\E[\TC(\OPT(\M_i,k_i)) \mid \omega_{j}]\right].
    $$
    Now, fixing any possible $\omega_{j}$, it suffices to prove that
    \begin{align}
    \label{eq:opti_and_optj}
        \E[\TC(\OPT(\M_i,k_i))\mid \omega_{j}]\geq \E[\TC(\OPT(\M_j,k_j))\mid \omega_{j}].
    \end{align}
    Note that we can represent the strategy $\OPT(\M_i,k_i)$
    as a decision tree. 
    No matter what $(\M_j,k_j)$ is for the $j$-th phase, we consider the algorithm $\ALG(\M_j,k_j)$ which uses the decision tree of $\OPT(\M_i,k_i)$, pretending not to know $\omega_{j}\setminus\omega_{i}$.
    The only difference is that we do not charge the movement cost of $\ALG(\M_j,k_j)$ for those transitions occurring at $\omega_{j}\setminus\omega_{i}$.
    
    Thus we know that
    \begin{align*}
        \E[\TC(\OPT(\M_i,k_i))\mid \omega_{j}]\geq  \E[\TC(\ALG(\M_j,k_j))\mid \omega_{j}]
        \geq  \E[\TC(\OPT(\M_j,k_j))\mid \omega_{j}],
    \end{align*}
    where the second inequality comes from the optimality of $\OPT(\M_j,k_j)$.
    By integrating over all possible $\omega_{j}$, we can complete the proof.
\end{proof}

\subsubsection{Proof of Lemma~\ref{lem:key_sub_process}}
\newtheorem*{lem:key_sub_process}{Lemma~\ref{lem:key_sub_process}}
\begin{lem:key_sub_process}
We are given some \SFindK (or \FindOne) instance $\M$, objective number of rewards $k$ and non-negative real number $B\in \mathbb{R}_{\geq 0}$. 
For any $B> c_1\E[\TC(\OPT(\M,k))]$, if we can get an algorithm $\ALGmeta$ with $\E[\ALGmeta]\leq c_2B$ and can succeed with probability more than $0.01$ where $c_1,c_2$ are some universal constant, then we can get an $O(1)$-approximation algorithm for \SFindK (or \FindOne).
\end{lem:key_sub_process}

\begin{proof}
Assume that $\min_{\S\in\mathbf{S}}d(\S,\mathbb{R})=1$.
If there is an algorithm $\ALGmeta$ satisfying the precondition in this lemma, we can put it into the general framework $\ALGgeneral$ and argue that $\ALGgeneral$ is an $O(1)$-approximation algorithm for \SFindK (or \FindOne).
Let $\M,K$ be the original input in Algorithm~\ref{alg:alg_general} and for simplicity let $\OPT$ be the optimal strategy $\OPT(\M,K)$. 

We need some notations for any phases~$i\geq 0,j \geq 1$: 
\begin{itemize}[topsep=0cm,itemsep=0cm]

    \item $u_{j}(\omega_{i})$:  probability that $\ALGgeneral$ enters phase ${j}+1$ conditioning on $\omega_{i}$.
    \item $u_{j}$:  probability that $\ALGgeneral$ enters phase ${j}+1$.

\end{itemize}

Notice that $u_{i-1}(\omega_{i-1})$ is the indicator variable that $\ALGgeneral$ enters phase~$i$. 
For $i$-th phase, as the budget of $\ALGmeta(\M_{i-1},k_{i-1},B_{i})$ is $B_{i}=c_2\beta^i$, the total cost of $\ALGgeneral$ is at most $2c_2\beta^i$ for $i$th phase.
Note that $\E[\TC(\OPT(\M_i,k_i))]$ is a random variable with randomness from $\omega_{i}$. 

Let $\ind_{\omega_{i-1}}$ be the indicator variable that $\E[\TC(\OPT(\M_{i},k_i))\mid \omega_{i-1}] \geq \frac{1}{c_1}\beta^i$.
If the precondition in the statement holds, we know that there exists some universal constant  $\beta>1$, for any phase~$i\geq 0$, and any possible $\omega_{i-1}$ such that the algorithm $\ALGgeneral$ satisfies
\begin{equation}
\label{eq:doubling_tech}
    u_i(\omega_{i-1}) \leq \frac{u_{i-1}(\omega_{i-1})}{\beta^2} + \ind_{\omega_{i-1}}.
\end{equation}

To make it more clear, we only need to consider the case when $u_{i-1}(\omega_{i-1})=1$ and $\ind_{\omega_{i-1}}=0$, as $u_{i-1}(\omega_{i-1})=0$ can imply $u_{i}(\omega_{i-1})=0$ directly and the inequality holds.
Besides, if $\ind_{\omega_{i-1}}=0$, then we know 
$u_i(\omega_{i-1})\geq 1-0.01=0.99$ by the precondition of the statement.
Setting $1/\beta^2\geq 0.99$ can handle this case.

Now our objective is to show that $\E[\TC(\ALGgeneral)]\leq O(1)\cdot \E[\TC(\OPT)]$ by using Equation~\ref{eq:doubling_tech}.

As we have shown $\E[\TC(\OPT)\mid\omega_{i-1}]\geq \E[\TC(\OPT(\M_i,k_i))\mid\omega_{i-1}]$ in the proof of Lemma~\ref{lm:larger_better} for any possible $\omega_{i-1}$, then $\E[\TC(\OPT(\M_i,k_i))\mid\omega_{i-1}]\geq B$ means that $\E[\TC(\OPT)\mid\omega_{i-1}]\geq B$ for any real number $B$, which gives us:
 \begin{align*}
 \Pr[\E[\TC(\OPT(\M_i,k_i))\geq B] \leq & \Pr[ \E[\TC(\OPT)]\geq B]\\
 = & \ind_{ \E[\TC(\OPT)\geq B]},
 \end{align*}
where $\ind_{ \E[\TC(\OPT)\geq B]}$ is a deterministic indicator variable that $\E[\TC(\OPT)\geq B]$.
By taking expectation of both sides of Equation~\ref{eq:doubling_tech}, we have 
\begin{align*}
    u_i\leq u_{i-1}/\beta^2+ \Pr[ \E[\TC(\OPT(\M_i,k_i))\geq \frac{1}{c_1}\beta^i]] , \qquad \forall i\geq 0.
\end{align*}

Multiplying $\beta^i$ on both sides and taking summation of all $i$, we have 
\begin{align*}
    \sum_{i \geq 1} u_i \cdot \beta^i ~\leq & ~  u_0/\beta + 1/\beta \cdot \sum_{i \geq 1} u_i \cdot \beta^i +  \sum_{i\geq 1} \Pr[ \E[\TC(\OPT(\M_i,k_i))]\geq \frac{1}{c_1}\beta^i]\cdot \beta^i\\
    ~\leq &~ u_0/\beta + 1/\beta \cdot \sum_{i \geq 1} u_i \cdot \beta^i +  \sum_{i\geq 1} \Pr[ \E[\TC(\OPT)]\geq \frac{1}{c_1}\beta^i]\cdot \beta^i \\ 
    ~ = & ~  1/\beta + 1/\beta \cdot \sum_{i \geq 1} u_i \cdot \beta^i +  \sum_{i\geq 1} \ind_{ \E[\TC(\OPT)]\geq \frac{1}{c_1}\beta^i}\cdot \beta^i  \\
    ~ \leq & ~ 1/\beta + 1/\beta \cdot \sum_{i \geq 1} u_i \cdot \beta^i +  O(1)\cdot \E[\TC(\OPT)],
 \end{align*}
 
which implies that
\begin{align*}
(1-1/\beta) \cdot \sum_{i \geq 1} u_i \cdot \beta^i ~\leq~   1/\beta + O(1)\cdot\E[\TC(\OPT)].
\end{align*}

Besides, recall the assumption that  $\min_{\S\in\mathbb{S}}d(\S,\mathbb{R})=1$, one has
$
\E[\TC(\OPT)] \geq 1
$
and that
\[
\E[\TC(\ALGgeneral)]
\quad  \leq \quad  O(1)\cdot\sum_{i\geq 0}(u_i-u_{i+1})\cdot\beta^{i+1} \quad  = \quad  O(1)(\beta-1)\cdot\sum_{i\geq 0}u_i\cdot\beta^i.
\]
Combining these together, we get $\E[\TC(\ALGgeneral)] \leq O(1) \cdot \E[\TC(\OPT)]$.
\end{proof}
\section{Appendix for Section~\ref{sec:unit}}

\subsection{Algorithm.}
Algorithm~\ref{sec:alg_findone} is simply an instantiation of 
the doubling framework Algorithm~\ref{alg:alg_true_findone}.\\
\label{sec:alg_findone}
\begin{algorithm2e}[H]
\caption{Algorithm for \FindOne (for analysis purpose)}
\label{alg:alg_analysis_findone}
Set $\beta\in(1,2)$\;
Set $k_0=1$\;
 Set $\M_0=\M$\;
 Set $c_1=O(1)$\;
\For{phase $i=0,1,\cdots $}
{
    $(\M_i,k_i)\leftarrow \METAfindone(\M_i,c_1\beta^i)$ (Algorithm~\ref{alg:meta_findone})\;
    \If{$k_i\leq 0$}{\textbf{Break}}
}
\end{algorithm2e}

One has the following claim:
\begin{claim}
\label{cl:altanative_const}
Algorithm~\ref{alg:alg_analysis_findone} has expected cost at least that of Algorithm~\ref{alg:alg_true_findone}. 
\end{claim}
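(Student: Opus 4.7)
The plan is to establish the inequality pathwise by coupling the two algorithms on the same probability space: whenever either algorithm is about to transition from state $u$ in \MS $\S_i$, both use the same realized next state. This coupling is well-defined as long as the two algorithms play an identical sequence of (chain, state) pairs, which I verify below.

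The key structural observation is that both algorithms use exactly the same greedy selection rule before the first reward is collected: at each step, pick the \MS with the smallest index $\index_i$, where $\index_i$ depends only on the current state of $\S_i$ and whether $\S_i$ is active. Moreover, a phase transition in Algorithm~\ref{alg:alg_analysis_findone} preserves the state of every chain as well as the identity of the last-played (active) chain, and incurs no additional movement or switching cost (by the \FindOne convention quoted earlier: the new phase begins at the \MS where the previous phase halts, without paying any unit switching cost there). Therefore, a straightforward induction on the number of steps shows that, under the coupling, up to the first time $T$ at which some chain reaches its target state, the two algorithms play an identical sequence of moves, incur identical accumulated movement-plus-switching cost, and observe identical states.

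At time $T$, Algorithm~\ref{alg:alg_true_findone} halts and its total cost equals this common accumulated cost. The alternative algorithm, on the other hand, is still inside the current call to \METAfindone: it marks the just-completed chain unavailable, decrements $k$, and continues its inner while-loop as long as some chain remains available and condition ${\cal A}$ still holds, possibly paying further movement and switching costs. When \METAfindone eventually returns with $k_i\leq 0$, the outer loop of Algorithm~\ref{alg:alg_analysis_findone} breaks. Since movement and switching costs are non-negative, the alternative's cost on every coupled sample path is at least that of the true algorithm; taking expectation yields the claim.

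The only delicate point, and the one I would verify carefully, is the handling of phase boundaries in the alternative algorithm: specifically, that the ``active'' flag of the last-played chain is carried across phases so that the $\index_i$ values computed at the first step of a new phase agree with the ones the true algorithm would compute at the same moment in time. This is exactly guaranteed by the \FindOne convention which, together with cost preservation across phase boundaries, effectively glues consecutive phases into a single continuous run of the same greedy policy, making the pathwise coupling well-defined throughout.
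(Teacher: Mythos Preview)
Your argument is correct and follows essentially the same idea as the paper's proof: both algorithms execute the same greedy rule step for step until the first reward, after which Algorithm~\ref{alg:alg_true_findone} halts while Algorithm~\ref{alg:alg_analysis_findone} continues to the end of the phase and hence pays at least as much. Your version is more careful than the paper's two-sentence sketch, in particular in explicitly verifying that phase boundaries are invisible to the greedy rule (the active chain and all states are preserved, and no extra cost is incurred), which is precisely what justifies the pathwise coupling.
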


\begin{proof}
Actually, Algorithm~\ref{alg:alg_analysis_findone} and Algorithm~\ref{alg:alg_true_findone} proceed in the same manner, but whilst Algorithm~\ref{alg:alg_true_findone} halts when finding a unit reward, Algorithm~\ref{alg:alg_analysis_findone} continues until the end of the phase. Algorithm~\ref{alg:alg_true_findone} only costs less.
\end{proof}
By the previous claim, it suffices to prove that
Algorithm~\ref{alg:alg_analysis_findone} achieves a constant 
approximation factor for \FindOne.

\subsection{Proof of Lemma~\ref{lm:good_proerty_large_T}}
\label{sec:prob_toss_coins}
\newtheorem*{lm:good_proerty_large_T}{Lemma~\ref{lm:good_proerty_large_T}}
\begin{lm:good_proerty_large_T}
Suppose $X_1,X_2,\cdots,X_n$ are a sequence of random variables taking values in $\{0,1\}$, and $\F_j=\sigma(X_1,\cdots,X_j)$ is the filtration defined by the sequence. Given any real number $\mu$, if $\sum_{j=1}^n\E[X_j\mid{\cal F}_{j-1}]\geq \mu$, then
\begin{align*}
    \Pr[\sum_{j=1}^nX_j\geq 1]\geq 1-e^{-3\mu/8}.
\end{align*}
\end{lm:good_proerty_large_T}
\begin{proof}
We construct a Martingale difference sequence (MDS) $Y_1,\cdots,Y_n$ to prove this lemma where $Y_j=X_j-\E[X_{j}\mid \F_{j-1}]$.
It is easy to check that $\{Y_j\}$ is a MDS:
First $\E[|Y_j|]\leq 1$.
Second, $\E[Y_j\mid \F_{j-1}]=\E[X_j\mid \F_{j-1} ]-\E[X_j\mid \F_{j-1} ]=0$.

We try to use Freedman's Inequality (Theorem~\ref{thm:FreedmanInequality}).
One has 
\begin{align*}
    W_t=\sum_{j=1}^t\E[Y_j^2\mid \F_{j-1}]\leq \sum_{j=1}^t\E[X_j^2\mid\F_{j-1}]\leq \sum_{j=1}^{t}\E[X_j\mid \F_{j-1}].
\end{align*}

Hence we have
\begin{align*}
    \Pr[ \sum_{j=1}^n X_j=0]= &\Pr\left[\sum_{j=1}^n Y_j=-\sum_{j=1}^n \E[X_{j}\mid \F_{j-1}]\right]\\
    \leq & \Pr\left[\sum_{j=1}^nY_j\leq -\mu\wedge W_n\leq \mu \right]\\
    \leq & \exp(-3\mu/8).
\end{align*}

\end{proof}

\subsection{A Counter Example}
\label{sec:counter_example}
The simple index-based strategy is a constant factor approximation in unit metric space. 
However, there is a simple counter example with non-unit switching cost in which
\METAfindone performs arbitrarily worse than the optimal strategy.

Consider the following instance: We are given a metric space $\M=(\mathbf{S}\cup \{\mathbf{R}\} ,d)$. 
There are only two kinds of \MS. The first kind of \MS $\S=<\left\{s,t,x\right\},P,C,s,t>$, where $C_{s}=0, C_{x}=+\infty$, and $P_{s,t}=\epsilon,P_{s,x}=1-\epsilon,P_{x,t}=1$. The second kind of \MS $\S'$ has the similar structure, except that $P'_{s,t}=\epsilon/2$ and $P'_{s,x}=1-\epsilon/2$. The infinite set ${\bf S}=\left\{\S_1,\S_2,\cdots\right\}\cup\{\S'_1,\S'_2,\cdots\}$. 

Let $d(\mathbf{R},\S_k)=1$ for all $k$ and $d(\mathbf{\S_k},\mathbf{\S_j})=1$ for $k\neq j$.
Let $d(\S_j',\mathbf{R})=d(\S_j',\S_k)=\sum_{i=1}^{j}2^{-i+1}$ for all $k$.
As for the distances between the second kind of \MS, for $k<j$, we set $d(\S'_k,\S'_j)=\sum_{i=k-1}^{j-1}2^{-i}$.
 
 We are at $\mathbf{R}$ in the beginning and need to pay the switching cost to advance any one of \MS. The optimal strategy is obviously pay unit switching cost and continue playing the second kind of \MS until getting unit reward, and the total cost is no more than 2. 
But the greedy algorithm~\ref{alg:alg_true_findone} always chooses one of the first kind of \MS $\S_i$ to advance, with the expected cost $1/\epsilon$, which can be large arbitrarily.

\section{Appendix for Section~\ref{sec:general}}
\subsection{Estimate Order Statistics}
\label{sec:est_order_stat}

\begin{definition}[Permanent]
The permanent of an $n\times n$ matrix $A$ is defined by
\begin{align*}
    \per(A)=\sum_{\sigma}A_{i,\sigma(i)},
\end{align*}
where the sum is over all permutations $\sigma$ of $\{1,2,\cdots,n\}$.
\end{definition}

\begin{theorem}[Bapat-Beg theorem]
Let $X_1,\cdots,X_n$ be independent real valued random variables with cumulative distribution functions respectively $F_1(x),\cdots,F_n(x)$.
Write $X_{(1)},\cdots,X_{(n)}$ for the order statistics.
Then the joint probability distribution of the $n_1,\cdots,n_k$ order statistics (with $n_1<n_2<\cdots< n_k$ and $x_1<x_2<\cdots<x_k$) is
\begin{align*}
\begin{aligned}
F_{X_{\left(n_{1}\right)}, \ldots, X_{\left(n_{k}\right)}}\left(x_{1}, \ldots, x_{k}\right) &=\operatorname{Pr}\left(X_{\left(n_{1}\right)} \leq x_{1} \wedge X_{\left(n_{2}\right)} \leq x_{2} \wedge \cdots \wedge X_{\left(n_{k}\right)} \leq x_{k}\right) \\
&=\sum_{i_{k}=n_{k}}^{n} \cdots \sum_{i_{2}=n_{2}}^{i_{3}} \sum_{i_{1}=n_{1}}^{i_{2}} \frac{P_{i_{1}, \ldots, i_{k}}\left(x_{1}, \ldots, x_{k}\right)}{i_{1} !\left(i_{2}-i_{1}\right) ! \cdots\left(n-i_{k}\right) !},
\end{aligned}
\end{align*}
where 
\begin{align*}
    \begin{array}{l}
P_{i_{1}, \ldots, i_{k}}\left(x_{1}, \ldots, x_{k}\right)=\\
\per\left[\begin{array}{cccc}
F_{1}\left(x_{1}\right) \cdots F_{1}\left(x_{1}\right) & F_{1}\left(x_{2}\right)-F_{1}\left(x_{1}\right) \cdots F_{1}\left(x_{2}\right)-F_{1}\left(x_{1}\right) & \cdots & 1-F_{1}\left(x_{k}\right) \cdots 1-F_{1}\left(x_{k}\right) \\
F_{2}\left(x_{1}\right) \cdots F_{2}\left(x_{1}\right) & F_{2}\left(x_{2}\right)-F_{2}\left(x_{1}\right) \cdots F_{2}\left(x_{2}\right)-F_{2}\left(x_{1}\right) & \cdots & 1-F_{2}\left(x_{k}\right) \cdots 1-F_{1}\left(x_{k}\right) \\
\vdots & \vdots & \vdots \\
{\underbrace{F_{n}\left(x_{1}\right) \cdots F_{n}\left(x_{1}\right)}}_{i_{1}} & \underbrace{F_{n}\left(x_{2}\right)-F_{n}\left(x_{1}\right) \cdots F_{n}\left(x_{2}\right)-F_{n}\left(x_{1}\right)}_{i_{2}-i_{1}} & \cdots & \underbrace{1-F_{n}\left(x_{k}\right) \cdots 1-F_{n}\left(x_{k}\right)}_{n-i_{k}}
\end{array}\right]
\end{array}
\end{align*}
\end{theorem}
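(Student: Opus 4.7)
The plan is to prove the Bapat--Beg identity by a direct combinatorial expansion of the joint event, followed by recognizing the resulting sum-over-partitions as a permanent with repeated columns. Since $x_1 < x_2 < \cdots < x_k$, for each $j$ let $N_j = |\{i : X_i \leq x_j\}|$. The event $X_{(n_j)} \leq x_j$ is equivalent to $N_j \geq n_j$, so the joint event is $\{N_j \geq n_j \text{ for all } j = 1, \dots, k\}$. Because the thresholds are increasing, $N_1 \leq N_2 \leq \cdots \leq N_k$ automatically, so I would decompose the probability as a sum over integer tuples $(i_1, \dots, i_k)$ satisfying $n_j \leq i_j$ and $i_1 \leq i_2 \leq \cdots \leq i_k \leq n$, corresponding to $N_j = i_j$. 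Setting $i_0 = 0$ and $i_{k+1} = n$, the event $\{(N_1,\dots,N_k) = (i_1,\dots,i_k)\}$ means exactly $i_j - i_{j-1}$ of the $X_i$'s fall into the interval $(x_{j-1}, x_j]$ for each $j = 1, \dots, k+1$, with the conventions $x_0 = -\infty$ and $x_{k+1} = +\infty$.

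The next step is to expand this latter event as a sum over ordered partitions of $\{1, \dots, n\}$ into blocks $A_1 \sqcup \cdots \sqcup A_{k+1}$ with $|A_j| = i_j - i_{j-1}$. By independence, the probability of a specific assignment is
\begin{equation*}
\prod_{j=1}^{k+1} \prod_{i \in A_j} \bigl( F_i(x_j) - F_i(x_{j-1}) \bigr),
\end{equation*}
so the probability of the event $\{(N_1, \dots, N_k) = (i_1, \dots, i_k)\}$ equals the sum of these products over all such ordered partitions. The crux is then to identify this sum with a permanent. I would construct an $n \times n$ matrix $M_{i_1, \dots, i_k}$ whose $i$-th row has the value $F_i(x_1)$ repeated in columns $1, \dots, i_1$; the value $F_i(x_2) - F_i(x_1)$ repeated in columns $i_1+1, \dots, i_2$; and so on up to $1 - F_i(x_k)$ repeated in the last $n - i_k$ columns. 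By definition $\per(M_{i_1, \dots, i_k}) = \sum_\sigma \prod_i M_{i, \sigma(i)}$ sums over bijections $\sigma : [n] \to [n]$.

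The main (and only nontrivial) observation is an overcounting lemma: because the columns within each block are identical, every ordered partition $(A_1, \dots, A_{k+1})$ with $|A_j| = i_j - i_{j-1}$ corresponds to exactly $\prod_{j=1}^{k+1} (i_j - i_{j-1})! = i_1!\,(i_2 - i_1)! \cdots (n - i_k)!$ distinct bijections $\sigma$ (one for each way of ordering the elements of each block among the columns of that block), and all such bijections contribute the same product. Dividing the permanent by this multiplicity yields
\begin{equation*}
\Pr\bigl[(N_1, \dots, N_k) = (i_1, \dots, i_k)\bigr] = \frac{P_{i_1, \dots, i_k}(x_1, \dots, x_k)}{i_1!\,(i_2 - i_1)! \cdots (n - i_k)!}.
\end{equation*}
Summing over all valid $(i_1, \dots, i_k)$ gives the stated Bapat--Beg formula. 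The only subtle point in the write-up will be verifying the overcounting factor carefully: within each block, identical columns mean any permutation of those columns among a fixed set of rows yields the same product, so the factorial multiplicity is exact and the factorization cleanly matches the denominators in the theorem statement.
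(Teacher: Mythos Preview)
The paper does not actually prove the Bapat--Beg theorem; it is quoted in the appendix as a classical result (alongside the FPRAS for the permanent) and used only as a tool for estimating the distribution of the $K$-th order statistic. So there is no ``paper's own proof'' to compare against.

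That said, your proposal is correct and is precisely the standard combinatorial derivation. The reduction $\{X_{(n_j)} \leq x_j\} = \{N_j \geq n_j\}$, the decomposition over the joint values $(N_1,\dots,N_k)=(i_1,\dots,i_k)$, and the identification of the resulting sum over ordered set partitions with a permanent divided by the block-factorial multiplicity are all accurate. The overcounting lemma is exactly right: because columns within a block are identical, the $\prod_j (i_j - i_{j-1})!$ bijections corresponding to a fixed ordered partition all contribute the same monomial, so the permanent overcounts each partition by precisely that factor. One tiny point worth making explicit in the write-up: the summation range in the theorem has $i_1 \leq i_2 \leq \cdots \leq i_k$ (not strict), and when some $i_j = i_{j-1}$ the corresponding block is empty, the factorial is $0!=1$, and the permanent is still well-defined; your argument handles this automatically.
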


\begin{theorem}
There exists a fully polynomial randomized approximation scheme for the permanent of an arbitrary $n\times n $ matrix A with nonnegative entries.
\end{theorem}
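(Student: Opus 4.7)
The plan is to establish the Jerrum--Sinclair--Vigoda result by reducing the estimation of $\per(A)$ to approximate sampling from a distribution over perfect matchings of a bipartite graph, and then designing a rapidly mixing Markov chain realizing that distribution. First I would set up the combinatorial reformulation: identify $A$ with the edge-weights of the complete bipartite graph $K_{n,n}$ (with $A_{ij}$ the weight of edge $(i,j')$), so that $\per(A)=\sum_{M\in \mathcal{M}_n} w(M)$, where $\mathcal{M}_n$ is the set of perfect matchings and $w(M)=\prod_{(i,j')\in M}A_{ij}$. The goal becomes producing, for any $\varepsilon>0$ and $\delta>0$, a random output $\widehat{Z}$ with $\Pr[(1-\varepsilon)\per(A)\leq \widehat{Z}\leq (1+\varepsilon)\per(A)]\geq 1-\delta$ in time $\poly(n,1/\varepsilon,\log(1/\delta),\log(A_{\max}/A_{\min}^+))$.

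Next I would use self-reducibility to reduce approximate counting to approximate sampling. The standard telescoping trick writes $\per(A)$ as a product of $O(n^2)$ ratios of permanents of matrices $A^{(0)},A^{(1)},\ldots$ that interpolate between a trivial matrix (whose permanent is known exactly) and $A$ itself, by gradually lowering ``large'' entries toward their target values. Each successive ratio $\per(A^{(k)})/\per(A^{(k+1)})$ lies in $[1/2,2]$ by construction, and can be estimated to within relative error $\varepsilon/n^2$ with high probability by drawing $\widetilde{O}(n^4/\varepsilon^2)$ samples from the distribution $\pi_k(M)\propto w_k(M)$ on perfect matchings weighted by $A^{(k)}$. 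Multiplying the estimates and applying a union bound gives the claimed FPRAS, provided we can sample from each $\pi_k$ to within total-variation distance $\varepsilon/\poly(n)$ in polynomial time.

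The core of the argument is therefore the design and analysis of a rapidly mixing Markov chain with stationary distribution $\pi_k$. Here I would use the Broder-style chain on the union $\Omega=\mathcal{M}_n\cup\mathcal{M}_{n-1}$ of perfect and near-perfect matchings, with transitions that delete, add, or rotate edges, and with Metropolis acceptance probabilities chosen so that the stationary weight of $M\in \mathcal{M}_n$ is $w_k(M)$ and the weight of each near-perfect matching ``hole pattern'' $(u,v')$ is an ``ideal'' value $\lambda(u,v')$ that approximately equalizes the total mass of each hole class with $\pi_k(\mathcal{M}_n)$. Since the ideal weights themselves depend on the very permanents we want to compute, I would approximate them iteratively in tandem with the annealing: at stage $k$, the weights $\lambda$ are refined using samples from stage $k-1$, so that each chain one actually runs has nearly balanced hole classes.

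The hard part will be showing that this chain mixes in time $\poly(n,\log(1/\pi_{\min}))$. I would do this by the multicommodity-flow method of Sinclair: route $\pi_k(X)\pi_k(Y)$ units of flow from each pair $(X,Y)$ along a canonical path built from the symmetric difference $X\oplus Y$, which decomposes into alternating cycles that can be ``unwound'' one at a time through near-perfect matchings. The key technical step is to bound the congestion on every edge of the chain by $\poly(n)/\pi_k$-weight of that edge, and this requires an encoding argument: given a transition $(M,M')$ and the ``encoding'' $\eta=X\oplus Y\oplus(M\cup M')$, one recovers $(X,Y)$ uniquely, and one checks that the weight ratio $w_k(X)w_k(Y)/(w_k(M)w_k(\eta))$ is bounded by a polynomial in $n$ precisely when the hole-class weights $\lambda$ are nearly ideal. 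This is where approximate ideal weights are indispensable; once congestion is polynomial, Sinclair's theorem gives mixing time $\poly(n,\log(1/\varepsilon))$, and combined with the annealing schedule and self-reducibility telescoping we obtain the FPRAS for $\per(A)$ on arbitrary nonnegative matrices.
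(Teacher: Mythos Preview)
Your proposal correctly outlines the Jerrum--Sinclair--Vigoda argument (reduction to sampling, the annealed Broder chain on perfect and near-perfect matchings with tuned hole weights, and the canonical-paths congestion bound), and nothing in the sketch is wrong. However, the paper does not prove this theorem at all: it is quoted verbatim as a known result from \cite{JSV04} and used as a black box (together with the Bapat--Beg formula) to estimate the CDF of the $K$-th order statistic. So there is no ``paper's own proof'' to compare against; you have supplied a proof where the paper simply cites one.
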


\begin{lemma}
For the set of distributions $\D^1,\D^2,\cdots,\D^n$ from which we can take i.i.d. samples, define $\D^{[K]}$ be the $K$-th order statistic.
For any real number $x\in \R$, we can estimate the value $\Pr[\D^{[k]}\leq x]$ within additive error $\epsilon$ with probability at least $1-1/\poly(n)$ in polynomial time.
\end{lemma}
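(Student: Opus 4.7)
The plan is to use a direct Monte Carlo (empirical) estimator rather than the more elaborate Bapat--Beg $+$ permanent FPRAS route, since the additive guarantee only asks for estimating a probability lying in $[0,1]$. First I would argue that for each $i$ we can efficiently sample from $\D^i$. Recall that $\D(\S_i)$ is the prevailing cost of a never-quitting player in the teasing game $\S_i^T$, which equals the maximum grade visited before hitting the target. By Lemma~\ref{lm:cal_cdf} (Appendix~\ref{appd:computation_cdf}), the CDF $F_i$ is computable in polynomial time; since $\D^i$ is supported on the at-most-$|V_i|$ distinct grade values, inverse-CDF sampling produces a sample in polynomial time.

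Next I would draw $N$ independent tuples $(X_1^{(t)},\dots,X_n^{(t)})$ with $X_i^{(t)}\sim \D^i$ independently, and define the indicator
\begin{equation*}
Y_t = \mathbf{1}\!\left[\,\big|\{\,i: X_i^{(t)}\le x\,\}\big| \ge K\,\right].
\end{equation*}
The key observation is that $\{\D^{[K]}\le x\}$ is exactly the event that at least $K$ of the $n$ independent samples fall below $x$, hence $\E[Y_t]=\Pr[\D^{[K]}\le x]$. Take $\hat p = \frac{1}{N}\sum_{t=1}^N Y_t$. Since the $Y_t$'s are i.i.d.\ Bernoulli, Theorem~\ref{thm:ChernoffBound} gives
\begin{equation*}
\Pr\!\left[\,|\hat p - \Pr[\D^{[K]}\le x]| \ge \epsilon \,\right] \le 2\exp(-N\epsilon^2/2).
\end{equation*}
Choosing $N=\Theta(\epsilon^{-2}\log n)$ with a large enough constant drives the failure probability below $1/\poly(n)$, and the total runtime is polynomial in $n$, $1/\epsilon$, and the bit complexity of the input.

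I do not anticipate a substantive obstacle: the only non-trivial ingredient is the ability to sample from each $\D^i$ in polynomial time, which is delegated to Lemma~\ref{lm:cal_cdf}. Everything else is Chernoff. As an alternative (useful if one wants a deterministic-in-structure formula), one may instead apply the Bapat--Beg theorem, which writes $\Pr[\D^{[K]}\le x]$ as a weighted sum of permanents whose entries are $F_i(x)$ and $1-F_i(x)$, and then invoke the Jerrum--Sinclair--Vigoda FPRAS for the permanent of a nonnegative matrix; the main care here is bounding the propagated error across the sum of up to $\binom{n}{K}$ permanents, which is straightforward but more tedious than the Monte Carlo route above.
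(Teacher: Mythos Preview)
Your proposal is correct and takes essentially the same approach as the paper: Monte Carlo estimation of the Bernoulli indicator $\mathbf{1}[\D^{[K]}\le x]$ combined with the Chernoff--Hoeffding bound, with the Bapat--Beg/permanent FPRAS noted as an alternative. Your treatment is simply more detailed---you justify sampling via inverse-CDF and Lemma~\ref{lm:cal_cdf} (which the lemma statement already grants by hypothesis) and write out the estimator explicitly, whereas the paper's proof is a terse two sentences.
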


\begin{proof}
$O(\log n)$ samples are enough to get an estimation of $\Pr[\D^{[k]}(\S)\leq x]$ within additive error $\epsilon=0.01$ with probability at least $1-1/\poly(n)$ by using Chernoff-Hoeffding Bound
(Theorem~\ref{thm:ChernoffBound}). 

This can also be replaced by the fully polynomial randomized approximation scheme (FPRAS) in \cite{JSV04} to estimate the permanent of $n-k$ matrices constructed according to Bapat-Beg theorem.
\end{proof}

\subsection{Computation of Index and CDF}
\label{appd:computation_cdf}

\begin{lemma}
\label{lm:cal_cdf}
We can calculate the cumulative distribution function of $\D(\S)$ efficiently.
\end{lemma}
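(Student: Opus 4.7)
The plan is to observe that $\D(\S)$ is a discrete random variable supported on the (finite) set of grades of states in $\S$, so that computing its CDF reduces to computing $\Pr[\D(\S)\le B]$ for a single threshold $B$, and that this quantity is just a hitting probability in the underlying Markov chain which can be found by solving a linear system. A preliminary step is to compute $\gamma_u$ for every state $u$ of $\S$, which can be done in polynomial time by the procedure in Section~7 of \cite{DTW03}.

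Fix a threshold $B\ge 0$ and let $V_B=\{u\in V:\gamma_u(\S)\le B\}$; since $\gamma_t=0$ we always have $t\in V_B$. Because a non-quitting player follows the Markov chain $P$ from $s$ until absorption at $t$, the trajectory $\omega$ is just a random walk from $s$ to $t$, and by the definition of prevailing cost
\[
\{\D(\S)\le B\}=\bigl\{\text{the walk reaches }t\text{ without ever visiting a state in }V\setminus V_B\bigr\}.
\]
Thus $\Pr[\D(\S)\le B]$ equals the probability of absorption at $t$ for the chain in which every state of $V\setminus V_B$ is turned into an absorbing ``bad'' state. Writing $h(u)=\Pr[\text{reach }t\text{ through }V_B\mid \text{start at }u]$ for $u\in V_B$, we have $h(t)=1$ and, for $u\in V_B\setminus\{t\}$,
\[
h(u)=P_{u,t}+\sum_{v\in V_B\setminus\{t\}}P_{u,v}\,h(v).
\]
This is a linear system $(I-P')h=b$ where $P'$ is the sub-stochastic matrix obtained by restricting $P$ to $V_B\setminus\{t\}$ and $b_u=P_{u,t}$; it has a unique solution (since $I-P'$ is invertible whenever $t$ is reachable from every state, which holds here by assumption on $\S$), and can be solved in $\mathrm{poly}(|V|)$ time by Gaussian elimination. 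The desired value is $\Pr[\D(\S)\le B]=h(s)$.

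Finally, since $\D(\S)$ only ever takes values in the finite set $\{\gamma_u:u\in V\}$, its CDF is a right-continuous step function whose jumps occur at these grades; running the linear-system computation above once for each distinct value of $\gamma_u$ yields the entire CDF in polynomial time. The only subtle point, and the place where one should be slightly careful, is handling grade ties in $V_B$ at the boundary $B=\gamma_u$ when writing down $V_B$; this is harmless as long as one consistently uses $V_B=\{u:\gamma_u\le B\}$ (rather than strict inequality), which matches the definition $\D(\S)=\max_{u\in\omega}\gamma_u$ and the convention that $\{\D(\S)\le B\}$ is a closed event.
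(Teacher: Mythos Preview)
Your proof is correct and follows essentially the same approach as the paper: both observe that $\D(\S)$ is supported on the finite set of grades, and for each threshold $B$ compute $\Pr[\D(\S)\le B]$ as the absorption probability at $t$ in the chain restricted to $\{u:\gamma_u\le B\}$, obtained by solving a linear system of size $O(|V|)$. Your write-up is in fact slightly more careful than the paper's (you explicitly note the invertibility of $I-P'$ and the handling of the starting state and ties), but the underlying argument is identical.
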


\begin{proofof}{Lemma~\ref{lm:cal_cdf}}
Consider the \MS $\langle V,P,C,s,t\rangle$.
For simplicity, we denote the Gittins index of states in $V$ as $\gamma_1< \gamma_2< \cdots< \gamma_n$ where $|V|=n$ and $\gamma_1=0$ being the Gittins Index of the target state $t$.

For any $\gamma_i\leq B<\gamma_{i+1}$, we know that
\begin{align*}
    \Pr[\CSC(\S)\leq B]=\Pr[\CSC(\S)\leq \gamma_i].
\end{align*}

Let $U$ contain those states whose Gittins Index are larger than $\gamma_i$.
For state $v$, let $x_v$ be the probability that the non-quitting player can get no more than $\gamma_i$ units of profits under the rule of $\S^T$ conditioning on he is at state $v$ currently.
Then we know that
$x_v=\sum_{u\in N(v)}P_{v,u}x_u $, where $N(v)=\{u\in V\mid\gamma_u\leq \gamma_j\}$ if $v\notin U$; and $x_v=0$ if $v\in U$.

Then it involves solving a $i\times i$ system of equations, which can be done some standard techniques like LU decomposition in $O(i^3)$ time.
\end{proofof}

\subsection{Algorithm Framework}

We present our Algorithm Framework (Algorithm~\ref{alg:alg_SfindK}) here.

\begin{algorithm2e}[H]
\caption{Algorithm  $\ALGSfindK$}
\label{alg:alg_SfindK}
{\bf  Input:} The instance $\M$, objective number of rewards $K$\\
{\bf Process:}\\
 set $\beta\in (1,2)$\;
 set $k_0\leftarrow K$\;
 set $\M_0\leftarrow \M$\;
 \For{phase~$i=1,\cdots$}
 {
 $(\M_i,k_i)\leftarrow\METASfindK(\M,K,50000\beta^i)$\;
 \If{$k_i\leq 0$}{\textbf{Break}}
 }
\end{algorithm2e}

\subsection{Omitted Proof}
\label{sec:omitted proof}
\subsubsection{Proof of Claim~\ref{clm:relation_two_games}}
\newtheorem*{clm:relation_two_games}{Claim~\ref{clm:relation_two_games}}
\begin{clm:relation_two_games}
The following inequality holds:
$
    \E[\Ftotalcost(\P)]=\E[\SC(\FOPT)+\FMC(\FOPT)]\leq \E[\TC(\OPT)].
$
\end{clm:relation_two_games}

\begin{proof}
From another perspective, the player is playing a series of the teasing game $\S^T$ with switching cost. 
In the rule of $\S^T$, the player can get some amount of profits on the target state of $\S$, while in the rule $\SFindKF$, the player should pay the same amount of fair movement cost.

By Lemma~\ref{lm:fairness}, we know that $\S^T$ is a fair game.
In fact, if we exempt the switching cost, we can treat $\FMC(\OPT)$ as the summation of profits and $\MC(\OPT)$ as the movement cost of a series of fair game $\S^T$ for the strategy \OPT.
Because of the fairness, one has $\E[\FMC(\OPT)]\leq \E[\MC(\OPT)]$ and further
\begin{align}
\label{eq:first_Fcost}
    \E[\SC(\OPT)+\FMC(\OPT)]\leq \E[\TC(\OPT)].
\end{align}
Then by the optimality of $\FOPT$, we know
\begin{align}
\label{eq:fopt<opt}
    \E[\SC(\FOPT)+\FMC(\FOPT)]\leq \E[\SC(\OPT)+\FMC(\OPT)].
\end{align}
Combining Equation~\ref{eq:first_Fcost} and Equation~\ref{eq:fopt<opt}, we
complete the proof of the claim.
\end{proof}

\subsubsection{Proof of Claim~\ref{clm:games_reduction}}
\newtheorem*{clm:games_reduction}{Claim~\ref{clm:games_reduction}}
\begin{clm:games_reduction}
With the notations defined above, it holds that
\begin{align*}
    \E[\TSPCOST(\SCOPT)]=\E[\SC(\SCOPT)+\CSC(\SCOPT)]\leq \E[\TC(\OPT)].
\end{align*}
\end{clm:games_reduction}
Recall that we have defined the game \SFindKF and $\E[\FMC(\P)+\SC(\P)]$ is the expected fair movement cost plus the expected switching cost of any strategy $\P$ under the rule of \SFindKF. 
Now we let $\E[\CSC(\P)]$ and $\E[\SC(\P)]$ be the expected selected cost and expected switching cost of any strategy $\P$ under the rule of \StochKCost respectively. 
Let $\SCOPT$ be the optimal strategy to \StochKCost. 
Note that we use the same notation $\SC(\cdot)$ to represent the expected switching cost as these three games have the same rules about switching costs. 

Given any instance $\M$, let $\OPT$ and $\FOPT$ be the optimal strategies for this instance $\M$ under the rule of \SFindK and \SFindKF respectively, and let $\M_{SC}$ be the \StochKCost instance transferred from $\M$ by the above argument.
Now we show how to prove claim~\ref{clm:games_reduction}.

\begin{proof}
In fact, with the notation defined above, one has 
\begin{align*}
    \E[\SC(\SCOPT)+\CSC(\SCOPT)]\leq \E[\SC(\FOPT)+\FMC(\FOPT)]\leq \E[\TC(\OPT)].
\end{align*}

We only need to prove the first inequality as the second equality has been proved in Claim~\ref{clm:relation_two_games}. 


To see the first inequality, we can look at \StochKCost from another perspective.
Suppose we know the decision tree of $\FOPT$, and construct another strategy $\ALG^F$ based on the decision tree under a special rule equivalent to \StochKCost: whenever $\FOPT$ plays some system $\S_i$ for the first time, $\ALG^F$ can observe the whole sample path on $\S_i$ and know the largest Index of states on the sample path, and it can take one unit reward from $\S_i$ later by paying the corresponding largest Index. 
Then we know that $\E[\SC(\ALG^F)+\CSC(\ALG^F)]= \E[\SC(\ALG^F)+\FMC(\ALG^F)]$,
as the distribution on fair movement cost of $\ALG^F$ is exactly the distribution on selected cost if we play \StochKCost by $\ALG^F$. 

Besides, we know $\E[\SC(\SCOPT)+\CSC(\SCOPT)]\leq\E[\SC(\ALG^F)+\CSC(\ALG^F)]$ by the optimality, which completes the proof of this claim.
\end{proof}

\section{Non-optimality of Indexing Strategies}
\label{sec:no_opt_index}
In this section, we extend the result in \cite{BS94} to show that even when the switching cost is a constant, there is no optimal indexing strategy for our problem in general metric.
More specifically, we define what is an index as follows:

\begin{definition}[\cite{BS94}]
\label{def:index}
An index in the presence of switching cost is any function $\gamma$ which specifies a value $\gamma(s_i,\S_i,I_i)$ for any \MS $\S_i$ where $s_i$ is the current state of \MS $\S_i$ and $I_i$ is the indicator that $\S_i$ is currently active.
\end{definition}

Involving the indicator is to capture the intuition that for two identical systems, the active one should be more attractive than the inactive one.
See more discussion in \cite{BS94}.

We say $\gamma$ is an optimal index in the presence of switching cost if it is optimal by always playing the system with the smallest index until any one reaches its target.

Now we define two kinds of \MS with specified notations.
First let $[x\delta_1+(1-x)\delta_0]=\{V,P,C,s,t\}$ where $V=\{s,t,v_0,v_1\}$, $P_{s,v_1}=x$, $P_{s,v_0}=1-x$, and $P_{x_i,t}=1$ for $i\in\{0,1\}$.
As for the movement cost, we let $C_s=c, C_{v_0}=0$ and $C_{v_1}=1$.
Second, let $[\delta_a]=\{\{s,t\},P,C,s,t\}$ where $P_{s,t}=1$ and $C_{s,t}=a$.
In the following, we assume the switching cost is $c$, same as $C_s$.

In this notation, we use $\gamma([x\delta_1+(1-x)\delta_{0}];I)$ and $\gamma(\delta_a;I)$ to denote the value of the optimal index on the \MS $[x\delta_1+(1-x)\delta_0]$ and $[\delta_a]$ respectively.
Recall that $I=1$ denotes that the system is currently active.

\begin{claim}[Similar to Claim 1 in \cite{BS94}]
Any index $\gamma$ that is optimal in the presence of switching cost $c$ must be a strict monotone transformation of an index $\hat{\gamma}$ which satisfies
\begin{gather}
    \hat{\gamma}([\delta_a];1)=a, \label{eq:active_delta_a}\\
    \hat{\gamma}([\delta_a];0)=a+c.\label{eq:inactive_delta_a}
\end{gather}
for any values of $a$ and $c$.
\end{claim}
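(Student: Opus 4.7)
The plan is to pit the systems $[\delta_a]$ and $[\delta_b]$ against each other in two-system instances, read off the ordering that any optimal index $\gamma$ must induce in the reachable decision states, and then observe that the candidate $\hat{\gamma}$ defined by (\ref{eq:active_delta_a})--(\ref{eq:inactive_delta_a}) realises exactly this ordering. Once the orderings agree, picking the monotone bijection from $\hat{\gamma}$-values to $\gamma$-values gives the required strict monotone transformation.

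First I would look at the instance with two initially inactive systems $[\delta_a]$ and $[\delta_b]$ started from the root. Since playing a system of the form $[\delta_\cdot]$ once transitions to the target and ends the game, every non-redundant strategy consists of a single switch followed by a single move, costing $c+a$ or $c+b$; any strategy that switches more than once is strictly worse. The optimal choice is therefore to switch to the system with the smaller parameter, so any optimal index strategy must satisfy $\gamma([\delta_a];0)<\gamma([\delta_b];0)\iff a<b$. Second, I would examine the decision state reached immediately after the player switches to $[\delta_a]$: now $[\delta_a]$ is active at $s$ and $[\delta_b]$ is inactive at $s$. Playing $[\delta_a]$ costs $a$ and ends the game, while switching to $[\delta_b]$ and playing it costs $c+b$. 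Optimality forces $\gamma([\delta_a];1)<\gamma([\delta_b];0)\iff a<b+c$. (An analogous argument with the roles of $a$ and $b$ swapped gives the symmetric condition.) These two ordering conditions, together with the obvious monotonicity $\gamma([\delta_a];1)<\gamma([\delta_b];1)\iff a<b$ obtained by the same switch-and-play analysis in the opposite orientation, completely determine the relative order of all values $\{\gamma([\delta_a];I):a\geq 0,\,I\in\{0,1\}\}$.

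Third, I would verify that $\hat{\gamma}$ as given in (\ref{eq:active_delta_a})--(\ref{eq:inactive_delta_a}) satisfies all of these conditions: $\hat{\gamma}([\delta_a];0)<\hat{\gamma}([\delta_b];0)\iff a+c<b+c\iff a<b$, and $\hat{\gamma}([\delta_a];1)<\hat{\gamma}([\delta_b];0)\iff a<b+c$, and similarly $\hat{\gamma}([\delta_a];1)<\hat{\gamma}([\delta_b];1)\iff a<b$. Hence $\gamma$ and $\hat{\gamma}$ induce the same total pre-order on this family of (state, system, activity)-triples. Taking any strict monotone function $\phi$ that maps each $\hat{\gamma}$-value to the corresponding $\gamma$-value and extending it strictly monotonically to all of $\mathbb{R}$ yields $\gamma=\phi\circ\hat{\gamma}$, which is the stated conclusion. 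The main obstacle is conceptual rather than technical: one has to justify that the configuration ``$[\delta_a]$ active at $s$'' is genuinely a reachable decision state of some instance, since if one interprets ``active'' as ``last advanced'' then such a configuration cannot arise (advancing $[\delta_a]$ from $s$ moves it to the target and ends the game). Following Banks--Sundaram, one reads ``active'' as ``switched to and about to be played without paying a further switching cost'', under which the configuration is reached by a single switch from the root, and the remaining comparisons are straightforward arithmetic.
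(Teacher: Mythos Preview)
The paper does not give its own proof of this claim; it simply marks it ``Similar to Claim 1 in \cite{BS94}'' and moves on to use the normalization. Your approach---pitting $[\delta_a]$ against $[\delta_b]$, reading off the forced ordering, and then matching it to $\hat{\gamma}$---is exactly the Banks--Sundaram argument, and the first two ordering conditions you derive are correct.

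There is one small gap. Your condition (3), $\gamma([\delta_a];1)<\gamma([\delta_b];1)\iff a<b$, is not ``obtained by the same switch-and-play analysis in the opposite orientation'': at most one system is active at a time, so no instance ever forces a direct comparison between two $(\cdot;1)$ values. You can recover most of condition (3) indirectly via the indifference case of condition (2): when $a=b+c$ the player is indifferent, so $\gamma([\delta_a];1)=\gamma([\delta_{a-c}];0)$, and then condition (1) gives the ordering for $a,b\geq c$. For $a<c$ there is no such anchor, and strictly speaking the $(\cdot;1)$ values below $\gamma([\delta_0];0)$ are unconstrained relative to one another by optimality. The clean fix is to observe that, since these comparisons never influence the induced strategy, one may without loss of generality reorder them to be consistent with $\hat{\gamma}$; this is exactly the sense in which the paper immediately writes ``Without loss of generality, we assume the optimal index $\gamma$ satisfies (\ref{eq:active_delta_a}) and (\ref{eq:inactive_delta_a}).''
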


Without loss generality, we assume the optimal index $\gamma$ satisfies equation~\ref{eq:active_delta_a} and equation~\ref{eq:inactive_delta_a}. Now we use the following claim to establish the impossibility of an optimal index:

\begin{claim}
There is no consistent way to define an index $\gamma$ on \MS of the form $[x\delta_1+(1-x)\delta_{0}]$ if the resulting strategy is to be invariably optimal. Consequently, an optimal index cannot exist.
\end{claim}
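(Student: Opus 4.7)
The plan is a reductio in the style of Banks and Sundaram~\cite{BS94}. Suppose some index $\hat{\gamma}$ produces an invariably optimal strategy, already normalized by Equations~\eqref{eq:active_delta_a} and \eqref{eq:inactive_delta_a}; write $\mathcal{S}_x := [x\delta_1+(1-x)\delta_0]$ and $g(x) := \hat{\gamma}(s,\mathcal{S}_x;0)$. A preliminary observation will be that the sub-chain reached from $v_1$ inside $\mathcal{S}_x$ is deterministically the one-step chain $[\delta_1]$, so the very argument used for Equations~\eqref{eq:active_delta_a} and \eqref{eq:inactive_delta_a} also forces $\hat{\gamma}(v_1,\mathcal{S}_x;1)=1$ and $\hat{\gamma}(v_1,\mathcal{S}_x;0)=1+c$. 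Thus $g(x)$ is the only remaining unknown, and the contradiction will consist in deriving two incompatible formulas for it.

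First, I would pin down $g(x)$ by analysing the two-arm game $\{\mathcal{S}_x\text{ inactive},\,[\delta_a]\text{ inactive}\}$. Because the $v_1$-continuation is already fixed, the two candidate first moves have expected costs $c+a$ (play $[\delta_a]$) and $2c + x\min(1,c+a)$ (play $\mathcal{S}_x$ and then, at $v_1$, switch to $[\delta_a]$ iff $c+a<1$). Requiring the indexing rule ``play $\mathcal{S}_x$ first iff $g(x) < a+c$'' to coincide with this optimum for every $a\ge 0$ uniquely determines $g(x)$ as a piecewise function of $x$ and $c$; a routine case split gives $g(x) = 2c/(1-x)$ on $x \le 1-2c$ and $g(x) = 2c+x$ on $x \ge 1-2c$. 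Second, I would redo the derivation in an auxiliary game whose optimum puts a genuinely different constraint on $g(x)$: a natural candidate is a game in which the alternative has its own meaningful intermediate abandonment state (a two-stage arm $\mathcal{A}$ with an intermediate state $w$), or a three-arm instance $\{\mathcal{S}_x,\mathcal{S}_y,[\delta_a]\}$ in which two different backups jointly influence the value of staying with $\mathcal{S}_x$ at $v_1$. Matching the indexing rule to the optimum in such a game forces a second, distinct formula for $g(x)$, and exhibiting explicit parameters at which the two formulas clash delivers the desired contradiction.

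The hard part will be the second derivation. The easiest auxiliary games---two copies $\mathcal{S}_y$ of the same family, or $\mathcal{S}_x$ against a single-state terminator $[\delta_b]$ with varying $b$---all turn out to be consistent with the step-one formula, so the inconsistency has to come from a genuinely richer configuration: either an alternative arm whose own optimal continuation depends nontrivially on its current state, or a multi-arm game in which the joint ``option value'' of the alternatives cannot be absorbed into any single per-arm index. Extracting a clean parameter window on which the two derivations of $g(x)$ disagree requires a careful case analysis of the optimal continuation at the $v_1$-decision of $\mathcal{S}_x$, and that is where I expect the technical weight of the proof to concentrate.
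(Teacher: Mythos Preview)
Your plan has a real gap: step~2 is never carried out, and the assertion that the ``easiest auxiliary games \ldots\ all turn out to be consistent with the step-one formula'' is where it goes wrong. In fact the very same pair $\{\mathcal{S}_x,[\delta_a]\}$ already yields a second, incompatible formula once you toggle the \emph{activity flag} of the terminator. With $[\delta_a]$ active and $\mathcal{S}_x$ inactive, playing the terminator costs $a$ while switching to $\mathcal{S}_x$ still costs $2c+x\min(1,c+a)$; indifference in the regime $c+a\le 1$ now occurs at $a=(2+x)c/(1-x)$, and since the active terminator has index $\hat{\gamma}([\delta_a];1)=a$, this forces $g(x)=(2+x)c/(1-x)$. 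For every $x>0$ with $3c\le 1-x$ this contradicts your $g(x)=2c/(1-x)$. The free parameter you overlooked is not the value of the terminator but the indicator $I$ of which arm is currently active---and that is exactly the parameter an index in the sense of Definition~\ref{def:index} is obliged to handle.

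This is essentially how the paper proceeds. It computes $\hat{\gamma}(\mathcal{S}_x;0)=(2+x)c/(1-x)$ from the active-terminator game and $\hat{\gamma}(\mathcal{S}_x;1)=c/(1-x)$ from the inactive-terminator game, and then exhibits a concrete two-copy instance $\{\mathcal{S}_x\text{ active},\,\mathcal{S}_y\text{ inactive}\}$ (with $x=4/5$, $y=2/5$, $c=1/100$) whose optimal first move is to stay on $\mathcal{S}_x$, forcing $\hat{\gamma}(\mathcal{S}_x;1)\le\hat{\gamma}(\mathcal{S}_y;0)$, i.e.\ $c/(1-x)\le(2+y)c/(1-y)$, which fails numerically. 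So your dismissal of two-copy games was also premature. Had you simply combined your own both-inactive calibration with the active-terminator calibration, you would in fact obtain a \emph{shorter} argument than the paper's (two games instead of three), since both determine the same quantity $\hat{\gamma}(\mathcal{S}_x;0)$ and disagree. No ``richer configuration'' is needed.
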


\begin{proof}
Consider such a game where there are two systems, $[\delta_a]$ and $[x\delta_1+(1-x)\delta_0]$ and suppose the optimal decision-player is on the first system $[\delta_a]$ and is indifferent from playing either one of the first step: he can either play $[\delta_a]$ or switch to $[x\delta_1+(1-x)\delta_0]$ and switches back if only if the state of $[x\delta_1+(1-x)\delta_0]$ moves to $v_1$.

In particular, we have $a=2c+x(a+c)$ and
elementary calculation shows that the value of $a$ should be $\frac{2+x}{1-x}c$ under the condition that $a+c\leq 1$ in this scenario.
Let $\mu(x,c):=\frac{2+x}{1-x}c$ and if $\mu(x,c)+c\leq 1$, we have 
\begin{align}
    \gamma\left(\left[x\delta_{1}+(1-x)\delta_{0} \right];0\right)=\gamma([\delta_{\mu(x,c)} ];1)=\mu(x,c)=\frac{2+x}{1-x}c.
\end{align}

Now consider a new game where suppose the optimal decision-player is on the second system and can either: switches to the first system directly; or plays $[x\delta_1+(1-x)\delta_0]$ for one step and switches to $[\delta_a]$ if the state moves to $v_1$.

Similarly, by simple calculation, let $v(x,c):=\frac{xc}{1-x}$ and under the condition that $v(x,c)+c\leq 1$ we have 
\begin{align}
    \gamma\left(\left[x\delta_1+(1-x)\delta_{0} \right];1\right)=\gamma([\delta_{v(x,c)} ];0)=v(x,c)+c=c/(1-x).
\end{align}

Hence we show that $\gamma\left(\left[x\delta_1+(1-x)\delta_{0} \right];0\right)=\frac{2+x}{1-x}c$ if $3c\leq 1-x$,
and $\gamma\left(\left[x\delta_1+(1-x)\delta_{0} \right];1\right)=c/(1-x)$ if $c\leq 1-x$ by the above two games.

Now consider the last game to show contradiction.
Suppose there are two systems $[x\delta_1+(1-x)\delta_0]$ and $[y\delta_1+(1-y)\delta_0]$ where $x\geq y$ but the player is current at system $[x\delta_1+(1-x)\delta_0]$. Suppose $3c\leq 1-x\leq 1-y$.
If we have
\begin{align*}
    \min\{c+x,c+x(2c+y)\}\leq \min\{2c+y,2c+y(2c+x)\},
\end{align*}
then the optimal strategy can play the first system for the first step and hence the constraint is $2x\leq 1+2y$ and we have
\begin{align*}
    \gamma([x\delta_1+(1-x)\delta_0];1)\leq \gamma([y\delta_1+(1-y)\delta_0];0).
\end{align*}
To conclude, the constraints are $y\leq x,3c+x\leq 1$ and $2x\leq 1+2y$. 

By setting $x=4/5$, $y=2/5$ and $c=1/100$, all the constraints are satisfied and one has
$\gamma([x\delta_1+(1-x)\delta_0];1)=\frac{1}{1-x}c=1/20>\gamma([y\delta_1+(1-y)\delta_0];0)=\frac{2+y}{1-y}c=1/25$, which is a contradiction completing the proof.
\end{proof}

\newpage
\addcontentsline{toc}{section}{References}
\bibliographystyle{alpha}
\bibliography{bib}

\end{document}